\numberwithin{equation}{section}
\def\hh{\mathsf{h}}
\def\tr{{\rm tr}}
\def\TT{\mathcal{T}}
\def\MR{\mathcal{M}_r(\mathcal{T})}
\newtheorem{theorem}{Theorem}[section]
\newtheorem{corollary}[theorem]{Corollary}
\newtheorem{proposition}[theorem]{Proposition}
\newtheorem{lemma}[theorem]{Lemma}
\theoremstyle{definition}
\newtheorem{definition}[theorem]{Definition}
\newtheorem{example}[theorem]{Example}
\newcommand\unit{\hbox{\rm 1\kern-2.8truept l}}
\begin{document}

\title{Irreducibility of Quantum Markov Semigroups, uniqueness of invariant states and related properties}
\author{Franco Fagnola, Federico Girotti
\footnote{Dipartimento di Matematica, Piazza Leonardo da Vinci 32, I-20133 Milano, Italy}}
%
%

\maketitle

\abstract{We present different characterizations of the notion of irreducibility for Quantum Markov Semigroups (QMSs) and investigate its relationship with other relevant features of the dynamics, such as primitivity, positivity improvement and relaxation; in particular, we show that irreducibility, primitivity and relaxation towards a faithful invariant density are equivalent when the semigroup admits an invariant density. Moreover, in the case of uniformly continuous QMSs, we present several useful ways of checking irreducibility in terms of the operators appearing in the generator in GKLS form. Our exposition is as much self-contained as possible, we present some well known results with elementary proofs (collecting all the relevant literature) and we derive new ones. We study both finite and infinite dimensional evolutions and we remark that many results only require the QMS to be made of Schwarz maps.}

\section{Introduction}
\label{sec:intro}

Quantum Markov semigroups (QMSs) on $\mathcal{B}(\mathsf{h})$, or dynamical semigroups in the Physics terminology,
are a key structure for describing the dynamics of open quantum systems subject to noise because of the interaction
with the surrounding environment.

The structure and their properties are well known in the case where the Hilbert space $\mathsf{h}$ is finite dimensional
(see e.g. \cite{AlLe,AmFaKo,AmFaKoII, FaRe-Notes,Wolf-QC&O} and references therein).
However, in the case where $\hh$ is infinite dimensional, even if the generator is a bounded operator, certain fundamental
properties are not entirely clear or their proof not infrequently depends on deep or complex results of functional analysis.
This is the case of the relationships between irreducibility, uniqueness and faithfulness of invariant densities (primitivity)
and convergence to invariant densities.

In this paper we present elementary proofs (i.e., requiring as a prerequisite only a basic course in functional analysis)
of some results on irreducibility, its relationship with other properties and their consequences for the dynamics.

As in the classical theory of Markov processes, irreducibility implies uniqueness and faithfulness of invariant densities (if they exist),
and, practically, it is the most direct and effective method to show this. In most cases it is the only one. Moreover, for QMSs
with an invariant density, irreducibility is indeed \emph{equivalent} to its uniqueness and faithfulness (Theorem \ref{th:equiv-irred}).

A consequence of L{\'e}vy's result (see Proposition 1.3 in \cite{An91}) concerning the support of transition probabilities in classical Markov chains is that irreducibility of the semigroup implies that the probability of finding the process in any state at any positive time is strictly positive; the counterpart for QMSs of such a result would be that irreducibility implies that, no matter what is the initial density, the density at any positive time is faithful. This property of the semigroup is often called positivity improvement. The notions of irreducibility and positivity improvement are known to be equivalent for finite dimensional systems (Theorem \ref{th:irr-iff-prim-conv-inv-st}), while the latter is strictly stronger than irreducibility in infinite dimensions, as shown for instance by Example \ref{ex:irred-no-pos-impr}. The converse implication can be proved under some extra assumptions on the semigroup (see Theorem \ref{th:Lell-sa}).

Another feature of quantum dynamics which is relevant for applications is to determine whether the system relaxes towards an invariant density. In finite dimensional systems irreducibility is equivalent to relaxation towards the same faithful invariant density, regardless of the initial density (Theorem \ref{th:irr-iff-prim-conv-inv-st}) and this keeps being true in infinite dimensional systems as well, once one requires the existence of at least an invariant density (otherwise it does not make sense to talk about relaxation in first place), as shown by Theorem \ref{thm:relaxing}; to the best of our knowledge, this result in such generality is new in the literature. In figures 1, 2 and 3 below we make use of diagrams in order to represent implications between the notions of irreducibility, primitivity, positivity improvement (pos. impr.) and relaxation towards a faithful density (relax.+f.d.) under different hypotheses.

\begin{figure}[h]
\tikzset{column sep=small, ampersand replacement=\&}	
\begin{floatrow}
	\centering
\ffigbox[.4\textwidth]{\begin{tikzcd}
\text{Irreducibility} \arrow[r,shift left=0.9ex, Rightarrow]  \arrow[d,shift left=0.9ex,Rightarrow]
\& \text{Primitivity} \arrow[l, shift left=0.9ex,Rightarrow] \arrow[d,shift left=0.9ex,Rightarrow] \\
\text{Pos. impr.}\arrow[r,shift left=0.9ex,Rightarrow]\arrow[u,shift left=0.9ex,Rightarrow]
\&  \text{Relax. + f.d.} \arrow[l,shift left=0.9ex,Rightarrow]\arrow[u,shift left=0.9ex,Rightarrow]
\end{tikzcd}}{\caption{Diagram of known implications when ${\rm dim}(\hh)<+\infty$.}}
\ffigbox[.4\textwidth]{\begin{tikzcd}
\text{Irreducibility} \arrow[r,shift left=0.9ex, Rightarrow]
\& \text{Primitivity} \arrow[l, shift left=0.9ex,Rightarrow] \arrow[d,shift left=0.9ex,Rightarrow] \\
\text{Pos. impr.}\arrow[r,Rightarrow]\arrow[u,Rightarrow]
\&  \text{Relax. + f.d.} \arrow[u,shift left=0.9ex,Rightarrow]
\end{tikzcd}}{\caption{Diagram of known implications when there exists an invariant density. Missing arrows are not known to hold, in general.}}
\end{floatrow}
\end{figure}

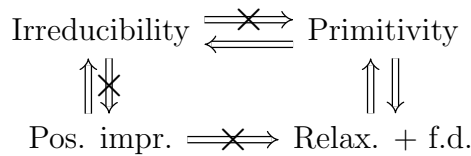
\begin{figure}[h]
\tikzset{column sep=small, ampersand replacement=\&}
\begin{floatrow}
	\centering
\ffigbox[0.8\textwidth]{\begin{tikzcd}
\text{Irreducibility} \arrow[r,shift left=0.9ex, Rightarrow,"\times"{anchor=center,sloped,scale=2}]  \arrow[d,shift left=0.6ex,Rightarrow,"\times"{anchor=center,sloped,scale=2}]
\& \text{Primitivity} \arrow[l, shift left=0.9ex,Rightarrow] \arrow[d,shift left=0.9ex,Rightarrow] \\
\text{Pos. impr.}\arrow[r,Rightarrow,"\times"{anchor=center,sloped,scale=2}]\arrow[u,shift left=0.9ex,Rightarrow]
\&  \text{Relax. + f.d.} \arrow[u,shift left=0.9ex,Rightarrow]
\end{tikzcd}}{
\caption{Diagram of known implications without any assumptions. Crossed out arrows do not hold.}}
\end{floatrow}
\end{figure}

For reasons of space, this work does not treat several other features of QMSs which are related to irreducibility, for instance transience and recurrence (see \cite{FFRR03,CG21,VU}).

Regarding the relevance of our investigation, we remark, for instance, that irreducibility is usually assumed in the study of functional inequalities (see \cite{Ro19} and references therein); moreover, it plays a fundamental role in dissipative state preparation and in the study of the stochastic processes describing the observed measurement outcome and the corresponding conditional state in indirectly continuously monitored quantum systems (\cite{BFPP21,CaYa15,GGG23,MaKu03}). Finally, let us recall that the study of irreducible QMSs helps in understanding non-irreducible ones as well, since they can often be decomposed into irreducible sub-semigroups (see \cite{BaNa,CG21,CaYa,DFSU,Gi22,MA25,VU}).

The structure of the paper is the following one: section \ref{sec:spi} is devoted to introduce the notation and recall the notion of subharmonic projection and irreducibility; a characterization of irreducibility is given in terms of the operators appearing in the infinitesimal generator for uniformly continuous QMSs. In section \ref{sec:iid} we provide several characterizations of irreducibility when the semigroup admits an invariant density (which is a non-trivial assumption only in infinite dimensional systems). Section \ref{sec:findim} establishes the equivalence between irreducibility, positivity improvement and relaxation towards a faithful invariant density for finite dimensional systems. In section \ref{sec:infdim} we dig into the relationship between these three concepts in infinite dimensions.

\section{Subharmonic projections and irreducibility} \label{sec:spi}

In this section we recall the basic definitions and notation.

Let $\hh$ be a complex separable Hilbert space and let $\mathcal{B}(\hh)$ denote the von Neumann algebra
of all bounded linear operators on $\hh$. It is well known that $\mathcal{B}(\hh)$ is isomorphic to the dual of the Banach space
$\mathfrak{I}_1(\hh)$ of trace class operators on $\hh$. We denote by $\Vert\cdot\Vert_1$ the trace norm
on $\mathfrak{I}_1(\hh)$ and by $\Vert\cdot\Vert_\infty$ the operator norm on $\mathcal{B}(\hh)$. Moreover,
we denote by $\operatorname{tr}(\cdot)$ the trace on $\mathfrak{I}_1(\hh)$.

A sequence $(\eta_n)_{n\geq 0}$ converges weakly to $\eta$ in $\mathfrak{I}_1(\hh)$ if the sequence of complex numbers
$(\tr(\eta_n\, x))_{n\geq 0}$ converges to $\tr(\eta\, x)$ for all
$x\in \mathcal{B}(\hh)$. A sequence $(x_n)_{n\geq 0}$ converges weakly$^*$ to $x$ in $\mathcal{B}(\hh)$ if the sequence of complex numbers
$(\tr(\eta\, x_n))_{n\geq 0}$ converges to $\tr(\eta\, x)$ for all
$\eta\in \mathfrak{I}_1(\hh)$.

An element of $\mathfrak{I}_1(\hh)$ is called a \emph{density} if it is positive and
has unit trace. By abuse of terminology, a density is often called state and identified with the
normal state $x\mapsto\tr(\rho x)$ on $\mathcal{B}(\hh)$. The \emph{support}
of a density $\rho$ is the closure of the range of $\rho$. The \emph{support projection} is the orthogonal projection
in $\hh$ onto the range of $\rho$. A density is called \emph{faithful} if its support is the whole
space $\hh$, i.e. its support projection is equal to the identity operator $\unit$.

A \emph{quantum Markov semigroup} (QMS) $\TT=(\TT_t)_{t\geq 0}$ on $\mathcal{B}(\hh)$ is
a weakly$^*$-continuous semigroup of completely positive, identity preserving
normal (i.e. continuous with respect to the weak$^*$ topology on $\mathcal{B}(\hh)$) maps on $\mathcal{B}(\hh)$.
A QMS is the adjoint semigroup (\cite{EnNa} 5.14 p. 63) of a $c_0$ semigroup
$\TT_*=(\TT_{*t})_{t\geq 0}$ of completely positive, trace preserving maps on $\mathfrak{I}_1(\hh)$.
Let us recall that the $c_0$ semigroup property means that for all trace class operator $\eta$, the map
$t\mapsto \TT_{*t}(\eta)$ is continuous for the trace norm topology,
in an equivalent way for the weak topology on $\mathfrak{I}_1(\hh)$ (Theorem 5.8, p. 40 \cite{EnNa}.

We recall that complete positivity of the maps $(\TT_t)_{t \geq 0}$ implies that they satisfy the Schwarz property (see Corollary 2.8 in \cite{Ch74}), that is
\begin{equation} \label{eq:schwarz}
\TT_t(x^*)\TT_t(x) \leq \TT_t(x^*x), \quad x \in {\cal B}(\hh), \, t \geq 0.\end{equation}

If we consider the class of QMSs which are uniformly continuous, i.e. $t \mapsto \TT_t$ is continuous in operator norm, then the celebrated Gorini-Kossakowski-Surdashan-Lindblad (GKSL) theorem (see \cite{GoKoSu, Li76} or Theorem 30.16 in \cite{Pa}) provides a detailed description of its infinitesimal generator.

\begin{theorem}
If $\TT$ is uniformly continuous, then its infinitesimal generator is of the form
\begin{equation} \label{eq:lindblad}
\mathcal{L}(x)=G^*x+xG+\sum_{\ell \in I}L_\ell^*xL_\ell, \quad x \in {\cal B}(\hh)
\end{equation}
for some bounded linear operators $G,L_\ell \in {\cal B}(\hh)$, $\ell\in I$ with $I$ finite or countable
such that $\sum_{\ell \in I}L_\ell^*L_\ell$ converges in the strong operator topology and
\begin{equation} \label{eq:linop}
G^*+G+\sum_{\ell \in I}L_\ell^*L_\ell=0.
\end{equation}
\end{theorem}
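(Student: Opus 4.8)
The plan is to reduce the statement to a purely structural property of the (necessarily bounded) generator and then to run the classical ``second‑order'' argument for completely dissipative maps.

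\emph{Step 1: properties of the generator.} Uniform continuity of $t\mapsto\TT_t$ is equivalent to the existence, in operator norm on $\mathcal{B}(\mathcal{B}(\hh))$, of the bounded generator $\mathcal{L}=\lim_{t\downarrow 0}t^{-1}(\TT_t-\mathrm{id})$, so that $\TT_t=e^{t\mathcal{L}}$ (\cite{EnNa}). Being a norm limit of normal, $*$-preserving maps, $\mathcal{L}$ is itself normal and $*$-preserving, and $\mathcal{L}(\unit)=0$ because every $\TT_t$ preserves $\unit$. Fix finite families $x_1,\dots,x_n\in\mathcal{B}(\hh)$ and $\xi_1,\dots,\xi_n\in\hh$ with $\sum_j x_j\xi_j=0$; then $t\mapsto f(t):=\sum_{i,j}\langle\xi_i,\TT_t(x_i^*x_j)\xi_j\rangle$ is nonnegative (complete positivity of $\TT_t$), differentiable, and satisfies $f(0)=\bigl\|\sum_j x_j\xi_j\bigr\|^2=0$, so its derivative at $0$ is $\ge 0$; that is, $\mathcal{L}$ is \emph{conditionally completely positive}:
\[
\sum_{i,j}\langle\xi_i,\mathcal{L}(x_i^*x_j)\xi_j\rangle\ge 0 \quad\text{whenever}\quad \sum_{j}x_j\xi_j=0 .
\]
Applying this to the enlarged family obtained by appending $x_0:=\unit$ with vector $\xi_0:=-\sum_{j\ge1}x_j\xi_j$ and expanding, the cross terms regroup to show that the sesquilinear map $B(x,y):=\mathcal{L}(x^*y)-\mathcal{L}(x^*)y-x^*\mathcal{L}(y)$ is a \emph{completely positive kernel}: $\sum_{i,j}\langle\xi_i,B(x_i,x_j)\xi_j\rangle\ge 0$ for all finite families. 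Note that $B(x,\unit)=B(\unit,y)=0$ and $B(y,x)=B(x,y)^*$.

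\emph{Step 2: Stinespring‑type reconstruction (the technical heart).} From the completely positive kernel $B$ one builds, by a Kolmogorov/GNS construction, a Hilbert space $\mathcal{H}=\overline{\operatorname{span}}\{\Lambda(x\otimes\xi):x\in\mathcal{B}(\hh),\ \xi\in\hh\}$ together with a linear map $\Lambda$ on the algebraic tensor product satisfying $\langle\Lambda(x\otimes\xi),\Lambda(y\otimes\eta)\rangle=\langle\xi,B(x,y)\eta\rangle$. Using the identities satisfied by $B$ (ultimately reflecting that $\mathcal{L}$ is a generator), one checks that $\rho(a)\Lambda(x\otimes\xi):=\Lambda(ax\otimes\xi)-\Lambda(a\otimes x\xi)$ extends to a well‑defined bounded operator with $\|\rho(a)\|\le\|a\|$, and that $a\mapsto\rho(a)$ is a normal, unital $*$‑representation of $\mathcal{B}(\hh)$ on $\mathcal{H}$; since every normal nondegenerate representation of $\mathcal{B}(\hh)$ is unitarily equivalent to an amplification, we may take $\mathcal{H}=\hh\otimes\mathfrak{k}$ and $\rho(a)=a\otimes\unit_{\mathfrak{k}}$. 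Then $a\mapsto T_a$, where $T_a\xi:=\Lambda(a\otimes\xi)$, is a bounded derivation of $\mathcal{B}(\hh)$ into the normal dual bimodule $\mathcal{B}(\hh,\mathcal{H})$ (with actions $b_1\cdot X\cdot b_2=\rho(b_1)Xb_2$), hence inner by vanishing of the first bounded Hochschild cohomology of $\mathcal{B}(\hh)$: $T_a=\rho(a)W-Wa$ for some bounded $W\colon\hh\to\hh\otimes\mathfrak{k}$. Setting $\Psi(x):=W^*(x\otimes\unit_{\mathfrak{k}})W$, a direct computation shows that $\Psi$ is a normal completely positive map with $\Psi(x^*y)-\Psi(x^*)y-x^*\Psi(y)+x^*\Psi(\unit)y=B(x,y)$; fixing an orthonormal basis $(f_\ell)_{\ell\in I}$ of $\mathfrak{k}$ (countable, since $\hh$ is separable) and defining $L_\ell\in\mathcal{B}(\hh)$ by $W\xi=\sum_{\ell\in I}(L_\ell\xi)\otimes f_\ell$ yields $\Psi(x)=\sum_{\ell\in I}L_\ell^*xL_\ell$, with the partial sums $\sum_{\ell\in F}L_\ell^*L_\ell$ increasing to $W^*W=\Psi(\unit)$, hence convergent in the strong operator topology. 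I expect this step to be the main obstacle: it is the only place where infinite‑dimensionality genuinely bites, and it rests on the care needed to keep $\rho$ normal and on the cohomological input; it may alternatively be quoted from \cite{GoKoSu, Li76} or from \cite{Pa} (Theorem 30.16), or obtained from the Christensen--Evans analysis of completely dissipative generators.

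\emph{Step 3: extracting $G$ and the constraint.} Put $\phi:=\mathcal{L}-\Psi$. The identities satisfied by $\mathcal{L}$ (with $\mathcal{L}(\unit)=0$) and by $\Psi$ combine into the twisted Leibniz rule $\phi(xy)=\phi(x)y+x\phi(y)-x\,\phi(\unit)\,y$ for all $x,y\in\mathcal{B}(\hh)$. Hence $\psi(x):=\phi(x)-\tfrac12\bigl(\phi(\unit)x+x\phi(\unit)\bigr)$ is a bounded normal derivation of $\mathcal{B}(\hh)$, and is therefore inner: $\psi(x)=xM-Mx$ for some $M\in\mathcal{B}(\hh)$. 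Since $\mathcal{L}$, $\Psi$ and thus $\phi$ are $*$‑preserving, $\phi(\unit)=-\Psi(\unit)$ is self‑adjoint, and because $\psi$ is $*$‑preserving one gets $x^*(M+M^*)=(M+M^*)x^*$ for all $x$, so $M+M^*$ is a real multiple of $\unit$; subtracting it from $M$ (which does not change $\psi$) we may assume $M^*=-M$. Then $G:=M+\tfrac12\phi(\unit)$ satisfies $\phi(x)=G^*x+xG$, whence
\[
\mathcal{L}(x)=G^*x+xG+\sum_{\ell\in I}L_\ell^*xL_\ell,\qquad x\in\mathcal{B}(\hh).
\]
Evaluating at $x=\unit$ and using $\mathcal{L}(\unit)=0$ gives $G^*+G+\sum_{\ell\in I}L_\ell^*L_\ell=0$, which also re‑exhibits $\sum_{\ell\in I}L_\ell^*L_\ell=-(G+G^*)$ as a bounded operator and confirms the strong convergence. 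The remaining inputs — that bounded derivations of $\mathcal{B}(\hh)$ are inner, and the routine Leibniz‑type bookkeeping — are elementary.
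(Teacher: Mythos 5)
The paper does not prove this statement at all: it is quoted as the celebrated GKSL theorem with references to \cite{GoKoSu,Li76} and Theorem 30.16 in \cite{Pa}, so there is no in-paper argument to compare against. Your proposal is a correct outline of the standard proof found in that literature (essentially the Christensen--Evans/Lindblad route): Step 1 (bounded, normal, $*$-preserving, unital-kernel generator which is conditionally completely positive, and the passage to the completely positive kernel $B$ by appending $x_0=\unit$, $\xi_0=-\sum_j x_j\xi_j$) is sound, and Step 3 (the twisted Leibniz rule for $\phi=\mathcal{L}-\Psi$, innerness of the bounded $*$-preserving derivation $\psi$, centrality of $M+M^*$, and the definition $G=M+\tfrac12\phi(\unit)$, with \eqref{eq:linop} from $\mathcal{L}(\unit)=0$) checks out line by line. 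Step 2 is, as you say, the genuine content; your sketch (Kolmogorov decomposition of $B$, the contractive normal representation $\rho$, amplification $\mathcal{H}\cong\hh\otimes\mathfrak{k}$, innerness of the cocycle $a\mapsto T_a$ via vanishing of the first bounded Hochschild cohomology of $\mathcal{B}(\hh)$ with values in a dual normal bimodule) is exactly the Christensen--Evans argument, and deferring it to \cite{GoKoSu,Li76,Pa} is consistent with what the paper itself does. Two small points worth tightening if you were to write Step 2 in full: the contractivity and multiplicativity of $\rho$ require the specific positivity estimate for $\bigl\Vert\rho(a)\sum_j\Lambda(x_j\otimes\xi_j)\bigr\Vert^2$ coming from conditional complete positivity (not just positivity of $B$); and the countability of $I$ is not automatic from separability of $\hh$ alone --- one should first cut $\mathfrak{k}$ down to the separable closed subspace supporting the range of $W$ (possible since $W$ maps the separable space $\hh$ into $\hh\otimes\mathfrak{k}$), after which an orthonormal basis $(f_\ell)_{\ell\in I}$ with $I$ countable exists and the strong convergence of $\sum_{\ell\in I}L_\ell^*L_\ell=\Psi(\unit)$ follows as you state.
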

In order to highlight the difference with unitary evolutions one can denote by $-H$ the imaginary part of the operator $G$:
from  \eqref{eq:linop} it is immediate to see that
$$
G=-\mathrm{i}H-\frac{1}{2}\sum_{\ell \in I} L_\ell^* L_\ell
$$
and that the generator can be written in the following form:
\begin{equation}\label{eq:GKSL-H}
\mathcal{L}(x)=\mathrm{i}[H,x]+\frac{1}{2}\sum_{\ell \in I}(L_\ell^*L_\ell x-2L_\ell^*xL_\ell+xL_\ell^*L_\ell),
\end{equation}
where one can easily see that the Hamiltonian dynamics is perturbed by the effect of the interaction with the environment, resulting in the additional term with $L_\ell$s. We recall that the representation of the generator given in equation \eqref{eq:GKSL-H} is not unique (Theorem 30.16 in \cite{Pa}) and we remark that all the statements made in this work that involve the operators appearing in equation \eqref{eq:GKSL-H} hold true for every representation.

A density $\rho$ is called \emph{invariant} for a QMS $\TT$ if $\TT_{*t}(\rho)=\rho$ for all $t\geq 0$ or,
in an equivalent way, $\tr(\rho \TT_t(x))=\tr(\rho x)$ for all $x\in\mathcal{B}(\hh)$.

The support of an invariant density has the following useful property.

\begin{proposition}\label{prop:supp-proj}
The support projection $p$ of an invariant density $\rho$ satisfies $\TT_t(p)\geq p$ for all $t\geq 0$.
\end{proposition}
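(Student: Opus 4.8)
\textit{Proof sketch.}
The plan is to test invariance of $\rho$ against the single observable $\unit - p$ and combine it with positivity and unitality of $\TT_t$: this pins down the action of $\TT_t(p)$ on the support of $\rho$, and the desired inequality then drops out of a block decomposition relative to $p$ and $q:=\unit-p$.

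First I would record the elementary identity $\tr(\rho(\unit-p))=0$: since $p$ is the support projection of $\rho$ one has $\rho p=\rho$, hence $\tr(\rho p)=\tr\rho=1$. Applying invariance of $\rho$ to $x=\unit-p$ then gives, for every $t\ge 0$,
$$\tr\bigl(\rho\,\TT_t(\unit-p)\bigr)=\tr(\rho(\unit-p))=0 .$$
Because $\TT_t$ is positive and identity preserving, the operator $A:=\TT_t(\unit-p)=\unit-\TT_t(p)$ satisfies $0\le A\le\unit$, so we are reduced to the situation $\tr(\rho A)=0$ with $\rho\ge 0$ trace class and $A\ge 0$ bounded.

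The core step is to extract $Ap=0$ from $\tr(\rho A)=0$. I would diagonalise the compact operator $\rho$, choosing an orthonormal basis $(e_n)_n$ of $\hh$ made of eigenvectors, $\rho e_n=\lambda_n e_n$ with $\lambda_n\ge 0$, those with $\lambda_n>0$ forming an orthonormal basis of the support $p\hh$. Then $0=\tr(\rho A)=\sum_n\lambda_n\langle e_n,Ae_n\rangle$ is a series of nonnegative terms, so $\langle e_n,Ae_n\rangle=0$ whenever $\lambda_n>0$; positivity of $A$ together with the Cauchy--Schwarz inequality for the form $(\psi,\phi)\mapsto\langle\psi,A\phi\rangle$ then forces $Ae_n=0$ for all such $n$, whence $Ap=0$ and, taking adjoints, $pA=0$.

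Finally, from $Ap=pA=0$ we get $A=(p+q)A(p+q)=qAq$, and $0\le A\le\unit$ yields $qAq\le q$; therefore
$$\TT_t(p)=\unit-A=\unit-qAq\ \ge\ \unit-q=p ,$$
which is the assertion. The only point that needs a little care is the implication $\tr(\rho A)=0\Rightarrow Ap=0$; I would prefer the eigenbasis argument above to the equivalent route $\sqrt\rho\,A\sqrt\rho=0\Rightarrow A^{1/2}\sqrt\rho=0\Rightarrow A\rho=0$, so as to keep the prerequisites minimal, in line with the paper's stated aim. $\square$
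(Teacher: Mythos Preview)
Your proof is correct and follows essentially the same route as the paper: test invariance against $\unit-p$, use positivity to deduce that $A:=\TT_t(\unit-p)$ vanishes on the range of $p$, and read off the inequality from the block decomposition. The only cosmetic difference is that the paper phrases the key implication as ``$\rho$ is faithful on $p\mathcal{B}(\hh)p$, hence $p\TT_t(p)p=p$'' and then invokes its Lemma~\ref{lem:pos}, whereas you spell out the same step via the eigenbasis of $\rho$ and Cauchy--Schwarz for the positive form $\langle\cdot,A\cdot\rangle$.
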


The proof we report is along the lines of the one in Theorem 1 in \cite{VU}, for instance; a different proof for the finite and infinite dimensional case can be found in \cite{BaNa} (Proposition 9) and in \cite{CaYa} (Proposition 5.1), respectively.

\begin{proof}
Let $p$ be the support projection of an invariant density $\rho$. Clearly $p\rho=\rho p =\rho$. Let us fix $t>0$.
Notice that $p\TT_t(p)p\leq p\TT_t(\unit)p = p$ and $\tr(\rho (p-p\TT_t(p)p))
=\tr(\rho (p-\TT_t(p)))=0$; therefore, since $\rho$ is faithful on the subalgebra $p\mathcal{B}(\hh)p$,
$p\TT_t(p)p=p$.

It follows that the projection $p^\perp=\unit-p$ satisfies $p\TT_t(p^\perp)p=0$ and so positivity of $\TT_t(p^\perp)$
implies $p\TT_t(p^\perp) =\TT_t(p^\perp)p=0$, i.e. $\TT_t(p^\perp) \leq p^\perp$ (Lemma \ref{lem:pos}.
Finally, $\TT_t(p)=\unit-\TT_t(p^\perp) \geq p$.
\end{proof}

We call \emph{subharmonic} any orthogonal projection $p$ that satisfies $\TT_t(p) \geq p$.
If the semigroup is uniformly continuous we have a characterization of subharmonic projections in terms of the operators appearing in the generator.
\begin{proposition} \label{prop:spunif}
Let us consider an orthogonal projection $p$; if $\TT$ is uniformly continuous, the following statements are equivalent:
\begin{enumerate}
\item [1.] $p$ is subharmonic,
\item [2.] $\mathcal{L}(p) \geq 0$,
\item [3.] the support of $p$ is an invariant subspace for $G$ and $L_\ell$'s.
\end{enumerate}
\end{proposition}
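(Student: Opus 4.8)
The plan is to prove $1\Leftrightarrow 2$ from the differential structure of a uniformly continuous semigroup, and $2\Leftrightarrow 3$ by decomposing $\mathcal{L}(p)$ into its four corners relative to $p$ and $p^\perp:=\unit-p$ and using the constraint \eqref{eq:linop}.

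\emph{Step 1 ($1\Leftrightarrow 2$).} Since $t\mapsto\TT_t$ is norm-continuous its generator is bounded, so $t\mapsto\TT_t(p)$ is norm-differentiable with derivative $t\mapsto\TT_t(\mathcal{L}(p))$ and, in integrated form, $\TT_t(p)-p=\int_0^t\TT_s(\mathcal{L}(p))\,\mathrm{d}s$. If $p$ is subharmonic then $t^{-1}(\TT_t(p)-p)\ge 0$ for $t>0$, and letting $t\downarrow 0$ gives $\mathcal{L}(p)\ge 0$ because the cone of positive operators is norm-closed. Conversely, if $\mathcal{L}(p)\ge 0$ then $\TT_s(\mathcal{L}(p))\ge 0$ for every $s$ by positivity of $\TT_s$, so the integral above is positive and $\TT_t(p)\ge p$ for all $t\ge 0$.

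\emph{Step 2 ($2\Leftrightarrow 3$).} From \eqref{eq:lindblad} one has $\mathcal{L}(p)=G^*p+pG+\sum_\ell L_\ell^*pL_\ell$; using $G^*+G=-\sum_\ell L_\ell^*L_\ell$ one verifies, for an arbitrary projection $p$, the two corner identities
\begin{equation*}
p^\perp\mathcal{L}(p)p^\perp=\sum_\ell(pL_\ell p^\perp)^*(pL_\ell p^\perp)\ge 0 \quad\text{and}\quad p\,\mathcal{L}(p)\,p=-\sum_\ell(p^\perp L_\ell p)^*(p^\perp L_\ell p)\le 0,
\end{equation*}
all series converging in the strong operator topology. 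If $2$ holds, then $p\,\mathcal{L}(p)\,p\ge 0$ together with the second identity forces $p\,\mathcal{L}(p)\,p=0$; since $\mathcal{L}(p)$ is positive this yields $\mathcal{L}(p)\,p=p\,\mathcal{L}(p)=0$ (Lemma \ref{lem:pos}), hence $p^\perp L_\ell p=0$ for every $\ell$, i.e. the range of $p$ is invariant for all the $L_\ell$. Feeding $L_\ell p=pL_\ell p$ back into the mixed corner $p^\perp\mathcal{L}(p)p=p^\perp G^*p+\sum_\ell p^\perp L_\ell^*pL_\ell p$ and using \eqref{eq:linop} once more collapses it to $-p^\perp G p$, which vanishes because $\mathcal{L}(p)\,p=0$; thus the range of $p$ is also invariant for $G$, proving $3$. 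Conversely, under $3$ the relations $p^\perp L_\ell p=0$ and $p^\perp G p=0$ annihilate the $p$-corner and both mixed corners of $\mathcal{L}(p)$ (the $(p,p^\perp)$ one by self-adjointness of $\mathcal{L}(p)$), leaving $\mathcal{L}(p)=p^\perp\mathcal{L}(p)p^\perp\ge 0$, which is $2$.

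The main obstacle is simply the ordering of the computation in $2\Rightarrow 3$: one must first extract the $L_\ell$-invariance from the vanishing of the $p$-compression, and only then use it to simplify the mixed corner and read off the $G$-invariance; the single non-algebraic ingredient is Lemma \ref{lem:pos}, that a positive operator with zero $p$-compression is killed by $p$ on both sides. In infinite dimensions one should additionally note that every series $\sum_\ell L_\ell^* q L_\ell$ with $0\le q\le\unit$ converges strongly, being dominated by $\sum_\ell L_\ell^* L_\ell$, so the rearrangements used in the corner identities are legitimate; everything else is elementary $2\times 2$ block algebra.
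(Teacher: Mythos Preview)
Your proof is correct and follows essentially the same route as the paper: the equivalence $1\Leftrightarrow 2$ is identical (differentiate/integrate), and for $2\Leftrightarrow 3$ you use the same corner decomposition and the same appeal to Lemma~\ref{lem:pos}, the only cosmetic difference being that you work directly with $\mathcal{L}(p)$ while the paper works with $\mathcal{L}(p^\perp)=-\mathcal{L}(p)$.
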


We recall that a generalization of the equivalence between $1.$ and $3.$ was proved in the case of weak*-continuous semigroups admitting an infinitesimal generator in a generalized Lindblad form (Theorem III.1 in \cite{FFRR02}). The proof of the equivalence between $3.$ and the alternative characterization of $1.$ corresponding to item $3.$ in Proposition \ref{prop:reductions} can be found in \cite{TiVi08} (see Corollary 1) and \cite{ZhBa24} ( Theorem 5.1).
\begin{proof}
Notice that we can rephrase the third item as
$$p^\perp G p=0, \quad p^\perp L_\ell p=0 \quad \forall \ell \in I.$$

\noindent $1. \Rightarrow 2.$ It follows from differentiating $\TT_t(p) \geq p$ at $t=0$.

\noindent $2. \Rightarrow 1.$ Since $\TT_t$ is positive, $2.$ implies that $\TT_t(\mathcal{L}(p))\geq 0$ and the statement follows integrating.

\noindent $2. \Rightarrow 3.$ Notice that $\mathcal{L}(p) \geq 0$ is equivalent to $\mathcal{L}(p^\perp) \leq 0$; therefore, one has
$$0 \geq p\mathcal{L}(p^\perp)p=\sum_{\ell \in I} (p^\perp L_\ell p)^*(p^\perp L_\ell p) \geq 0.$$
Hence, one obtains that $p^\perp L_\ell p=0$ for all $\ell \in I$ and $p\mathcal{L}(p^\perp)p=0$. The negativity of $\mathcal{L}(p^\perp)$ implies that $\mathcal{L}(p^\perp)=p^\perp\mathcal{L}(p^\perp)p^\perp$ (Lemma \ref{lem:pos}) and from $p\mathcal{L}(p^\perp)=0$ one has that $p^\perp Gp=0$.

\noindent $3. \Rightarrow 2.$ Notice that
\begin{eqnarray*}
\mathcal{L}(p^\perp)&=& G^*p^\perp+p^\perp G+\sum_{\ell \in I}L_\ell^*p^\perp L_\ell \\
 & = & p^\perp G^*p^\perp+p^\perp G p^\perp+\sum_{\ell \in I}p^\perp L_\ell^*p^\perp L_\ell p^\perp\\
&= & p^\perp \mathcal{L}(p^\perp) p^\perp \leq 0,
\end{eqnarray*}
where the last inequality follows from the fact that
$p^\perp \TT_t(p^\perp) p^\perp \leq p^\perp \TT_t(\unit)p^\perp\leq   p^\perp$.
\end{proof}

\begin{definition}\label{def:irred}
A completely positive map $\Phi$ is \emph{irreducible} if it does not admit non-trivial subharmonic projections, i.e. if $p$ is a projection and $\Phi(p)\geq p$, then $p=0$ or $p=\unit$.

A QMS $\TT$ is \emph{irreducible} if $(\TT_t)_{t \geq 0}$ do not admit non-trivial common subharmonic projections, i.e.
if $p$ is a projection and $\TT_t(p)\geq p$ for all $t\geq 0$, then $p=0$ or $p=\unit$.
\end{definition}

Notice that irreducibility of a QMS is an a priori weaker requirement than irreducibility of the single maps belonging to the semigroup. The definition we just gave of irreducibility for QMSs is the natural generalization of the same notion for classical Markov semigroups and was introduced in \cite{Da70}. Notice that subharmonic projections allow to {\it reduce the dynamics} in the sense stated by the following result.
\begin{proposition} \label{prop:reductions}
The following are equivalent:
\begin{enumerate}
\item $p$ is a subharmonic projection,
\item $\TT$ preserves the algebra $p^\perp{\cal B}(\hh)p^\perp$, in the sense that
$$\TT_t(x)=p^\perp\TT_t(x)p^\perp, \quad x \in p^\perp{\cal B}(\hh)p^\perp;$$
\item $\TT_*$ preserves $p\mathfrak{I}_1(\hh)p$, in the sense that
$$\TT_{*t}(x)=p\TT_{*t}(x)p, \quad x \in p\mathfrak{I}_1(\hh)p.$$
\end{enumerate}
\end{proposition}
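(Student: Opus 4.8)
The plan is to route every implication through a single scalar reformulation of subharmonicity together with the trace pairing between $\mathcal B(\hh)$ and $\mathfrak I_1(\hh)$. As a preliminary step I would record that item~1 is exactly the statement $\TT_t(p^\perp)\le p^\perp$ for all $t\ge 0$, and that, since $\TT_t(p^\perp)\ge 0$, Lemma~\ref{lem:pos} makes this equivalent to $p\,\TT_t(p^\perp)\,p=0$ for all $t$, and also to $\TT_t(p^\perp)=p^\perp\TT_t(p^\perp)p^\perp$ for all $t$; these three forms of item~1 will be used interchangeably.

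For $1\Rightarrow 2$: because $\TT_t$ and $x\mapsto p^\perp x p^\perp$ are linear and $p^\perp\mathcal B(\hh)p^\perp$ is spanned by its positive elements, it is enough to verify the identity for $x\ge 0$ with $x=p^\perp x p^\perp$. For such an $x$ one has $0\le x\le \Vert x\Vert_\infty\,p^\perp$, hence $0\le\TT_t(x)\le\Vert x\Vert_\infty\,\TT_t(p^\perp)\le\Vert x\Vert_\infty\,p^\perp$ by monotonicity of $\TT_t$ and item~1; compressing by $p$ forces $p\,\TT_t(x)\,p=0$, and Lemma~\ref{lem:pos} applied to the positive operator $\TT_t(x)$ upgrades this to $p\,\TT_t(x)=\TT_t(x)\,p=0$, i.e.\ $\TT_t(x)=p^\perp\TT_t(x)p^\perp$. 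The reverse implication $2\Rightarrow 1$ is immediate on taking $x=p^\perp$, since item~2 then gives $\TT_t(p^\perp)=p^\perp\TT_t(p^\perp)p^\perp\le p^\perp\TT_t(\unit)p^\perp=p^\perp$.

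For $1\Leftrightarrow 3$ I would dualize. It is enough to test item~3 on positive $\eta$ with $\eta=p\eta p$: the general case follows by writing an arbitrary element of $p\mathfrak I_1(\hh)p$ as a complex combination of positive elements of that corner and using linearity of $\TT_{*t}$. For positive $\eta=p\eta p$ the operator $\sigma:=\TT_{*t}(\eta)$ is positive and trace class, and $\sigma\in p\mathfrak I_1(\hh)p$ holds iff $p^\perp\sigma p^\perp=0$ (Lemma~\ref{lem:pos} with $p^\perp$ in place of $p$), iff $\tr(\sigma p^\perp)=0$, iff $\tr(\eta\,\TT_t(p^\perp))=0$. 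Taking $\eta=\vert\xi\rangle\langle\xi\vert$ with $\xi$ a unit vector in the range of $p$, and using positivity of $p\,\TT_t(p^\perp)\,p$, this amounts exactly to $p\,\TT_t(p^\perp)\,p=0$ for all $t$, which is item~1 in the preliminary form above.

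The operator-inequality manipulations and the linearity/spanning reductions are routine; the delicate point --- and the natural place to slip --- is the repeated use of Lemma~\ref{lem:pos} to pass from ``the $p$-corner of a positive operator vanishes'' to ``the operator lives in the $p^\perp$-corner'', applied to $\TT_t(x)$, to $\TT_t(p^\perp)$, and in predual form to $\TT_{*t}(\eta)$. This step genuinely uses positivity --- it is false for general self-adjoint operators --- which is precisely why one reduces beforehand to positive $x\in p^\perp\mathcal B(\hh)p^\perp$, respectively to densities supported under $p$, before invoking it.
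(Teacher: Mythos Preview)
Your proof is correct and follows essentially the same strategy as the paper: reduce to positive elements, use the sandwich $0\le\TT_t(x)\le\Vert x\Vert_\infty\,p^\perp$ together with Lemma~\ref{lem:pos} for $1\Leftrightarrow 2$, and pass to item~3 via the trace duality. The only cosmetic difference is that the paper proves $2\Leftrightarrow 3$ by testing $\TT_{*t}(\eta)$ against all of $p^\perp\mathcal B(\hh)p^\perp$, whereas you prove $1\Leftrightarrow 3$ directly by testing only against $p^\perp$; this is a slight economy but not a different idea.
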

\begin{proof}
First of all, notice that it suffices to prove $2.$ and $3.$ for positive elements.

$1. \Rightarrow 2.$ Let us consider $x \geq 0$ such that $x=p^\perp x p^\perp$, then for all $t \geq 0$
$$0 \leq \TT_t(x) \leq \TT_t(\|x\|p^\perp) \leq \|x\|p^\perp$$
and we can conclude thanks to the positivity of $\TT_t(x)$.

$2. \Rightarrow 1.$ Notice that
$$\TT_t(p^\perp)=p^\perp \TT_t(p^\perp)p^\perp\leq p^\perp \TT_t(\unit)p^\perp=p^\perp.$$

$2. \Rightarrow 3.$ We will prove that $2.$ implies $3.$ and the opposite implication can be shown in the same way. Let us consider $\eta \geq 0$, $\eta=p\eta p$, then $2.$ implies that for every $x \in p^\perp {\cal B}(\hh)p^\perp$ one has that for all $t \geq 0$
$$\tr(\TT_{*t}(\eta)x)=\tr(\eta \TT_t(x))=\tr(\eta p^\perp \TT_t(x) p^\perp)=0,$$
which implies that $\TT_{*t}(\eta)=p\TT_{*t}(\eta) p$ and we are done.

\end{proof}

Thanks to Proposition \ref{prop:spunif}, we can rephrase the notion of irreducibility in terms of the operators appearing
in the infinitesimal generator. Before stating the Theorem, let us introduce the following notation: given a set of
bounded linear operators $S \subseteq \mathcal{B}(\hh)$ let us denote by $\mathcal{A}(S)$ the smallest algebra closed
in the strong operator topology which contains $S$, i.e. the closure in the strong operator topology of the following set:
$$
{\rm span}_{\mathbb{C}}\left\{\,A_1 A_2 \cdots A_{n-1} A_n \,\mid\, \, A_1, \dots, A_n \in S,\,n \geq 1\,\right\}.$$
\begin{proposition} \label{prop:ucirr}
If $\TT$ is uniformly continuous, the following are equivalent:
\begin{enumerate}
\item [1.] $\TT$ is irreducible,
\item [2.] $G$ and $L_\ell$'s have no common invariant subspaces other than $\{0\}$ and $\hh$,
\item [3.] For every nonzero $v \in \hh$,
$\left\{Av\,\mid\,\, A \in \mathcal{A}(\{G, L_\ell\}_{\ell \in I})\right\}=\hh$.
\end{enumerate}
If $\hh$ is finite dimensional, the previous statements are also equivalent to
\begin{enumerate}
\item [4.] $\mathcal{A}(\{G, L_\ell\}_{\ell \in I})=\mathcal{B}(\hh)$.
\end{enumerate}
\end{proposition}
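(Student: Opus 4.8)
The plan is to read off $1.\Leftrightarrow 2.\Leftrightarrow 3.$ from Proposition~\ref{prop:spunif} combined with the standard correspondence between common invariant subspaces of a family of operators and cyclicity of the nonzero vectors for the generated algebra, and to obtain the extra equivalence with $4.$ in the finite-dimensional case from Burnside's density theorem; the full statement then follows by transitivity of the implications.

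I would first establish $1.\Leftrightarrow 2.$ By Definition~\ref{def:irred}, $\TT$ is irreducible exactly when $0$ and $\unit$ are its only subharmonic projections. By the equivalence of items $1.$ and $3.$ in Proposition~\ref{prop:spunif}, a projection $p$ is subharmonic if and only if its range --- a closed subspace of $\hh$ --- is invariant under $G$ and all of the $L_\ell$'s. Since $p\mapsto\operatorname{ran}p$ is a bijection between orthogonal projections and closed subspaces of $\hh$, this is exactly statement~$2.$, with ``invariant subspace'' read in the closed sense; I would add a line observing that this is no restriction, since the operators being bounded, the closure of an invariant subspace is again invariant.

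Next, for $2.\Leftrightarrow 3.$, write $S=\{G\}\cup\{L_\ell\}_{\ell\in I}$ and $\mathcal A=\mathcal A(S)$. If $2.$ fails, take a non-trivial closed $S$-invariant subspace $V$: it is stable under all finite products of elements of $S$, and, being closed, under their strong limits as well, hence under every $A\in\mathcal A$; choosing $0\ne v\in V$ gives $\{Av:A\in\mathcal A\}\subseteq V\subsetneq\hh$, so $3.$ fails. Conversely, assume $2.$ and fix $0\ne v$; set $W=\{Av:A\in\mathcal A\}$, a linear subspace of $\hh$. Because $G,L_\ell\in\mathcal A$ and $GA,L_\ell A\in\mathcal A$ for $A\in\mathcal A$, one has $Gv,L_\ell v\in W$ and $GW,L_\ell W\subseteq W$, so $W+\mathbb C v$, and hence its norm-closure $\overline W$, is $S$-invariant, while $\overline W\ne\{0\}$ since $v\in\overline W$. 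By $2.$ we get $\overline W=\hh$, i.e.\ $W$ is dense in $\hh$; in finite dimensions $\overline W=W$, so this is already the desired equality $\{Av:A\in\mathcal A\}=\hh$. The one step I expect to require genuine care --- and hence the main obstacle --- is upgrading, in infinite dimensions, the density of $W$ to the literal equality $W=\hh$ stated in item~$3.$; the rest of the argument is purely formal.

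Finally, assume $\dim\hh<\infty$. Then $\mathcal A$ is automatically norm-closed, and $2.$ says that $\hh$ has no non-trivial $\mathcal A$-invariant subspace. By Burnside's theorem this forces $\mathcal A=\mathcal B(\hh)$, i.e.\ item~$4.$, where one only uses $\mathcal A\ne\{0\}$ (automatic once $\dim\hh\ge 2$, the case $\dim\hh=1$ being trivial). The converse $4.\Rightarrow 2.$ is immediate, as is $4.\Rightarrow 3.$ since rank-one operators give $\mathcal B(\hh)v=\hh$ for every $v\ne 0$. This closes $2.\Leftrightarrow 3.\Leftrightarrow 4.$ in the finite-dimensional setting and completes the scheme.
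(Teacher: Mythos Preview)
Your scheme is essentially identical to the paper's: $1.\Leftrightarrow 2.$ via Proposition~\ref{prop:spunif}, $2.\Leftrightarrow 3.$ via the correspondence between common invariant subspaces and cyclic vectors for the generated algebra, and the finite-dimensional equivalence with $4.$ via Burnside's theorem.

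On the obstacle you flag: the paper's one-sentence argument for $2.\Leftrightarrow 3.$ is equally terse and does not isolate the density-versus-equality distinction either; it simply observes that $\{Av:A\in\mathcal A\}$ is itself an invariant subspace and invokes $2.$ The remark immediately following the proposition then restates item~$3.$ as ``every nonzero $v$ is cyclic for $\mathcal{A}(\{G,L_\ell\})$,'' which in standard usage means $\overline{\{Av:A\in\mathcal{A}\}}=\hh$; under that reading your argument is already complete. So the point you single out concerns how item~$3.$ is to be read in infinite dimensions rather than a missing idea in the proof strategy, and you are in fact being more careful than the paper on this point.
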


For references regarding the equivalence between $1.$ and $4.$, we refer to \cite{ZhBa24} (Theorem 5.2), where they also provide some examples showing that condition $4.$ (and, consequently, irreducibility) is not equivalent to $\mathcal{A}(\{H, L_\ell\}_{\ell \in I})=\mathcal{B}(\hh)$, nor $\mathcal{A}(\{H, L_\ell,L_\ell^*\}_{\ell \in I})=\mathcal{B}(\hh)$, as it has sometimes been mistakenly assumed in the literature.

We remark that in operator theoretical terminology, one would restate item $3.$ saying that every nonzero vector $v \in \hh$ is cyclic for ${\cal A}(\{G,L_\ell\}_{\ell \in I})$. It is easy to realize that $3.$ and $4.$ do not change if one considers ${\cal A}(\{G,L_\ell, \unit\}_{\ell \in I})$ in place of ${\cal A}(\{G,L_\ell\}_{\ell \in I})$.

\begin{proof}
The equivalence between $1.$ and $2.$ is an immediate consequence of Proposition \ref{prop:spunif}, while the one between $2.$ and $3.$ follows from the fact that every nonzero invariant subspace for $\{G, L_\ell\}_{\ell \in I}$ contains a subspace of the form
$\{Av: \, A \in \mathcal{A}(\{G, L_\ell\}_{\ell \in I})\}$ for some nonzero $v \in \hh$, which, in turn, is an invariant subspace for $G$ and $L_\ell$s itself.

Finally, the equivalence between $2.$ and $4.$ when $\hh$ is finite dimensional is Burnside theorem
(see \cite{LoRo} for a simple short proof).
\end{proof}

We remark that an extension of Theorem 2 in \cite{Sp} to the infinite dimensional setting is a simple corollary of Proposition \ref{prop:ucirr}. It provides a sufficient condition for irreducibility. Before stating the result, we need to introduce some more notation. We say that the set ${\rm span}_{\mathbb{C}}\{L_\ell: \ell \in I\}$ is self adjoint if for all $\ell\in I$, there exist complex numbers
$(z_{\ell\,j})_{j\in I}$ such that $L_\ell^*=\sum_{j\in I}z_{\ell j}L_j$, the series being strongly convergent; given a set $S \subseteq {\cal B}(\hh)$, we denote by $S^\prime$ the commutant of $S$, i.e. the set $\{x \in {\cal B}(\hh):\, [x,y]=0, \, \forall y \in S\}$.

\begin{corollary}
Let us assume that $\TT$ is uniformly continuous and that the set ${\rm span}_{\mathbb{C}}\{L_\ell: \ell \in I\}$ is self adjoint; if ${\rm span}_{\mathbb{C}}\{L_\ell: \ell \in I\}^\prime=\mathbb{C}\unit$, then $\TT$ is irreducible.
\end{corollary}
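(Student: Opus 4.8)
The plan is to invoke the equivalence of items $1.$ and $2.$ in Proposition \ref{prop:ucirr}: since $\TT$ is uniformly continuous, it suffices to show that $G$ and the $L_\ell$'s admit no common closed invariant subspace other than $\{0\}$ and $\hh$. In fact I would prove the \emph{a priori} stronger statement that the family $\{L_\ell\}_{\ell\in I}$ alone has no non-trivial common closed invariant subspace; since adjoining the operator $G$ can only shrink the lattice of common invariant subspaces, this is enough.

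So let $K\subseteq\hh$ be a closed subspace with $L_\ell K\subseteq K$ for every $\ell\in I$. The first step is to check that $K$ is invariant for every $L_\ell^*$ as well. By the self-adjointness hypothesis we may write $L_\ell^*=\sum_{j\in I}z_{\ell j}L_j$ with the series strongly convergent; fixing $v\in K$, each finite partial sum $\sum_{j\in F}z_{\ell j}L_j v$ lies in $K$ because every $L_j v\in K$ and $K$ is a linear subspace, and hence the strong limit $L_\ell^* v$ lies in $K$ since $K$ is closed. Thus $L_\ell^*K\subseteq K$ for all $\ell$, and consequently $L_\ell K^\perp\subseteq K^\perp$: indeed, for $w\in K^\perp$ and $v\in K$ one has $\langle L_\ell w,v\rangle=\langle w,L_\ell^* v\rangle=0$ because $L_\ell^* v\in K$. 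Therefore $K$ reduces each $L_\ell$, i.e. the orthogonal projection $p_K$ onto $K$ satisfies $[p_K,L_\ell]=0$ for all $\ell\in I$.

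The second step is purely algebraic: since $p_K$ commutes with every $L_\ell$, it commutes with every element of ${\rm span}_{\mathbb{C}}\{L_\ell:\ell\in I\}$, that is, $p_K\in{\rm span}_{\mathbb{C}}\{L_\ell:\ell\in I\}'=\mathbb{C}\unit$ by assumption. The only orthogonal projections in $\mathbb{C}\unit$ are $0$ and $\unit$, so $K=\{0\}$ or $K=\hh$. By Proposition \ref{prop:ucirr} this shows that $\TT$ is irreducible.

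I do not expect a genuine obstacle here: the argument is essentially the classical one behind Theorem 2 in \cite{Sp}, and the only point requiring a little care is that invariance of the closed subspace $K$ under each $L_\ell$ propagates to the strongly convergent series expressing $L_\ell^*$ — which works precisely because the partial sums are finite linear combinations and $K$ is closed — together with reading ``invariant subspace'' in Proposition \ref{prop:ucirr}($2.$) as ``closed invariant subspace'', the meaning inherited from support projections of subharmonic projections. (Alternatively, since the span of the $L_\ell$'s is self adjoint, $\mathcal{A}(\{L_\ell,\unit\}_{\ell\in I})$ is a von Neumann algebra with trivial commutant, hence equal to $\mathcal{B}(\hh)$ by the bicommutant theorem, and one concludes via item $3.$ of Proposition \ref{prop:ucirr} and the remark that follows it.)
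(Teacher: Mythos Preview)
Your proof is correct and follows essentially the same route as the paper: both arguments use Proposition \ref{prop:ucirr} and show that the self-adjointness hypothesis forces the orthogonal projection onto a putative invariant subspace to commute with every $L_\ell$, hence to lie in the trivial commutant. The only cosmetic difference is that the paper starts from a common invariant subspace for $G$ and the $L_\ell$'s (as delivered directly by item $2.$ of Proposition \ref{prop:ucirr}) while you observe that invariance under the $L_\ell$'s alone already suffices; your added care about strong convergence and closedness is appropriate and does not diverge from the paper's argument.
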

\begin{proof}
Let us assume that $G$ and $L_\ell$s have a common invariant subspace and let $p$ denote the corresponding orthogonal projection. One has that $L_\ell p=p L_\ell p$ for every $\ell \in I$. Since $L^*_\ell$ can be seen a (possibly infinite) linear combination of $L_j$s, one has that $L^*_\ell p=pL_\ell^* p$ as well and $[L_\ell,p]=0$. Therefore, we just showed that $p \in  {\rm span}_{\mathbb{C}}\{L_\ell: \ell \in I\}^\prime=\mathbb{C}\unit$ and, consequently, $p \in \{0,\unit\}$.
\end{proof}

We just mention that another sufficient condition can be found in finite dimensional systems in terms of the so called Kossakowski matrix (see item 3. in Corollary 7.2 in \cite{Wolf-QC&O}).

\section{Irreducibility and invariant densities} \label{sec:iid}

In this section we focus on QMSs admitting an invariant density and we will mainly investigate the relationship between irreducibility and primitivity.
\begin{definition}
A QMS $\TT$ is said to be \emph{primitive} if it admits a unique faithful invariant density.
\end{definition}
This convention is usually adopted in the literature that studies functional inequalities for QMSs in finite dimensional systems (see for instance \cite{GaoRou, KaTe}), while, when dealing with discrete time dynamics, it usually refers to, in addition, assuming the absence of eigenvalues of modulus $1$ other than $1$ itself. In fact, the two notions are equivalent for QMSs in finite dimensions (see Theorem \ref{th:irr-iff-prim-conv-inv-st}).

An old result due to A. Frigerio (\cite{Fr77} Theorem 1) shows that, if a QMS $\TT$ has a faithful invariant density,
this is unique (hence the semigroup is primitive) if and only if $\TT$ is irreducible. The proof in \cite{Fr77} is not elementary because it depends on
ergodic theorems for $c_0$ semigroups and the GNS construction which are not covered in standard functional analysis courses.
In this section we present an elementary proof. We start with the following lemma which is notoriously very useful in the analysis
of the asymptotic behavior of semigroups.

\begin{lemma}\label{lem:invariant}
Any weak$^*$ limit of a sequence of Ces\`aro means $t_m^{-1}\int_0^{t_m} \TT_r(x)\mathrm{d}r$ as
$t_m$ tends to infinity is a fixed point for all maps $\TT_t$.
\end{lemma}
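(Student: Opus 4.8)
The plan is to show that if $y$ is a weak$^*$ limit of a sequence of Ces\`aro means, then $\TT_t(y)=y$ for every $t\geq 0$, by exploiting the semigroup property and a telescoping estimate. First I would fix $x\in\mathcal{B}(\hh)$ and a sequence $t_m\to\infty$ such that the Ces\`aro means $y_m:=t_m^{-1}\int_0^{t_m}\TT_r(x)\,\mathrm{d}r$ converge weak$^*$ to some $y\in\mathcal{B}(\hh)$; note these means are well defined as the integrand is weak$^*$-continuous and norm-bounded (by $\|x\|_\infty$ since $\|\TT_r\|=1$), so $\|y_m\|_\infty\leq\|x\|_\infty$ and the limit exists along a subsequence by Banach--Alaoglu, consistently with the hypothesis. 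The key identity is that for fixed $t\geq 0$,
\begin{equation*}
\TT_t(y_m)-y_m=\frac{1}{t_m}\int_0^{t_m}\big(\TT_{r+t}(x)-\TT_r(x)\big)\,\mathrm{d}r=\frac{1}{t_m}\left(\int_{t_m}^{t_m+t}\TT_r(x)\,\mathrm{d}r-\int_0^{t}\TT_r(x)\,\mathrm{d}r\right),
\end{equation*}
where I used that $\TT_t$ commutes with the (Bochner, or weak$^*$) integral — which holds because $\TT_t$ is weak$^*$-continuous and linear — together with the semigroup law $\TT_t\circ\TT_r=\TT_{t+r}$ and a change of variables.

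Next I would bound the right-hand side: each of the two integrals has operator norm at most $t\,\|x\|_\infty$, so $\|\TT_t(y_m)-y_m\|_\infty\leq 2t\,\|x\|_\infty/t_m\to 0$ as $m\to\infty$. Now I pair against an arbitrary $\eta\in\mathfrak{I}_1(\hh)$: since $y_m\to y$ weak$^*$ we have $\tr(\eta\,y_m)\to\tr(\eta\,y)$; since $\TT_t$ is normal, $\TT_{t*}\eta\in\mathfrak{I}_1(\hh)$ and $\tr(\eta\,\TT_t(y_m))=\tr(\TT_{t*}(\eta)\,y_m)\to\tr(\TT_{t*}(\eta)\,y)=\tr(\eta\,\TT_t(y))$. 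Combining with the norm estimate, which gives $|\tr(\eta\,(\TT_t(y_m)-y_m))|\leq\|\eta\|_1\,\|\TT_t(y_m)-y_m\|_\infty\to 0$, we conclude $\tr(\eta\,\TT_t(y))=\tr(\eta\,y)$ for all $\eta\in\mathfrak{I}_1(\hh)$, hence $\TT_t(y)=y$. Since $t\geq 0$ was arbitrary, $y$ is a common fixed point of all the $\TT_t$.

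I do not expect any serious obstacle here; the argument is the standard mean-ergodic telescoping trick. The only points requiring a little care are (i) justifying that $\TT_t$ passes through the integral defining the Ces\`aro mean — which is immediate from linearity and weak$^*$-continuity applied to the weak$^*$ (equivalently, since the integrand is norm-continuous in the uniformly continuous case, Bochner) integral — and (ii) the change of variables in the time integral, which is elementary. If one prefers to avoid integrals of operator-valued functions altogether, the same computation can be run entirely in scalar form after pairing with $\eta$ from the outset, writing $\tr(\eta\,\TT_t(y_m))=t_m^{-1}\int_0^{t_m}\tr(\eta\,\TT_{r+t}(x))\,\mathrm{d}r$ and telescoping the scalar integral; this sidesteps any measurability subtleties. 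Either way the conclusion follows at once.
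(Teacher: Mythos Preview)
Your proof is correct and follows essentially the same approach as the paper: both compute $\TT_t(y_m)-y_m$ via the semigroup property, telescope the integral to obtain a term of norm $O(t/t_m)$, and pass to the weak$^*$ limit. You are slightly more explicit than the paper in invoking normality of $\TT_t$ (via the predual map $\TT_{*t}$) to justify that weak$^*$ convergence of $y_m$ implies weak$^*$ convergence of $\TT_t(y_m)$, which the paper leaves implicit.
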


\begin{proof} Denote by $y$ the weak$^*$ limit of the sequence. For all $m,s\geq 0$ we have
\begin{eqnarray*}
  \TT_s \left(\frac{1}{t_m}\int_0^{t_m} \TT_r(x)\mathrm{d}r\right)
   &=&  \frac{1}{t_m} \int_s^{t_m+s} \TT_r(x)\mathrm{d}r \\
   &=&   \frac{1}{t_m} \int_0^{t_m} \TT_r(x)\mathrm{d}r \\
   &+& \frac{1}{t_m} \int_{t_m}^{t_m+s} \TT_r(x)\mathrm{d}r
  -\frac{1}{t_m} \int_{0}^{s} \TT_r(x)\mathrm{d}r
\end{eqnarray*}
The integrals of the last two terms are bounded because
\begin{eqnarray*}
\left\Vert \int_{t_m}^{t_m+s} \TT_r(x)\mathrm{d}r\right\Vert
& \leq & \int_{t_m}^{t_m+s} \Vert\TT_r(x)\Vert\mathrm{d}r \leq s \Vert x \Vert \\
\left\Vert \int_{0}^{s} \TT_r(x)\mathrm{d}r\right\Vert
& \leq & \int_{0}^{s} \Vert\TT_r(x)\Vert\mathrm{d}r \leq s \Vert x \Vert
\end{eqnarray*}
Therefore, taking weak$^*$ limits as $t_m$ tends to infinity, we find $\TT_s(y)=y$.
\end{proof}

Before stating the next result, let us introduce the set of fixed points of $\TT$:

$${\cal F}(\TT):=\{x \in {\cal B}(\hh):\, \text{for all }t \geq 0, \,\TT_t(x)=x \}.$$

\begin{theorem}\label{th:equiv-irred}
Let $\TT$ be a QMS with a faithful invariant density $\rho$.

\bigskip An orthogonal projection $p$ is subharmonic if and only if $p \in {\cal F}({\cal T})$.

\bigskip \noindent Moreover, the following are equivalent:
\begin{itemize}
  \item[{1.}] $\TT$ is irreducible,
  \item[{2.}] The set of fixed points is trivial, i.e. $\mathcal{F}(\TT)=\mathbb{C}\unit$,
  \item[{3.}] $\rho$ is the unique invariant density, i.e. $\TT$ is primitive.
\end{itemize}
If $\TT$ is uniformly continuous, the previous items are equivalent to
\begin{itemize}
  \item[{4.}] $\{G,L_\ell,L_\ell^*\}^{\prime}=\{H,L_\ell,L_\ell^*\}^{\prime}=\mathbb{C}\unit$.
\end{itemize}
\end{theorem}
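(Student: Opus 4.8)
The plan is to prove the equivalences in three logically manageable blocks: first the projection characterization, then the cycle $1 \Leftrightarrow 2 \Leftrightarrow 3$, and finally the uniformly continuous addendum $2 \Leftrightarrow 4$.

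For the projection statement, one direction is immediate: a fixed point is in particular subharmonic since $\TT_t(p) = p \geq p$. For the converse, suppose $p$ is subharmonic, i.e. $\TT_t(p) \geq p$. The key is to exploit invariance and faithfulness of $\rho$: compute $\tr(\rho \TT_t(p)) = \tr(\rho p)$, so $\tr(\rho(\TT_t(p) - p)) = 0$ with $\TT_t(p) - p \geq 0$; faithfulness of $\rho$ then forces $\TT_t(p) = p$. This is essentially the argument already used in the proof of Proposition \ref{prop:supp-proj}, run in reverse.

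For the main cycle, I would argue $1 \Rightarrow 2 \Rightarrow 3 \Rightarrow 1$. The implication $3 \Rightarrow 1$ is the contrapositive: if $\TT$ is not irreducible, there is a nontrivial subharmonic projection $p$, and then by Proposition \ref{prop:reductions} the predual semigroup $\TT_*$ preserves $p\mathfrak{I}_1(\hh)p$; since $p\rho p = \rho$ (as $\rho$'s support... wait, $\rho$ is faithful so actually one uses a different invariant density) — more carefully, one produces a \emph{new} invariant density supported on the range of $p$, for instance by taking a Ces\`aro limit of $\TT_{*t}$ applied to any density in $p\mathfrak{I}_1(\hh)p$ (using Lemma \ref{lem:invariant} on the predual side and relative compactness), which is invariant, nonzero, and distinct from the faithful $\rho$, contradicting $3$. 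For $1 \Rightarrow 2$: if $x \in \mathcal{F}(\TT)$, then $x^*, $ and hence the self-adjoint and anti-self-adjoint parts, and then (via the Schwarz property and the standard argument that $\TT_t(x^*x) = x^*x$ for fixed self-adjoint $x$, using $\tr(\rho(\TT_t(x^*x) - x^*x)) = 0$) all spectral projections of a self-adjoint fixed point are themselves fixed, hence subharmonic, hence trivial by irreducibility; so every self-adjoint fixed point is scalar and $\mathcal{F}(\TT) = \mathbb{C}\unit$. The implication $2 \Rightarrow 3$: given any invariant density $\sigma$, its support projection is subharmonic (Proposition \ref{prop:supp-proj}) hence a fixed point hence trivial, so $\sigma$ is faithful; then one shows the Radon--Nikodym-type operator relating $\sigma$ and $\rho$ is a fixed point, hence scalar, forcing $\sigma = \rho$ — alternatively, since $\mathcal{F}(\TT) = \mathbb{C}\unit$ and the space of invariant \emph{normal states} is (by a mean-ergodic argument using Lemma \ref{lem:invariant}) in duality with $\mathcal{F}(\TT)$, uniqueness follows.

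The main obstacle I anticipate is the step showing that a self-adjoint fixed point has all its spectral projections fixed (equivalently, that $\mathcal{F}(\TT)$ is a von Neumann algebra, or at least that $\TT_t$ acts as a $*$-automorphism-like map on fixed points). The clean route uses the Schwarz inequality \eqref{eq:schwarz}: for self-adjoint $x = x^* \in \mathcal{F}(\TT)$ we have $\TT_t(x^2) \geq \TT_t(x)^2 = x^2$, and pairing against the faithful invariant $\rho$ gives $\TT_t(x^2) = x^2$; iterating gives $\TT_t(x^n) = x^n$ for all $n$, hence $\TT_t(f(x)) = f(x)$ for polynomials and then, by normality and weak$^*$-continuity, for all bounded Borel $f$, in particular for spectral projections. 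The remaining pieces — the mean-ergodic projection onto $\mathcal{F}(\TT)$ via Lemma \ref{lem:invariant}, and the commutant reformulation in item $4$ (where $\mathcal{F}(\TT)$ for a uniformly continuous QMS is computed as $\{G, L_\ell, L_\ell^*\}'$ via the standard fixed-point/generator calculation, and the equality with $\{H, L_\ell, L_\ell^*\}'$ follows because $G = -\mathrm{i}H - \tfrac12\sum_\ell L_\ell^* L_\ell$ so $G$ commutes with anything commuting with $H$ and all $L_\ell, L_\ell^*$, and conversely) — are then routine and I would present them compactly.
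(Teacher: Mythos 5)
The decisive weak point is your proof of $3.\Rightarrow 1.$ You propose to produce a second invariant density by taking a Ces\`aro limit of $\TT_{*t}(\eta)$ for some density $\eta=p\eta p$, ``using Lemma \ref{lem:invariant} on the predual side and relative compactness''. In infinite dimensions this compactness is not available for free: the set of densities is not relatively compact in the weak (or trace-norm) topology of $\mathfrak{I}_1(\hh)$, and Lemma \ref{lem:invariant} is a statement about weak$^*$ limits in $\mathcal{B}(\hh)$, not about the predual. The only cheap compactness is weak$^*$ against the compact operators, and along such limits mass can escape to infinity, so the cluster point may fail to be a density (it can even be $0$ — this is exactly what happens in Example \ref{ex:bd-chain}, where no invariant density exists at all). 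Trace-norm convergence of Ces\`aro means of $\TT_{*t}(\eta)$ in the presence of a faithful invariant density is true, but it is precisely the non-trivial content of Lemmas \ref{lem:dom-dens}, \ref{lem:seq-ergod} and \ref{lem:seq-to-cont}, which you neither invoke nor reprove; and even quoting them you would still have to argue that the limit is a genuine (unit-trace) density supported in $p$, distinct from $\rho$. The paper avoids all of this: since $p$ is subharmonic and $\rho$ is faithful, $p$ is in fact a \emph{fixed point}, so by the multiplicative-domain Lemma \ref{lem:Schw-map} one gets $\TT_t(qxr)=q\TT_t(x)r$ for $q,r\in\{p,p^\perp\}$, hence by duality $\TT_{*t}(p\rho p)=p\TT_{*t}(\rho)p=p\rho p$, and $p\rho p/\tr(\rho p)$ is directly a non-faithful invariant density contradicting $3.$ Note that Proposition \ref{prop:reductions} alone (which only uses subharmonicity) does not give this: preservation of the corner $p\mathfrak{I}_1(\hh)p$ does not imply that $p\rho p$ itself is invariant; the full four-corner identity coming from harmonicity is what makes the argument work.

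Two smaller remarks. In $1.\Rightarrow 2.$, ``iterating'' the Schwarz inequality only yields the powers $x^{2^k}$; to reach all powers (and hence the functional calculus) you need either a polarization step (apply the argument to $x+\lambda x^{2}$, $\lambda\in\mathbb{R}$, as the paper does with $x+\theta y$ to show $\mathcal{F}(\TT)$ is a $*$-algebra) or again Lemma \ref{lem:Schw-map}; this is easily repaired but should be said. In $2.\Rightarrow 3.$, the ``Radon--Nikodym-type operator is a fixed point'' route is not substantiated (such a derivative is naturally a fixed point of a KMS-dual map, not of $\TT$ itself); your alternative mean-ergodic argument is the paper's actual proof and is the one to keep, spelled out as: the Ces\`aro means of $\TT_r(x)$ have a weak$^*$ cluster point, which by Lemma \ref{lem:invariant} and $\mathcal{F}(\TT)=\mathbb{C}\unit$ equals $\tr(\rho x)\unit$, and pairing with any invariant density $\eta$ gives $\tr(\eta x)=\tr(\rho x)$ for all $x$. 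Your treatment of the first statement, of $1.\Rightarrow 2.$ (modulo the polarization point) and of item $4.$ is otherwise in line with the paper.
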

Notice that, since subharmonic projections are in fact harmonic when there exists a faithful invariant density, in this particular case the notion of irreducibility that we are considering (usually referred to as Davies irreducibility, see \cite{Da70}) coincides with the one introduced by Evans (\cite{Ev77}).

\begin{proof}

If an orthogonal projection $p$ is in ${\cal F}({\cal T})$, then it is trivially a subharmonic projection as well. Let us show that the converse is also true. Let $p$ be a subharmonic projection. By the invariance of $\rho$ we have
$\tr(\rho\TT_t(p)) = \tr(\rho p)$, therefore
$\tr\left(\rho(\TT_t(p)- p)\right)=0$.
Since $\rho$ is faithful and $\TT_t(p)\geq p$, it follows that $\TT_t(p)= p$ for all $t\geq 0$,
namely $p$ belongs to $\mathcal{F}(\TT)$.

$1.\Rightarrow 2.$ The set $\mathcal{F}(\TT)$ of fixed points is a von Neumann subalgebra of $\mathcal{B}(\hh)$.
Indeed, it is a $^*$-subalgebra of $\mathcal{B}(\hh)$ because, if $x,y\in\mathcal{F}(\TT)$, then
\[
\TT_t((x+\theta y)^*(x+\theta y)) \geq \TT_t((x+\theta y)^*)\TT_t((x+\theta y))=(x+\theta y)^*(x+\theta y)
\]
for all $\theta\in\mathbb{C}$, $t\geq 0$. Moreover, by the invariance of $\rho$,
\[
\tr\left(\rho\TT_t((x+\theta y)^*(x+\theta y))-(x+\theta y)^*(x+\theta y)\right)=0
\]
and so, by faithfulness of $\rho$, $\TT_t((x+\theta y)^*(x+\theta y))=(x+\theta y)^*(x+\theta y)$.
By the arbitrariness of $\theta$, $\TT_t(x^*y)=x^*y$ for all $t\geq 0$.
Furthermore, $\mathcal{F}(\TT)$ is clearly closed for the weak$^*$ topology and so it is a von Neumann algebra which is
generated by its projections.
Any projection $p$ in $\mathcal{F}(\TT)$ is also a subharmonic projection, and so it is trivial by irreducibility.
Therefore $\mathcal{F}(\TT)=\mathbb{C}\unit$.

$2.\Rightarrow 1.$ Let $p$ be a subharmonic projection. We proved that, under the assumption that there exists a faithful normal invariant state, $p$ belongs to $\mathcal{F}(\TT)=\mathbb{C}\unit$. This shows that $p$ is either $0$ or $\unit$,
and $\TT$ is irreducible.

$2.\Rightarrow 3.$  Note that, for all $x\in \mathcal{B}(\hh)$,
the operators $t^{-1}\int_0^t \TT_s(x)\mathrm{d}s$ ($t>0$) are contained in the ball of radius $\Vert x\Vert_\infty$
which is sequentially weakly$^*$ compact, therefore we can find a divergent sequence $(t_m)_{m\geq 0}$ weakly$^*$ converging
to some limit. This belongs to $\mathcal{F}(\TT)$ by Lemma \ref{lem:invariant}, and so it is a multiple $c(x)\unit$ of the identity operator.

Note that, for all $t>0$,
\[
\tr(\rho x)= \tr\left(\rho\, \frac{1}{t}\int_0^t \TT_{r}(x)\mathrm{d}r\right)
\]
and so
\[
c(x) = \lim_{m\to\infty}\tr\left(\rho\, \frac{1}{t_m}\int_0^{t_m} \TT_{r}(x)\mathrm{d}r\right)
=\tr(\rho x).
\]
It follows that, for any invariant density $\eta$, we have
\[
\tr\left(\eta x\right)
= \tr\left(\eta\,
\left(\operatorname{w}^*-\lim_{m\to +\infty}\frac{1}{t_m}\!\int_0^{t_m} \TT_{r}(x)\mathrm{d}r\right) \right)
=c(x)\tr(\eta)=\tr(\rho x)
\]
and $\eta=\rho$ by the arbitrariness of $x$.

$3.\Rightarrow 1.$ By contradiction, if $\TT$ is not irreducible, then there exists a non-trivial subharmonic projection $p$. We showed that, since there exists a faithful normal invariant state, $p \in {\cal F}({\cal T})$.
Moreover, in order to prove $1 \Rightarrow 2.$, we demostrated that for every $x \in {\cal F}({\cal T})$ and for every $t \geq0$, one has that
$${\cal T}_t(x^*x)={\cal T}_t(x^*){\cal T}_t(x), \quad {\cal T}_t(xx^*)={\cal T}_t(x){\cal T}_t(x^*).$$
Lemma \ref{lem:Schw-map} implies that for every $t \geq 0$ and $x \in {\cal  B}(\hh)$, the following holds true:
$${\cal T}_t(qxr)=q{\cal T}_t(x)r, \quad q,r \in \{p,p^\perp\}.$$
By duality, the same can be shown to hold for the predual semigroup, i.e. for every $t \geq 0$ and $\omega \in \mathfrak{I}_1(\hh)$
$${\cal T}_{*t}(q\omega r)=q{\cal T}_{*t}(\omega )r, \quad q,r \in \{p,p^\perp\}.$$
Therefore, $p\rho p/\tr(\rho p)$ is another, non-faithful, invariant state
for $\TT$ contradicting $3.$

\end{proof}

Leveraging on Theorem \ref{th:equiv-irred}, one can actually show that the equivalence between $1.$ and $3.$ keeps holding true even without assuming the faithfulness of $\rho$.

\begin{proposition} \label{prop:irrprim}
Let $\TT$ be a QMS admitting an invariant density $\rho$. The following are equivalent:
\begin{enumerate}
\item[{1.}] $\TT$ is irreducible,
  \item[{2.}] $\rho$ is faithful and it is the unique invariant density, i.e. $\TT$ is primitive.
  \item[{3.}] $\TT_{t_0}$ is irreducible for all $t_0>0$.
  \item[{4.}]  $\TT_{t_0}$ is irreducible for some $t_0>0$.
\end{enumerate}
\end{proposition}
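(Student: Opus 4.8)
I would prove the cycle $1\Rightarrow 2\Rightarrow 3\Rightarrow 4\Rightarrow 1$, with the two non‑trivial arrows being $1\Rightarrow 2$ (already done once faithfulness of $\rho$ is extracted) and $1\Rightarrow 3$ (the genuinely new content). The implications $3\Rightarrow 4$ and $4\Rightarrow 1$ are free: $3\Rightarrow 4$ is trivial, and for $4\Rightarrow 1$ note that any common subharmonic projection $p$ of $(\TT_t)_{t\ge 0}$ satisfies in particular $\TT_{t_0}(p)\ge p$, so irreducibility of the single map $\TT_{t_0}$ forces $p\in\{0,\unit\}$ (this is just the remark after Definition \ref{def:irred}, in reverse).

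\textbf{The arrow $1\Leftrightarrow 2$.} If $\TT$ is irreducible, then by Proposition \ref{prop:supp-proj} the support projection $p$ of $\rho$ is subharmonic; since $\rho$ is a density we have $p\ne 0$, hence $p=\unit$ and $\rho$ is faithful. Theorem \ref{th:equiv-irred} then applies verbatim and yields that $\rho$ is the unique invariant density, i.e. statement $2$. The converse $2\Rightarrow 1$ is immediate from Theorem \ref{th:equiv-irred} (a faithful, unique invariant density forces irreducibility). In particular, from now on we may assume $\rho$ is faithful and $\mathcal F(\TT)=\mathbb C\unit$.

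\textbf{The arrow $1\Rightarrow 3$.} Fix $t_0>0$ and let $p$ be a projection with $\TT_{t_0}(p)\ge p$. Since $\rho$ is faithful and invariant, $\tr\!\big(\rho(\TT_{t_0}(p)-p)\big)=0$, and together with $\TT_{t_0}(p)-p\ge 0$ this forces $\TT_{t_0}(p)=p$; thus $p\in\mathcal F(\TT_{t_0})$ and $u\mapsto q_u:=\TT_u(p)$ is periodic of period $t_0$. The first key step is that \emph{each $q_u$ is an orthogonal projection}: writing $\TT_{t_0}=\TT_{t_0-u}\circ\TT_u$, the Schwarz property \eqref{eq:schwarz} of $\TT_{t_0-u}$ gives $\TT_{t_0-u}(q_u^2)\ge \TT_{t_0-u}(q_u)^2=p$, while $0\le q_u\le\unit$ gives $q_u^2\le q_u$ and hence $\TT_{t_0-u}(q_u^2)\le \TT_{t_0-u}(q_u)=p$; so $\TT_{t_0-u}(q_u-q_u^2)=0$, and evaluating the invariant faithful state $\rho$ on the positive operator $q_u-q_u^2$ yields $q_u=q_u^2$. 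Two consequences: (a) using periodicity and positivity of the $\TT_s$'s, $\bigvee_{u\ge 0}\TT_u(p)$ and $\unit-\bigwedge_{u\ge 0}\TT_u(p)=\bigvee_{u\ge 0}\TT_u(p^\perp)$ are both subharmonic for $\TT$, hence trivial; since $q_0=p$, if $p$ were nontrivial we would get $\bigvee_u\TT_u(p)=\unit$ and $\bigwedge_u\TT_u(p)=0$; (b) each $q_u$ lies in the multiplicative domain of every $\TT_s$, so $\TT$ restricts to a weak$^*$‑continuous, period‑$t_0$ one‑parameter group of $^*$‑automorphisms of the von Neumann algebra $\mathcal M$ generated by $\{\TT_u(p):u\ge 0\}$; moreover $\mathcal M\subseteq\mathcal F(\TT_{t_0})$ (which is a von Neumann algebra by the argument of Theorem \ref{th:equiv-irred} applied to the single Schwarz map $\TT_{t_0}$) and the fixed‑point algebra of this flow is contained in $\mathcal F(\TT)=\mathbb C\unit$. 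It remains to turn this ``ergodic periodic flow with faithful invariant state'' into $\mathcal M=\mathbb C\unit$, which forces $p=q_0\in\{0,\unit\}$; this is the point I expect to be the main obstacle. When $\TT$ is uniformly continuous it is clean: the flow on $\mathcal M$ is then norm‑continuous, its generator is a bounded $^*$‑derivation, hence inner and implemented by a self‑adjoint $h\in\mathcal M$ whose spectral projections are $\TT$‑fixed, so $h\in\mathcal F(\TT)=\mathbb C\unit$ and the flow is trivial, i.e. $\mathcal M\subseteq\mathcal F(\TT)=\mathbb C\unit$. In the general (only weak$^*$‑continuous) case one must rule out the persistence of such a nontrivial ``reversible'' piece; I would try to exploit that, because of uniqueness, any symmetry $\mathrm{Ad}(w)$ with $w$ a $\TT$‑peripheral eigenoperator must fix $\rho$, forcing $w$ to commute with $\rho$ and ultimately contradicting the structure of $\mathcal M$ together with complete positivity — but making this airtight is the delicate step, and it is where I would spend the bulk of the effort.

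\textbf{Closing the cycle.} Once $1\Rightarrow 3$ is established, together with the trivial $3\Rightarrow 4$, the already‑proved $4\Rightarrow 1$, and $1\Leftrightarrow 2$, all four statements are equivalent. I expect the write‑up to be short for everything except the final structural step in $1\Rightarrow 3$.
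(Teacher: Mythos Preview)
Your treatment of $1\Leftrightarrow 2$, $3\Rightarrow 4$ and $4\Rightarrow 1$ is correct and essentially the same as the paper's (your $4\Rightarrow 1$ is in fact simpler: the paper routes it through $\mathcal F(\TT)\subseteq\mathcal F((\TT_{nt_0})_n)=\mathbb C\unit$, whereas you just observe that a common subharmonic projection is in particular subharmonic for $\TT_{t_0}$).

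For $1\Rightarrow 3$ the paper takes a completely different route from yours. It does \emph{not} analyze the orbit $u\mapsto\TT_u(p)$ directly; instead it invokes the relaxation theorem (Theorem \ref{thm:relaxing}) as a forward reference: primitivity gives $\TT_{*t}(\eta)\to\rho$ in trace norm for every density $\eta$, so if $\TT_{t_0}$ were reducible it would (by the discrete-time version of Theorem \ref{th:equiv-irred}) possess a second invariant density $\rho'\neq\rho$, and then $\rho'=\lim_n\TT_{*nt_0}(\rho')=\rho$, a contradiction. This is short and works in full generality, at the price of relying on the structural analysis of $\mathcal M_r(\TT)$ carried out later in the paper. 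Your approach is more self-contained in spirit, and your observation that each $q_u=\TT_u(p)$ is a projection lying in the multiplicative domain of every $\TT_s$ is correct and nice. However, the ``delicate step'' you flag --- ruling out a nontrivial periodic automorphic piece when $\TT$ is only weak$^*$-continuous --- is precisely the content of Theorem \ref{thm:relaxing} (showing $\mathcal M_r(\TT)=\mathbb C\unit$), so you would end up reproving that result by another name; your proposal does not yet contain an argument for it.

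Two smaller issues with your sketch. First, claim (a) as written does not follow: from $\TT_s(P)\ge q_v$ for all $v$ you cannot conclude $\TT_s(P)\ge\bigvee_v q_v$, since the lattice supremum of projections is \emph{not} the supremum in the operator order (already in $\mathbb C^2$, $q_1+q_2\ge q_1,q_2$ but $q_1+q_2\not\ge q_1\vee q_2$ for non-orthogonal rank-one $q_i$). You would need (b) first to get that $\TT_s$ is a $*$-automorphism on $\mathcal M$, and then $\TT_s(\bigvee_u q_u)=\bigvee_u q_{u+s}$ follows. Second, even in the uniformly continuous case your argument appeals to Sakai's theorem that bounded derivations on a von Neumann algebra are inner, which is not in the elementary toolbox the paper is aiming for; and you need to check that the multiplicative domain of each $\TT_s$ is weak$^*$ closed (so that it contains the \emph{von Neumann} algebra $\mathcal M$, not just the C$^*$-algebra generated by the $q_u$).
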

The content of Proposition \ref{prop:irrprim} is shown in \cite{Wolf-QC&O} (Proposition 7.5) for finite dimensional systems. The proofs of some of the implications that we present below are simpler and hold in infinite dimensions as well.
\begin{proof}
$2. \Rightarrow 1.$ If ${\cal T}$ admits a unique faithful invariant state, then ${\cal T}$ is irreducible thanks to the implication $3. \Rightarrow 1.$ in Theorem \ref{th:equiv-irred}.

$1. \Rightarrow 2.$ On the other hand, let us now suppose that ${\cal T}$ is irreducible. The support projection of $\rho$ is subharmonic by Proposition \ref{prop:supp-proj} and, by irreducibility, it must coincide with the identity operator, therefore $\rho$ is faithful. The statement follows now from the implication $1. \Rightarrow 3.$ of Theorem \ref{th:equiv-irred}.

Before proving the equivalence of $1.$ and $3.$, we remark that the proof of the equivalence between $1.$ and $2.$ in Theorem \ref{th:equiv-irred} and the equivalence between $1.$ and $2.$ that we just proved hold for discrete time QMSs as well, i.e. powers of a single normal completely positive and unital map. Moreover, one can check that the proof of Theorem \ref{thm:relaxing} is independent from the equivalence between $1.$ and $3.$ that we are about to prove, therefore it does not cause any inconsistency to use it in this proof.

$1. \Rightarrow 3.$ If $\TT$ is irreducible, then we just showed that the invariant density $\rho$ is unique and strictly positive. Theorem \ref{thm:relaxing} implies that for every intial density $\eta$ one has
$$\lim_{t \rightarrow +\infty}\TT_{*t}(\eta)=\rho$$
in trace class norm. By contradiction, let us suppose that there exists a time $t_0>0$ such that $\TT_{t_0}$ is not irreducible, then, by Theorem \ref{th:equiv-irred} (applied to discrete time QMSs), $\TT_{t_0}$ admits another invariant density $\rho^\prime$ other than $\rho$, however we get to the contradiction that
$$\rho^\prime=\lim_{n \rightarrow +\infty}\TT_{*nt_0}(\rho^\prime)=\rho.$$

$3. \Rightarrow 4.$ is trivial.

$4. \Rightarrow 1.$ If $\TT_{t_0}$ is irreducible, then Theorem \ref{th:equiv-irred} (applied to discrete time QMSs) ensures that $\rho>0$ and that
$${\cal F}((\TT_{nt_0})_{n \geq 0}):=\{x \in {\cal B}(\hh): \TT_{nt_0}(x)=x, \, n \geq 0\}=\mathbb{C}\unit;$$
since ${\cal F}(\TT)\subseteq {\cal F}((\TT_{nt_0})_{n \geq 0}),$ we can conclude.
\end{proof}

\medskip
It is well known that, when the Hilbert space $\hh$ has finite dimension, one can always find an invariant density.
In fact, given an initial density $\eta$, thanks to the trace-norm compactness of the set of densities, one can find a
divergent sequence $(t_m)_{m\geq 1}$ such that $\Big(t_m^{-1}\int_0^{t_m}\TT_{*s}(\eta)\mathrm{d}s\Big)_{m\geq 1}$
converges in trace norm to a density $\rho$. It is then verified, as in the proof of Lemma \ref{lem:invariant}, that this limit is
an invariant density. We can then remove the hypothesis of the existence of an invariant density from Proposition \ref{prop:irrprim}
and find the following Corollary.

\begin{corollary}\label{cor:h-fin-dim}
If the Hilbert space $\hh$ is finite dimensional, then the following statements are equivalent:
\begin{enumerate}
\item[{1.}] $\TT$ is irreducible,
  \item[{2.}] $\TT$ is primitive.
  \item[{3.}] $\TT_{t_0}$ is irreducible for all $t_0>0$.
  \item[{4.}]  $\TT_{t_0}$ is irreducible for some $t_0>0$.
\end{enumerate}
\end{corollary}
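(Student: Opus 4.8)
The plan is to observe that the sole standing hypothesis of Proposition~\ref{prop:irrprim} — namely the existence of an invariant density — is automatic when $\dim\hh<+\infty$, so that the Corollary is just the specialization of that Proposition to the finite dimensional case.

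The first (and only substantive) step is therefore to produce an invariant density. Fix an arbitrary density $\eta$ and consider the Ces\`aro means $\eta_t:=t^{-1}\int_0^t\TT_{*s}(\eta)\,\mathrm{d}s$ for $t>0$; each $\eta_t$ is again a density, since $\TT_*$ is positive and trace preserving and the set of densities is convex. In finite dimension this set is compact for the trace norm, so along some divergent sequence $(t_m)_{m\ge 1}$ one has $\eta_{t_m}\to\rho$ in $\Vert\cdot\Vert_1$ for some density $\rho$. Repeating the computation in the proof of Lemma~\ref{lem:invariant}, but applied to the predual semigroup $\TT_*$ (the two boundary integrals being bounded in trace norm by $s\Vert\eta\Vert_1$), one obtains $\TT_{*t}(\rho)=\rho$ for every $t\ge 0$; thus $\rho$ is an invariant density. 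This is exactly the argument already sketched in the paragraph preceding the statement.

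The second step is simply to invoke Proposition~\ref{prop:irrprim} with this $\rho$: items 1, 3 and 4 there coincide word for word with items 1, 3 and 4 here, while item 2 there (``$\rho$ is faithful and it is the unique invariant density, i.e.\ $\TT$ is primitive'') is precisely the assertion that $\TT$ is primitive, i.e.\ item 2 here. Hence $1\Leftrightarrow 2\Leftrightarrow 3\Leftrightarrow 4$, which is the claim.

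I do not expect any real obstacle here: the compactness argument is elementary and the rest is a direct citation. The one point deserving a remark is that the density $\rho$ constructed above may a priori depend on the chosen $\eta$, whereas Proposition~\ref{prop:irrprim} is phrased for a fixed $\rho$; but this is harmless, since items 1, 3 and 4 make no reference to $\rho$ at all, and item 2 of that Proposition already yields uniqueness of the invariant density as soon as irreducibility holds, so no circularity is introduced.
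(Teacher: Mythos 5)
Your proposal is correct and follows essentially the same route as the paper: the paragraph preceding the Corollary constructs an invariant density exactly as you do (Ces\`aro means, trace-norm compactness of the set of densities in finite dimension, and the argument of Lemma~\ref{lem:invariant} applied to $\TT_*$), and then the Corollary is obtained by removing the standing hypothesis from Proposition~\ref{prop:irrprim}. Your closing remark on the harmlessness of the dependence of $\rho$ on $\eta$ is a sensible clarification but does not change the argument.
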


\section{Positivity improvement and relaxation (finite dimension)} \label{sec:findim}

When $\hh$ is finite dimensional, irreducibility is  equivalent to other properties that are frequently encountered in applications (see for instance \cite{Sp}).

\begin{definition}
A QMS is called
\begin{itemize}
  \item \emph{Positivity improving} if $x\geq 0$ and $x\not=0$ implies $\TT_t(x) > 0 $
  for all $t>0$.
  \item \emph{Ergodic (relaxing)} if for every density $\eta$, there exists a desnity $\rho$ such that
  $$\lim_{t \rightarrow +\infty}\TT_{*t}(\eta)=\rho$$
in trace class norm.
\end{itemize}
\end{definition}

We remark that $\rho$ appearing in the definition of ergodicity is an invariant density, since for every $t \geq 0$
$${\cal T}_{*t}(\rho)={\cal T}_{*t} \left (\lim_{s \rightarrow +\infty}{\cal T}_{*s}(\rho)\right )= \lim_{s \rightarrow +\infty}{\cal T}_{*(t+s)}(\rho)=\rho.$$

It is worth noticing that positivity improvement can be defined in an equivalent way for the predual QMS $\TT_*$.

One can prove the following result (\cite{Wolf-QC&O} Proposition 7.5 p.125), which clarifies all the implications between the various notions introduced so far in the finite dimensional case.

\begin{theorem}\label{th:irr-iff-prim-conv-inv-st}
If the Hilbert space $\hh$ is finite dimensional, then the following are equivalent
\begin{enumerate}
  \item $\TT$ is irreducible,
  \item $\TT$ is primitive,
  \item $\TT_{t_0}$ is irreducible for all $t_0>0$.
  \item  $\TT_{t_0}$ is irreducible for some $t_0>0$.
  \item $\TT$ is primitive and $\sigma(\mathcal{L})\cap {\rm i}\mathbb{R}=\{0\}$.
  \item $\TT$ is ergodic with limit state $\rho$ being faithful.
  \item $\TT$ is positivity improving.
\end{enumerate}
\end{theorem}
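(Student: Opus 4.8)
The plan is to take Corollary~\ref{cor:h-fin-dim} as given (so items 1--4 are already equivalent) and to fold in items 5, 6, 7 via the implications $2\Rightarrow5\Rightarrow6\Rightarrow2$ and $7\Rightarrow1$, $1\Rightarrow7$. For $2\Rightarrow5$: since $2\Leftrightarrow1$, $\TT$ is irreducible, hence admits a faithful invariant density $\rho$ and $\mathcal F(\TT)=\mathbb C\unit$ by Theorem~\ref{th:equiv-irred}. If $\mathcal L(x)=\mathrm i\lambda x$ with $\lambda\in\mathbb R$, $x\neq0$, then $\TT_t(x)=e^{\mathrm i\lambda t}x$, so by the Schwarz property $\TT_t(x^*x)\geq\TT_t(x)^*\TT_t(x)=x^*x$; testing against the faithful $\rho$ forces $\TT_t(x^*x)=x^*x\in\mathcal F(\TT)=\mathbb C\unit$, and likewise $xx^*\in\mathbb C\unit$, so $x$ is a scalar multiple of a unitary $u$ with $\TT_t(u^*u)=\TT_t(u)^*\TT_t(u)$ and $\TT_t(uu^*)=\TT_t(u)\TT_t(u)^*$. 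By the multiplicative-domain lemma (Lemma~\ref{lem:Schw-map}) $u$ lies in the multiplicative domain of each $\TT_t$, whence $\TT_t(u^n)=e^{\mathrm i n\lambda t}u^n$, i.e. $\mathrm i n\lambda\in\sigma(\mathcal L)$ for all $n\geq1$; finiteness of $\sigma(\mathcal L)$ forces $\lambda=0$, and since $\mathcal L(\unit)=0$ we get $\sigma(\mathcal L)\cap\mathrm i\mathbb R=\{0\}$.

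For $5\Rightarrow6$: $\TT_{*t}$ is a trace-norm contraction semigroup on the finite-dimensional space $\mathfrak I_1(\hh)$, so its peripheral eigenvalue $0$ is semisimple; combined with $\sigma(\mathcal L_*)\subseteq\{\operatorname{Re}\leq0\}$, $\sigma(\mathcal L_*)\cap\mathrm i\mathbb R=\{0\}$, and $\ker\mathcal L_*=\mathbb C\rho$ (dual to $\ker\mathcal L=\mathcal F(\TT)=\mathbb C\unit$, using $5\Rightarrow2\Leftrightarrow1$ and Theorem~\ref{th:equiv-irred}), standard spectral asymptotics give $\TT_{*t}=P_0+R(t)$ with $P_0$ the projection onto $\mathbb C\rho$ and $\|R(t)\|\leq Ce^{-\delta t}$; since $\TT_{*t}$ preserves the trace, $P_0(\eta)=\rho$ for every density $\eta$, so $\TT_{*t}(\eta)\to\rho$ with $\rho$ faithful. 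The converse $6\Rightarrow2$ is immediate: the limit $\rho$ is invariant (as already remarked after the definition) and faithful, and any invariant density $\eta$ satisfies $\eta=\TT_{*t}(\eta)\to\rho$, hence $\eta=\rho$, so $\TT$ is primitive. Finally $7\Rightarrow1$ is short: if $p$ is subharmonic then $\TT_t(p^\perp)=\unit-\TT_t(p)\leq p^\perp$, so if $p^\perp\neq0$ positivity improvement would give $\TT_t(p^\perp)>0$ for $t>0$, contradicting $\TT_t(p^\perp)\leq p^\perp<\unit$; hence $p=\unit$ and $\TT$ is irreducible.

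The main obstacle is $1\Rightarrow7$, where the delicate point is to get faithfulness of $\TT_{*t}(\eta)$ at \emph{every} $t>0$, not just asymptotically. I would argue as follows. From $1\Rightarrow6$ (via $1\Leftrightarrow2$, $2\Rightarrow5$, $5\Rightarrow6$) we know $\TT_{*t}(\eta)\to\rho>0$, so $\TT_{*t}(\eta)>0$ for all large $t$; in particular the real-analytic function $t\mapsto\det\TT_{*t}(\eta)$ is not identically zero, hence its zero set $D_\eta\subseteq(0,\infty)$ is discrete. Next, using the GKSL form \eqref{eq:GKSL-H} and Duhamel's formula, $\TT_{*t}(\eta)=e^{tG}\eta e^{tG^*}+\int_0^t e^{(t-s)G}\big(\sum_\ell L_\ell\TT_{*s}(\eta)L_\ell^*\big)e^{(t-s)G^*}\,\mathrm ds\geq e^{tG}\eta e^{tG^*}$, and since $e^{tG}$ is invertible this gives $\operatorname{rank}\TT_{*t}(\eta)\geq\operatorname{rank}\eta$; applying this along the semigroup, $t\mapsto\operatorname{rank}\TT_{*t}(\eta)$ is non-decreasing, and being also lower semicontinuous it is left-continuous. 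If some $t_1\in D_\eta$ existed, then $\TT_{*t}(\eta)$ would be faithful (rank $=\dim\hh$) on a punctured neighbourhood of $t_1$, so $\lim_{t\to t_1^-}\operatorname{rank}\TT_{*t}(\eta)=\dim\hh$ and by left-continuity $\operatorname{rank}\TT_{*t_1}(\eta)=\dim\hh$, contradicting $t_1\in D_\eta$. Hence $D_\eta=\emptyset$, i.e. $\TT_{*t}(\eta)>0$ for all $t>0$ and all densities $\eta$, which is positivity improvement for $\TT_*$, equivalently for $\TT$.

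In short, items 1--4 come for free, $2\Rightarrow5\Rightarrow6\Rightarrow2$ is a self-contained spectral argument, and $7\Rightarrow1$ is trivial; the real work is in $1\Rightarrow7$, whose crux is the monotonicity of $\operatorname{rank}\TT_{*t}(\eta)$ (resting on invertibility of the no-jump propagator $e^{tG}$) together with the analyticity input that makes the exceptional set $D_\eta$ discrete, so that monotonicity and left-continuity of the rank can rule it out entirely.
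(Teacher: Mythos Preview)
Your proof is correct (modulo the tiny slip in $7\Rightarrow1$: the conclusion should be $p\in\{0,\unit\}$, since your contradiction uses both $p^\perp\neq0$ and $p^\perp<\unit$, i.e.\ $p\neq0$). The cycle $2\Rightarrow5\Rightarrow6\Rightarrow2$ you run is essentially the paper's $3\Rightarrow5\Rightarrow6$ with the Perron--Frobenius group-structure argument inlined: your observation that $u\in\mathcal M(\TT_t)$ forces $\TT_t(u^n)=e^{in\lambda t}u^n$, hence $\{in\lambda\}_{n\ge1}\subset\sigma(\mathcal L)$, is exactly the content of Theorem~\ref{th:frobenius} specialised to this situation.

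The genuine difference is in the positivity-improving step. The paper proves $6\Rightarrow7$ via the explicit support formula of Lemma~\ref{lem:supp}: for any unit vector $v$ and any $t>0$, the support of $\TT_{*t}(|v\rangle\langle v|)$ equals $P_t\,\mathcal S(v)$ with $\mathcal S(v)$ independent of $t$; since $P_t=e^{tG}$ is invertible, the support dimension is constant on $(0,\infty)$, and ergodicity makes it full for large $t$, hence for all $t>0$. Your route avoids the Dyson-series support lemma entirely: from Duhamel you extract only the coarse inequality $\TT_{*t}(\eta)\ge e^{tG}\eta\,e^{tG^*}$, which together with invertibility of $e^{tG}$ gives monotonicity of $t\mapsto\operatorname{rank}\TT_{*t}(\eta)$; real-analyticity of $t\mapsto\det\TT_{*t}(\eta)$ (nonzero for large $t$) makes its zero set discrete, and monotonicity then forces it to be empty. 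In fact your argument can be shortened: once $D_\eta$ is discrete there is some $t\in(0,t_1)$ with full rank, and monotonicity alone gives full rank at $t_1$ --- the left-continuity detour is unnecessary. The paper's argument gives more (an explicit description of the support at every time, useful elsewhere in the paper), while yours is lighter and self-contained, needing neither Lemma~\ref{lem:support-H-sa} nor Lemma~\ref{lem:supp}.
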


\begin{proof}
The first four items are equivalent by Corollary \ref{cor:h-fin-dim}.

$3. \Rightarrow 5.$ If the map $\TT_t$ for all $t >0$, then, by the Perron-Frobenius theorem \ref{th:frobenius}, the eigenspace of an eigenvalue $\lambda_t$
with $|\lambda_t|=1$ is one-dimensional and eigenvectors are multiple of a unitary operator $u$.
Applying $\TT_s$ to the identity $\TT_t(u)=\lambda_t u$, we find
\[
\TT_{t}\left(\TT_s(u)\right) = \TT_{s}\left(\TT_t(u)\right)
= \lambda_t   \TT_s(u)
\]
which implies that $\TT_s(u)=\theta_s u$ for some $\theta_s$ with $|\theta_s|\leq 1$ for all $s>0$.
Continuity in $s$, the semigroup property and $\theta_0=1,\, \theta_t=\lambda_t$ with $|\lambda_t|=1$,
imply $\theta_s=\mathrm{e}^{\mathrm{i}\theta s}$ for some $\theta\in\mathbb{R}$. But $\mathrm{e}^{\mathrm{i}\theta s}$
belongs to a finite subgroup of $\mathbb{T}$ for all $s$ if and only if $\theta=0$.
Since all the other eigenvalues of $\TT_t$ have modulus strictly smaller than $1$ for all $t\geq 0$, it
follows that the generator $\mathcal{L}$ has eigenvalue $0$ with eigenvector $\unit$ and all the other
eigenvalues with strictly negative real part.

$5. \Rightarrow 6.$ The pre-dual generator $\mathcal{L}_*$ has the same
eigenvalues as $\mathcal{L}$, therefore $\TT_{*t}=\mathrm{e}^{t\mathcal{L}_*}$ has all eigenvalues of the
form $\mathrm{e}^{\lambda t}$ with $\operatorname{Re}(\lambda)<0$ except $1$. Therefore
$\TT_{*t}(\eta)$ converges to $\rho$ as $t$ goes to $+\infty$ for all initial density $\eta$.

$6. \Rightarrow 7.$ If we prove the statement for pure states, we are done.
Fix a normalized vector $v \in \hh$; since in finite dimensions the set of strictly positive matrices is open and,
by assumption, $\TT_t(|v\rangle\langle v|)$ converges to $\rho>0$ for large times, there exists $t_1>0$ such that $\TT_t(|v\rangle\langle v|)>0$ as well for every $t \geq t_1$. Lemma \ref{lem:supp} shows that the support of $\TT_t(|v\rangle\langle v|)$ for $t>0$ is given by $P_t \mathcal{S}(v)$, for a certain subspace
$\mathcal{S}(v)$ of $\hh$. Since $P_t$ is bijective, the dimension of the
support of $\TT_t(|\psi\rangle\langle v|)$ for $t>0$ is constant and it is maximum for $t \geq t_1$, we are done.

7. $\Rightarrow$ 1. If $\TT$ is positivity improving then, for all non-trivial projection $p$,
the kernel of $\TT_t(p^\perp)$ is $\{0\}$ and so the inequality $\TT_t(p^\perp)\leq p^\perp$
cannot hold.
\end{proof}


\section{Infinite dimensional case} \label{sec:infdim}

In this section we analyze the relationship between irreducibility and all the notions introduces so far -primitivity, positivity improvement and ergodicity- for a QMS on $\mathcal{B}(\hh)$
with $\hh$ infinite dimensional.
In this situation two fundamental difficulties emerge: $\TT$ may not possess any invariant density and $\TT$ could fail to be uniformly continuous.

\subsection{Primitivity}

There are many irreducible QMSs without invariant states, as the following simple example shows; therefore we cannot hope of removing the assumption in Proposition \ref{prop:irrprim} as in the case of finite dimensional systems.
\begin{example}\label{ex:bd-chain}
Let $\hh=\ell^2(\mathbb{N};\mathbb{C})$ be the space of square summable complex valued sequences which is naturally
isomorphic to the Fock space $\Gamma(\mathbb{C})$, with canonical
orthonormal basis $(e_k)_{k\geq 0}$. Let $S$ be the right shift defined by $Se_k=e_{k+1}$ and
let $L_1=S$, $L_2=S^*$. Consider the QMS generated by
\[
\mathcal{L}(x) = -\frac{1}{2}\sum_{\ell=1}^2 \left(L_\ell^*L_\ell x -2L_\ell^* x L_\ell + x L_\ell^* L_\ell\right)
\]
This is positivity improving because, if we denote $G=-(L_1^*L_1+L_2^*L_2)/2$, Lemma \ref{lem:supp} ensures that the support of any
$\TT_{*t}(|\psi\rangle\langle\psi|)$ is
\[
\mathrm{e}^{t G}\operatorname{span}_{\mathbb{C}}\left\{\, \psi, \
 \delta_{G}^{m_{1}}(L_{\ell_1})\delta_{G}^{m_{2}}(L_{\ell_2})\cdots
 \delta_{G}^{m_{n}}(L_{\ell_n})\psi \, \right\}
\]
for $n\geq 1$, $\ell_1\dots,\ell_n\in I$, $m_1,\dots, m_n \in \mathbb{N}$, which is $S,S^*$ (and so also $G$) invariant.
The only $S,S^*$ invariant subspaces of $\hh$ are trivial.
Therefore the QMS generated by $\mathcal{L}$ is positivity improving and irreducible. \\
To check that it has no invariant density we note that the restriction of $\mathcal{L}$
to the abelian von Neumann algebra $L^\infty(\mathbb{N};\mathbb{C})$ generated by rank-one projections $|e_n\rangle\langle e_n|$
determines a classical continuous time Markov chain without invariant density.
Let $\mathcal{E}$ be the conditional expectation
\[
\mathcal{E}:\mathcal{B}(\hh) \mapsto L^\infty(\mathbb{N};\mathbb{C}) \qquad
\mathcal{E}(x) = \sum_{n\geq 0} \left\langle e_n,x\, e_n\right\rangle\, |e_n\rangle\langle e_n|\, .
\]
Indeed, since $Ge_n=-e_n$ for $n>0$, $Ge_0=-e_0/2$, we have
\begin{eqnarray*}
(\mathcal{L}\circ\mathcal{E})(x) &=& \langle e_0,xe_0\rangle \left(|e_1\rangle \langle e_1|-|e_0\rangle \langle e_0| \right) \\
  & + & \sum_{n\geq 1}\langle e_n,xe_n\rangle \left(|e_{n+1}\rangle \langle e_{n+1}|-2|e_n\rangle \langle e_n|
  +|e_{n-1}\rangle \langle e_{n-1}|\right)\\
(\mathcal{E}\circ\mathcal{L})(x)  &=& \sum_{n\geq 0} \langle e_n,(G^*x+S^*xS+SxS^*+xG)e_n\rangle |e_n\rangle\langle e_n|\\
 & = & \left(\langle e_1,xe_1\rangle-\langle e_0,xe_0\rangle\right) |e_0\rangle \langle e_0| \\
 & + & \sum_{n\geq 1} \left(\langle e_{n+1},xe_{n+1}\rangle+\langle e_{n-1},xe_{n-1}\rangle
 -2 \langle e_n,xe_n\rangle\right)|e_n\rangle \langle e_n|
\end{eqnarray*}
and $(\mathcal{L}\circ\mathcal{E})(x) -(\mathcal{E}\circ\mathcal{L})(x)=0$. As a consequence
$\TT\circ\mathcal{E}=\mathcal{E}\circ\TT$ for all $t\geq 0$ and the restriction of
$\TT$ to $L^\infty(\mathbb{N};\mathbb{C})$ is the Markov semigroup of a classical
time-continuous birth-and-death chain with generator
\begin{eqnarray*}
(Af)(n) & = & \left(f(n+1)-f(n)\right) +\left(f(n-1)-f(n)\right)\quad \text{for} \quad n>0,\\
(Af)(0) & = & \left(f(1)-f(0)\right).
\end{eqnarray*}
This has no classical invariant density.

Suppose that $\rho$ is an invariant density for $\TT$. Note that, for all $x\in\mathcal{B}(\hh)$
and all $t\geq 0$, we have
\[
\tr(\mathcal{E}_*(\rho)\,x)=
\tr(\rho\,\mathcal{E}(x))=\tr(\rho\,\TT_t(\mathcal{E}(x)))
=\tr(\rho\,\mathcal{E}(\TT_t(x)))=\tr(\mathcal{E}_*(\rho)\TT_t(x))
\]
It follows that $\mathcal{E}_*(\rho)=\sum_{n\geq 1} \langle e_n,\rho e_n\rangle |e_n\rangle \langle e_n|$ would determine
the  classical invariant density $(\langle e_n,\rho e_n\rangle)_{n\geq 1}$ of the above birth-and-death chain.
Therefore $\TT$ has no invariant density.

\end{example}

\subsection{Ergodicity}

Let us now move to the property of the semigroup of relaxing towards a unique faithful density. When $\hh$ is finite dimensional, this follows from the Perron-Frobenius
theorem as shown in the proof Theorem \ref{th:irr-iff-prim-conv-inv-st}. First, we will provide an elementary proof for Ces\`aro means, i.e.
$$\frac{1}{t}\int_0^t \TT_{*s}(\eta)ds.$$

\begin{lemma}\label{lem:dom-dens}
Let $\TT$ be a QMS with a faithful invariant density $\rho$. For all initial density $\eta$ dominated by $\rho$
there  exists a divergent sequence $(t_m)_{m\geq 1}$ such that
$\left(t_m^{-1}\int_0^{t_m}\TT_{*r}(\eta)\mathrm{d}r\right)_{m\geq 1}$
converges in trace norm to a density $\eta_\infty$.
\end{lemma}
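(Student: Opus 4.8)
The plan is to obtain compactness of the Cesàro means in trace norm via an explicit uniform integrability estimate. The key observation is that if $\eta \leq \rho$ (as positive operators), then by positivity of $\TT_{*r}$ we also have $\TT_{*r}(\eta) \leq \TT_{*r}(\rho) = \rho$ for all $r \geq 0$, and hence each Cesàro mean $\sigma_t := t^{-1}\int_0^t \TT_{*r}(\eta)\,\mathrm{d}r$ satisfies $0 \leq \sigma_t \leq \rho$. Since $\rho$ is a fixed trace class (hence compact) operator, the set $\{\, \xi \in \mathfrak{I}_1(\hh) : 0 \leq \xi \leq \rho \,\}$ is norm-bounded in $\mathfrak{I}_1(\hh)$; the point is to show it is in fact \emph{trace-norm compact}. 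This is where the finite-rank approximation of $\rho$ enters: writing $\rho = \sum_k \lambda_k |f_k\rangle\langle f_k|$ with $\lambda_k \downarrow 0$, and letting $q_N$ be the spectral projection of $\rho$ onto its $N$ largest eigenvalues, one has $\tr((\unit - q_N)\rho) = \sum_{k>N}\lambda_k \to 0$. For any $\xi$ with $0 \leq \xi \leq \rho$ one then controls the ``tail'': $\tr((\unit-q_N)\xi) \leq \tr((\unit-q_N)\rho)$ and similarly $\|\xi - q_N\xi q_N\|_1$ is small uniformly in $\xi$ (using $0\le\xi\le\rho$ together with the Cauchy–Schwarz-type bound $\|(\unit-q_N)\xi\|_1 \le \tr(\xi)^{1/2}\,\tr((\unit-q_N)\xi(\unit-q_N))^{1/2} \le \tr((\unit-q_N)\rho)^{1/2}$). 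Thus the set $\{0\le\xi\le\rho\}$ is approximated uniformly in trace norm by its compressions to the finite-dimensional corners $q_N\mathcal{B}(\hh)q_N$, where norm-closed bounded sets are compact; a standard diagonal argument then yields trace-norm compactness of the whole set.

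Having established that $\{\,0\le\xi\le\rho\,\}$ is trace-norm compact, the conclusion is immediate: the sequence $(\sigma_m)_{m\ge1}$ lies in this compact set, so it has a trace-norm convergent subsequence $(\sigma_{t_m})_m$ with some limit $\eta_\infty$, and $0 \leq \eta_\infty \leq \rho$. It remains to check that $\eta_\infty$ is a density, i.e. $\tr(\eta_\infty) = 1$: this follows since trace-norm convergence preserves the trace, and $\tr(\sigma_t) = t^{-1}\int_0^t \tr(\TT_{*r}(\eta))\,\mathrm{d}r = \tr(\eta) = 1$ by trace-preservation of $\TT_*$. (That $\eta_\infty$ is moreover invariant for $\TT$ follows, if desired, exactly as in the proof of Lemma~\ref{lem:invariant} by pairing with an arbitrary $x \in \mathcal{B}(\hh)$ and using that trace-norm convergence of $\sigma_{t_m}$ implies weak convergence in $\mathfrak{I}_1(\hh)$, so that $\TT_t(\eta_\infty)$-type identities pass to the limit — though the statement only asks for a density.)

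The main obstacle is the trace-norm compactness of $\{\,\xi \in \mathfrak{I}_1(\hh) : 0 \leq \xi \leq \rho\,\}$: mere norm-boundedness in $\mathfrak{I}_1(\hh)$ gives only weak-$^*$ (or weak) sequential compactness, not trace-norm compactness, and in infinite dimensions these differ. The domination $\xi \le \rho$ by a \emph{fixed} trace class operator is exactly what upgrades weak compactness to norm compactness, via the uniform tail estimate above — this is the analytic heart of the argument, and also the precise reason the lemma is stated only for $\eta$ dominated by $\rho$ rather than for arbitrary initial densities. Everything else (the bound $\sigma_t \le \rho$, preservation of the trace) is routine from positivity and trace-preservation of the predual semigroup.
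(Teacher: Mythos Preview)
Your proposal is correct and follows essentially the same route as the paper: both arguments exploit the domination $\sigma_t \le b\rho$ together with finite-rank spectral projections of $\rho$, a H\"older/Cauchy--Schwarz tail estimate of the type $\|(\unit-q_N)\xi\|_1 \le \tr(\xi)^{1/2}\tr((\unit-q_N)\xi)^{1/2}$, and a diagonal extraction. The only cosmetic difference is that you package the argument as trace-norm compactness of the order interval $\{0\le\xi\le\rho\}$, whereas the paper works directly with the specific sequence of Ces\`aro means; the analytic content is identical.
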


\begin{proof}
Let $b>0$ be a constant such that $\eta\leq b\rho$.
Denote $\eta_t=t^{-1}\int_0^{t}\TT_{*r}(\eta)\mathrm{d}r$. Let $(e_j)_{j\geq 0}$ be an orthonormal basis
with respect to which $\rho$ is diagonal.

For all $n>0$ let $p_n$ be a finite dimensional spectral projection
of $\rho$ such that $\tr(p_n^\perp\rho)<1/bn$.

Note that, by finite dimensionality of $p_n$ for all $n>0$, $(p_n\eta_t p_n)_{t\geq 0}$ are non-negative matrices
with trace norm less or equal than $1$. Therefore we can find a sequence $(t_n(m))_{m\geq 0}$ such that
$(p_n\eta_{t_n(m)} p_n)_{m\geq 0}$ is trace norm converging.

By the diagonal method we can find a sequence $(t_n)_{n> 0}$ such that each $(p_n\eta_{t_n}p_n)_{n>0}$
trace norm converges as follows. For $n=1$ let $(t_1(m))_{m\geq 0}$ be a strictly increasing
sequence such that $p_1\eta_{t_1(m)}p_1$ converges to $\eta^{(1)}=p_1\eta^{(1)}p_1$.
Proceeding by induction, for all $n>0$, we can extract from the sequence $(t_n(m))_{m\geq 0}$
such that $p_n\eta_{t_n(m)}p_n$ converges to $\eta^{(n)}=p_n\eta^{(n)}p_n$, another subsequence
$(t_{n+1}(m))_{m\geq 0}$ such that  $p_{n+1}\eta_{t_{n+1}(m)}p_{n+1}$ converges to
$\eta^{(n+1)}=p_{n+1}\eta^{(n+1)}p_{n+1}$. Choosing $t_n:=t_n(n)$ we are done.

Clearly, $p_n\eta^{(n')}p_n = \eta^{(n)}$ for all $n'>n$. Therefore, for all $n'>n$,
\[
\eta^{(n)}-\eta^{(n')} = \left(\eta^{(n)}-\eta^{(n')}\right)p_n^\perp+p_n^\perp\left(\eta^{(n)}-\eta^{(n')}\right)p_n
\]
Notice that for a positive operator $a \geq 0$, one has $\|p^\perp a p\|_1 \leq \|ap^\perp\|_1$ (it can be easily seen by duality), therefore it follows that, for $n'>n$
\[
\left\Vert  \eta^{(n)}-\eta^{(n')}\right\Vert_1 \leq 2\left\Vert  \eta^{(n)}p_n^\perp\right\Vert_1
+2\left\Vert \eta^{(n')}p_n^\perp\right\Vert_1.
\]
Note that $\eta\leq b\rho$ implies $\eta_t\leq b\rho$ for all $t>0$ and, by H\"older inequality and the domination property, we have
\begin{eqnarray*}
 \left\Vert \eta^{(n)}p_n^\perp\right\Vert_1
   & = & \left\Vert (\eta^{(n)})^{1/2}(\eta^{(n)})^{1/2}p_n^\perp\right\Vert_1   \\
   & \leq & \left\Vert (\eta^{(n)})^{1/2}\right\Vert_2 \left\Vert (\eta^{(n)})^{1/2}p_n^\perp\right\Vert_2 \\
   & = & \left(\tr \left(p_n^\perp\eta^{(n)}p_n^\perp\right)\right)^{1/2} \\
   &\leq & \sqrt{b} \left(\tr \left(p_n^\perp\rho p_n^\perp\right)\right)^{1/2} < 1/\sqrt{n}
\end{eqnarray*}
In the same way we find $\left\Vert \eta^{(n')}p_n^\perp\right\Vert_1<1/\sqrt{n}$, and we get the inequality
\[
\left\Vert  \eta^{(n)}-\eta^{(n')}\right\Vert_1 < 4/\sqrt{n}.
\]
This implies that $(\eta^{(n)})_{n\geq 1}$ is a Cauchy sequence in $\mathfrak{I}_1(\hh)$. Denoting $\eta_\infty$ its limit
we have
\begin{eqnarray*}
\left\Vert \frac{1}{t_n}\int_0^{t_n}\TT_{*r}(\eta)\mathrm{d}r - \eta_\infty \right\Vert_1
& \leq & \left\Vert \frac{1}{t_n}\int_0^{t_n}\TT_{*r}(\eta)\mathrm{d}r - \eta^{(n)} \right\Vert_1
+ \left\Vert  \eta^{(n)}-\eta_\infty\right\Vert_1 \\
& \leq & \left\Vert \frac{1}{t_n}\int_0^{t_n}\TT_{*r}(\eta)\mathrm{d}r - \eta^{(n)} \right\Vert_1
+\frac{4}{\sqrt{n}}
\end{eqnarray*}
which becomes arbitrarily small as $n$ tends to infinity.
\end{proof}

By another approximation argument we can remove domination.

\begin{lemma}\label{lem:seq-ergod}
Let $\TT$ be a QMS with a faithful invariant density $\rho$. For all initial density $\eta$
there  exists a divergent sequence $(t_m)_{m\geq 1}$ such that
$\left(t_m^{-1}\int_0^{t_m}\TT_{*r}(\eta)\mathrm{d}r\right)_{m\geq 1}$
converges in trace norm to a density $\eta_\infty$.
\end{lemma}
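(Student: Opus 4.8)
\emph{Proof plan.} The strategy is to deduce the general case from Lemma~\ref{lem:dom-dens} by approximating an arbitrary density in trace norm by densities dominated by $\rho$, and then to glue together the Ces\`aro limits provided by that lemma via a diagonal extraction, using the contractivity of the predual semigroup on $\mathfrak{I}_1(\hh)$.

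First I would fix an orthonormal basis diagonalising $\rho$ and, for $n\geq 1$, let $p_n$ be the spectral projection of $\rho$ associated with the eigenvalues $\geq 1/n$. Since $\rho$ is trace class each $p_n$ is finite dimensional, since $\rho$ is faithful $p_n\uparrow\unit$ strongly, and since $p_n$ commutes with $\rho$ we have $\rho = p_n\rho p_n + p_n^\perp\rho p_n^\perp \geq p_n\rho p_n \geq n^{-1}p_n$, hence $p_n\leq n\rho$. Consequently, for any density $\eta$ the operator $p_n\eta p_n$ satisfies $p_n\eta p_n \leq \Vert\eta\Vert_\infty\, p_n \leq n\Vert\eta\Vert_\infty\,\rho$, and for $n$ large enough $\tr(p_n\eta)>0$, so that $\eta_n := p_n\eta p_n/\tr(p_n\eta)$ is a density dominated (up to a constant) by $\rho$. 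Moreover $\Vert p_n\eta p_n - \eta\Vert_1\to 0$ (a standard fact for trace class operators when $p_n\uparrow\unit$ strongly, as already used in Lemma~\ref{lem:dom-dens}) and $\tr(p_n\eta)\to 1$, so $\delta_n := \Vert\eta_n-\eta\Vert_1\to 0$.

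Next I would record that each $\TT_{*t}$, being completely positive and trace preserving, is a contraction on $\mathfrak{I}_1(\hh)$ (write a self-adjoint trace class operator as the difference of its positive and negative parts), and hence so is each Ces\`aro mean $M_t := t^{-1}\int_0^t \TT_{*r}(\cdot)\,\mathrm{d}r$. I would then construct the desired sequence by a diagonal argument: inspecting the proof of Lemma~\ref{lem:dom-dens}, one sees that its sequence can be extracted inside an arbitrary prescribed divergent sequence, because every extraction performed there (compactness of balls of finite-dimensional matrices) can be carried out within a given sequence. Applying this successively to $\eta_1,\eta_2,\dots$ yields nested divergent sequences $(t^{(k)}_m)_m$ such that $M_{t^{(k)}_m}(\eta_j)\to\eta_\infty^{(j)}$ in trace norm as $m\to\infty$ for every $j\leq k$, for suitable densities $\eta_\infty^{(j)}$; setting $t_m := t^{(m)}_m$ gives a divergent sequence along which $M_{t_m}(\eta_n)\to\eta_\infty^{(n)}$ for every fixed $n$.

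Finally I would check that $(\eta_\infty^{(n)})_n$ is Cauchy in $\mathfrak{I}_1(\hh)$: inserting $M_{t_m}(\eta_n)$ and $M_{t_m}(\eta_{n'})$ and using contractivity of $M_{t_m}$ on $\eta_n-\eta_{n'}$ gives, in the limit $m\to\infty$, $\Vert\eta_\infty^{(n)}-\eta_\infty^{(n')}\Vert_1 \leq \delta_n+\delta_{n'}$. Let $\eta_\infty$ denote its trace-norm limit, which is again a density since the set of densities is trace-norm closed. Then the triangle inequality together with contractivity, in the form
\[
\Vert M_{t_m}(\eta)-\eta_\infty\Vert_1 \leq \Vert M_{t_m}(\eta-\eta_n)\Vert_1 + \Vert M_{t_m}(\eta_n)-\eta_\infty^{(n)}\Vert_1 + \Vert\eta_\infty^{(n)}-\eta_\infty\Vert_1,
\]
with the first and third terms bounded by $\delta_n$ and $2\delta_n$ respectively, yields the claim upon choosing $n$ and then $m$ large. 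The only point requiring a moment's care is the remark that Lemma~\ref{lem:dom-dens} may be invoked along a preassigned divergent sequence; this is immediate from its proof, and everything else is a routine $\varepsilon/3$ bookkeeping.
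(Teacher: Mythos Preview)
Your proof is correct and follows essentially the same approach as the paper: approximate $\eta$ in trace norm by densities dominated by $\rho$, invoke Lemma~\ref{lem:dom-dens}, and use contractivity of the Ces\`aro means together with a triangle inequality. Your version is in fact more carefully written, since you make explicit the diagonal extraction (needed to obtain a single sequence $(t_m)$ working for all approximants) that the paper's proof leaves implicit behind its phrase ``by the arbitrariness of $\epsilon$.''
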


\begin{proof}
Since $\rho$ is faithful, the set of densities dominated by $\rho$ is trace-norm dense in the set of all densities
(truncate with spectral projections of $\rho$ and approximate).
For all $\epsilon >0$ let $\eta^{(\epsilon)}$ be an initial density dominated by $\rho$ such that
$\Vert\eta-\eta_\epsilon\Vert_1 < \epsilon/2$ and let $(t_m)_{m\geq 1}$ be a strictly increasing sequence such that
$\left(t_m^{-1}\int_0^{t_m}\TT_{*r}(\eta_\epsilon)\mathrm{d}r\right)_{m\geq 1}$ converges in trace norm
to $\eta^{(\epsilon)}_\infty$ as $m$ tends to infinity. Clearly, by Lemma \ref{lem:dom-dens},
\begin{eqnarray*}
& & \limsup_{m,n\to\infty} \left\Vert \frac{1}{t_m}\int_0^{t_m}\TT_{*r}(\eta)\mathrm{d}r -
\frac{1}{t_n}\int_0^{t_n}\TT_{*r}(\eta)\mathrm{d}r\right\Vert_1 \\
& & < \epsilon + \limsup_{m,n\to\infty} \left\Vert \frac{1}{t_m}\int_0^{t_m}\TT_{*r}(\eta_\epsilon)\mathrm{d}r -
\frac{1}{t_n}\int_0^{t_n}\TT_{*r}(\eta_\epsilon)\mathrm{d}r\right\Vert_1 = \epsilon.
\end{eqnarray*}
The conclusion follows by the arbitrariness of $\epsilon$.
\end{proof}

Finally, we show that convergence of a subsequence implies convergence of the family
$\left(t^{-1}\int_0^{t}\TT_{*r}(\eta)\mathrm{d}r\right)_{t>0}$ by a semigroup theoretic argument.

\begin{lemma}\label{lem:seq-to-cont}
Let $\eta$ be a density. If there exists a divergent sequence $(t_n)_{n\geq 1}$ such that $\left(t_n^{-1}\int_0^{t_n}\TT_{*r}(\eta)\mathrm{d}r\right)_{n\geq 1}$ trace-norm
converges to a density $\eta_\infty$, then $\eta_\infty$ is an invariant density for $\TT$ and
\[
\lim_{t\to\infty}\frac{1}{t}\int_0^t \TT_{*r}(\eta)\mathrm{d}r =\eta_\infty
\]
in trace norm.
\end{lemma}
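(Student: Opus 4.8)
The plan is to first show that $\eta_\infty$ is invariant and then to bootstrap convergence along the sequence $(t_n)$ into convergence of the whole Ces\`aro family by a standard semigroup argument. Throughout I write $M_t:=t^{-1}\int_0^t\TT_{*r}(\eta)\,\mathrm{d}r$ for $t>0$; each $M_t$ is again a density, being an average of the densities $\TT_{*r}(\eta)$.

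For invariance I would argue exactly as in Lemma \ref{lem:invariant}, but for the predual semigroup: for every $s\geq 0$, the semigroup property and linearity of $\TT_{*s}$ give
\[
\TT_{*s}(M_{t_n})=M_{t_n}+\frac{1}{t_n}\int_{t_n}^{t_n+s}\TT_{*r}(\eta)\,\mathrm{d}r-\frac{1}{t_n}\int_0^{s}\TT_{*r}(\eta)\,\mathrm{d}r ,
\]
and the last two terms have trace norm at most $s/t_n$, since $\Vert\TT_{*r}(\eta)\Vert_1=1$ for all $r$. Letting $n\to\infty$ and using that $\TT_{*s}$ is bounded, hence trace-norm continuous, we obtain $\TT_{*s}(\eta_\infty)=\eta_\infty$, so $\eta_\infty$ is an invariant density.

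The key to the second part is the convex decomposition obtained by splitting the integral defining $M_t$ at an intermediate time $s$ and using the semigroup property on $[s,t]$, namely $M_t=\tfrac{s}{t}M_s+\tfrac{t-s}{t}\TT_{*s}(M_{t-s})$ for $0<s<t$. Subtracting $\eta_\infty$ and using $\TT_{*s}(\eta_\infty)=\eta_\infty$ together with the fact that $\TT_{*s}$ is a $\Vert\cdot\Vert_1$-contraction (it is the predual of the Schwarz, unital, hence norm-contractive map $\TT_s$), the function $f(u):=\Vert M_u-\eta_\infty\Vert_1$ satisfies
\[
f(t)\leq \frac{s}{t}\,f(s)+\frac{t-s}{t}\,f(t-s),\qquad 0<s<t,
\]
and is bounded, $f(u)\leq\Vert M_u\Vert_1+\Vert\eta_\infty\Vert_1=2$.

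Finally I would iterate this inequality with $s=t_n$: an easy induction on $j$ shows $f(t)\leq \tfrac{jt_n}{t}f(t_n)+\tfrac{t-jt_n}{t}f(t-jt_n)$ whenever $jt_n<t$. Taking $j=\lfloor t/t_n\rfloor-1$, so that $t_n\leq t-jt_n<2t_n$, and bounding the last term crudely by $f(t-jt_n)\leq 2$, one gets
\[
f(t)\leq f(t_n)+\frac{4}{\lfloor t/t_n\rfloor}.
\]
Letting $t\to\infty$ gives $\limsup_{t\to\infty}f(t)\leq f(t_n)$, and then letting $n\to\infty$ gives $f(t)\to 0$, which is the assertion. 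I do not expect a genuine obstacle here: the only points that require a little care are recording that $\TT_{*s}$ is an honest trace-norm contraction and keeping the bookkeeping in the iteration straight; everything else is forced once the convex decomposition is in hand.
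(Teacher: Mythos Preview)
Your proof is correct. The invariance argument is the predual version of Lemma~\ref{lem:invariant}, exactly as in the paper. For the convergence of the whole family, however, you take a different route from the paper. The paper exploits the approximate invariance $\Vert\TT_{*s}(\eta_t)-\eta_t\Vert_1\leq 2s/t$, averages it over $s\in[0,t_{n_\epsilon}]$, and then uses the Fubini-type identity
\[
\frac{1}{t_{n_\epsilon}}\int_0^{t_{n_\epsilon}}\TT_{*s}(\eta_t)\,\mathrm{d}s
=\frac{1}{t}\int_0^{t}\TT_{*r}(\eta_{t_{n_\epsilon}})\,\mathrm{d}r
\]
to compare $\eta_t$ with $\eta_{t_{n_\epsilon}}$ (and hence with $\eta_\infty$) up to an error of order $t_{n_\epsilon}/t$. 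You instead observe the convex identity $M_t=\tfrac{s}{t}M_s+\tfrac{t-s}{t}\TT_{*s}(M_{t-s})$, deduce the subadditive-type inequality $f(t)\leq\tfrac{s}{t}f(s)+\tfrac{t-s}{t}f(t-s)$, and iterate it with $s=t_n$. Your argument avoids the double-average trick at the price of a short induction; the paper's argument is one-shot but leans on the commutation of the two Ces\`aro averages. Both are standard and elementary, and the bookkeeping in your iteration (the choice $j=\lfloor t/t_n\rfloor-1$ and the crude bound $f\leq 2$ on the remainder) is sound.
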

\begin{proof} $\eta_\infty$ is an invariant density for $\TT$ by Lemma \ref{lem:invariant}.
Let $\eta_t = t^{-1}\int_{0}^t\TT_{*s}(\eta_0)\mathrm{d}s $.
For all $\epsilon>0$ let $n_\epsilon>0$ such that
\[
\left\Vert \eta_{t_n}-\eta_\infty\right\Vert_1=
\left\Vert \frac{1}{t_n}\int_0^{t_n} \TT_{*s}(\eta_0)\mathrm{d}s - \eta_\infty \right\Vert_1 < \epsilon
\]
for all $n\geq n_\epsilon$. Now, for all $t>t_{n_\epsilon}$, $n\geq n_\epsilon$ and $0\leq s\leq t_{n_\epsilon}$, we have
\[
\TT_{*s}\left(\frac{1}{t}\int_0^{t} \TT_{*r}(\eta_0)\mathrm{d}r\right)
- \frac{1}{t}\int_0^{t} \TT_{*r}(\eta_0)\mathrm{d}r
= \frac{1}{t}\int_t^{s+t } \TT_{*r}(\eta_0)\mathrm{d}r -
\frac{1}{t}\int_0^{s  } \TT_{*r}(\eta_0)\mathrm{d}r
\]
and so
\[
\left\Vert \TT_{*s}\left(\eta_t\right)
- \eta_t\right\Vert_1
\leq \frac{2 s}{t} \leq \frac{2 t_{n_\epsilon}}{t}.
\]
Fix a $t(\epsilon)>0$, depending on $\epsilon$ and $n_\epsilon$, such that $2 t_{n_\epsilon}/t \, < \epsilon$.
Integrating for $s\in[0,t_{n_\epsilon}]$ and dividing by $t_{n_\epsilon}$, we find
\[
\left\Vert \frac{1}{t_{n_\epsilon}}\int_0^{t_{n_\epsilon}}\TT_{*s}(\eta_t)\mathrm{d}r- \eta_t \right\Vert_1 <\epsilon.
\]
Notice that
$$\frac{1}{t_{n_\epsilon}}\int_0^{t_{n_\epsilon}}\TT_{*s}(\eta_t)\mathrm{d}r=\frac{1}{t}\int_0^{t}\TT_{*s}(\eta_{t_{n_\epsilon}})\mathrm{d}r.$$
It follows that, for all $t>t(\epsilon)$,
\begin{eqnarray*}
\left\Vert \eta_t - \eta_\infty \right\Vert_1
& = & \left\Vert \frac{1}{t}\int_0^{t} \TT_{*r}(\eta_0)\mathrm{d}r
-\frac{1}{t}\int_0^{t} \TT_{*r}(\eta_\infty)\mathrm{d}r\right\Vert_1  \\
& \leq & \left\Vert \frac{1}{t}\int_0^{t} \TT_{*r}(\eta_0-\eta_{t_{n_\epsilon}})\mathrm{d}r\right\Vert_1
+  \left\Vert \frac{1}{t}\int_0^{t} \TT_{*r}(\eta_\infty-\eta_{t_{n_\epsilon}})\mathrm{d}r\right\Vert_1 \\
& \leq & \left\Vert \eta_t-\eta_{t_{n_\epsilon}}\right\Vert_1
+  \left\Vert \eta_\infty-\eta_{t_{n_\epsilon}}\right\Vert_1 < 2\epsilon.
\end{eqnarray*}
This completes the proof.
\end{proof}

The results of Lemmas \ref{lem:dom-dens}, \ref{lem:seq-ergod}, \ref{lem:seq-to-cont} are summarized in the following

\begin{theorem}\label{th:conv-prim}
Let $\TT$ be a primitive QMS. For every initial density $\eta$,
the family of densities $\left(t^{-1}\int_0^t\TT_{*s}(\eta)\mathrm{d}s\right)_{t>0}$ is trace norm convergent to $\rho$.
\end{theorem}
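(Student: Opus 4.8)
\section*{Proof proposal}

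The plan is to assemble the three preceding lemmas. Since $\TT$ is primitive it possesses, by definition, a unique invariant density, and that density $\rho$ is faithful; in particular the hypotheses of Lemmas \ref{lem:dom-dens}, \ref{lem:seq-ergod} and \ref{lem:seq-to-cont} are met. First I would fix an arbitrary initial density $\eta$ and invoke Lemma \ref{lem:seq-ergod} to produce a divergent sequence $(t_m)_{m\geq 1}$ along which the Ces\`aro means $t_m^{-1}\int_0^{t_m}\TT_{*r}(\eta)\mathrm{d}r$ converge in trace norm to some density $\eta_\infty$.

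Next I would feed this convergent subsequence into Lemma \ref{lem:seq-to-cont}, which yields two conclusions: that $\eta_\infty$ is an invariant density for $\TT$, and that the entire family $\left(t^{-1}\int_0^t\TT_{*r}(\eta)\mathrm{d}r\right)_{t>0}$ converges in trace norm to $\eta_\infty$ as $t\to\infty$. Finally, primitivity forces the invariant density to be unique, so $\eta_\infty=\rho$, which is exactly the asserted convergence. (As a side remark, uniqueness of the limit across different extracting sequences is automatic here: by Lemma \ref{lem:seq-to-cont} every such limit is an invariant density, hence equals $\rho$.)

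There is essentially no obstacle: the theorem is a bookkeeping corollary of the lemmas, and the only point worth stating explicitly is that "primitive" supplies both the faithfulness of $\rho$ needed to run Lemmas \ref{lem:dom-dens}--\ref{lem:seq-ergod} and the uniqueness of the invariant density needed to pin down the limit. If one wanted a fully self-contained write-up, the single substantive ingredient to re-expose would be the density-by-domination approximation used inside Lemma \ref{lem:seq-ergod}, together with the Cauchy estimate $\Vert\eta^{(n)}-\eta^{(n')}\Vert_1<4/\sqrt{n}$ from Lemma \ref{lem:dom-dens}; but since those are already established earlier, the proof here can be kept to a few lines.
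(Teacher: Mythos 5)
Your proposal is correct and follows exactly the route the paper intends: Theorem \ref{th:conv-prim} is stated there precisely as the summary of Lemmas \ref{lem:dom-dens}, \ref{lem:seq-ergod} and \ref{lem:seq-to-cont}, with primitivity supplying the faithful invariant density needed for the lemmas and the uniqueness that identifies the limit as $\rho$. Nothing is missing; your closing remark on uniqueness of the limit is the same observation the paper relies on implicitly.
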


\medskip
Under the assumptions of Theorem \ref{th:conv-prim}, one may wonder whether $\TT_{*t}(\eta)$ converges to $\rho$.
We have no direct elementary proof, so we will rely on some deep results without proving them and refer to the literature. First of all we need to introduce the so called reversible subalgebra $\mathcal{M}_r(\TT)$ as the weak* closure of the following linear space:
\[
{\rm span}_{\mathbb{C}}\left\{\, x\in\mathcal{B}(\hh)\,\mid\,
\text{for all } t\geq 0, \, \TT_t(x)=\lambda_t x \text{ for some } \lambda_t \in \mathbb{C}, \, |\lambda_t|=1\,\right\}.
\]
One can show the following (see Lemma 1 and Theorem 1 in \cite{CJ20} and Propositions 2.1 and 2.2, noticing that everything extends to the continuous time case).
\begin{theorem}\label{thm:dfs}
Let us assume that ${\cal T}$ admits a faithful invariant density $\rho$, then
$\MR$ is a von Neumann algebra on which $\TT$ acts as a semigroup of *-automorphisms. Moreover, if we denote by $B_1$ and $M_1$ the unit ball of ${\cal B}(\hh)$ and $\MR$, respectively, the following equality holds true:
$$M_1=\bigcap_{t \geq 0} \TT_t(B_1).$$
Finally, there exists a unitary operator $U$ acting on $\hh$ such that
$$U:\hh \rightarrow \bigoplus_{\alpha \in A}\hh_\alpha^L \otimes \hh_{\alpha}^{R}$$ for some denumerable index set $A$ and
\begin{equation} \label{eq:stNT}
\MR=U^*\bigoplus_{\alpha \in A}B(\hh^L_\alpha) \otimes \mathbf{1}_{\hh_\alpha^R} U. \end{equation}
\end{theorem}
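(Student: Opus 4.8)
The plan is to establish the three assertions in turn. \emph{(i) $\MR$ is a von Neumann algebra and each $\TT_t$ restricts to a $^*$-automorphism.} I would prove this directly from the Schwarz inequality \eqref{eq:schwarz} and the faithfulness of $\rho$. Let $x$ be an eigenoperator, say $\TT_t(x)=\lambda_t x$ with $|\lambda_t|=1$ for all $t\geq 0$. Since the $\TT_t$ are $^*$-preserving, $\TT_t(x^*)=\overline{\lambda_t}\,x^*$, so $x^*$ is again an eigenoperator; moreover \eqref{eq:schwarz} gives $\TT_t(x^*x)\geq \TT_t(x)^*\TT_t(x)=x^*x$ and $\TT_t(xx^*)\geq xx^*$, while invariance of $\rho$ yields $\tr(\rho\,\TT_t(x^*x))=\tr(\rho\,x^*x)$ and $\tr(\rho\,\TT_t(xx^*))=\tr(\rho\,xx^*)$, so by faithfulness of $\rho$ the two differences vanish: $\TT_t(x^*x)=x^*x=\TT_t(x)^*\TT_t(x)$ and $\TT_t(xx^*)=xx^*=\TT_t(x)\TT_t(x)^*$. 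Thus every eigenoperator lies in the multiplicative domain of every $\TT_t$, and Lemma \ref{lem:Schw-map} shows that $\TT_t$ is multiplicative on the linear span of the eigenoperators; that span is therefore a unital $^*$-subalgebra of ${\cal B}(\hh)$ (products of eigenoperators being eigenoperators with modulus-one eigenvalues) on which $\TT_t$ acts as a $^*$-homomorphism. By weak$^*$-continuity of the $\TT_t$, the weak$^*$-closure $\MR$ of this span is a von Neumann algebra and $\TT_t|_{\MR}$ is a unital normal $^*$-endomorphism; it is injective because $\TT_t(x^*x)=x^*x$ and $\rho$ is faithful, and surjective because each eigenoperator $x$ equals $\TT_t(\overline{\lambda_t}\,x)$ with $\overline{\lambda_t}\,x\in\MR$, so its range — the image of a normal $^*$-homomorphism, hence a von Neumann algebra — contains the weak$^*$-dense set of eigenoperators. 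Hence $\TT_t|_{\MR}$ is a $^*$-automorphism of $\MR$.

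\emph{(ii) The ball identity.} If $x\in M_1$, then, $\TT_t|_{\MR}$ being a surjective isometry of $\MR$, we have $x=\TT_t(y)$ for some $y\in\MR$ with $\|y\|=\|x\|\leq 1$; thus $M_1\subseteq\bigcap_{t\geq 0}\TT_t(B_1)$. For the reverse inclusion I would first record that $\big(\TT_t(B_1)\big)_{t\geq 0}$ is a decreasing net of nonempty weak$^*$-compact convex sets, since $\TT_t(B_1)=\TT_s(\TT_{t-s}(B_1))\subseteq\TT_s(B_1)$ for $s\leq t$; hence $K:=\bigcap_{t\geq 0}\TT_t(B_1)$ is nonempty, weak$^*$-compact and convex, and the finite intersection property applied to the nested weak$^*$-compact sets $\{y\in B_1:\TT_s(y)=x\}\cap\TT_t(B_1)$ shows that $\TT_s(K)=K$ for every $s\geq 0$. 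The substantive point is that $K=M_1$; this is the continuous-time counterpart of the discrete-time theorem of \cite{CJ20} identifying $\bigcap_{n\geq 0}\Phi^n(B_1)$ with the unit ball of the peripheral subalgebra of a channel $\Phi$ with faithful invariant density, and it can equally be obtained from the Jacobs--de Leeuw--Glicksberg decomposition of the weak$^*$-continuous contraction semigroup $\TT$, whose reversible (almost-periodic) part is exactly $\MR$.

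\emph{(iii) The spatial decomposition.} Equation \eqref{eq:stNT} is precisely the statement that $\MR$ is an atomic von Neumann algebra: granted atomicity, the classification of atomic von Neumann algebras acting on a separable Hilbert space provides a unitary $U:\hh\to\bigoplus_{\alpha\in A}\hh_\alpha^L\otimes\hh_\alpha^R$ under which $\MR=U^*\big(\bigoplus_{\alpha}B(\hh_\alpha^L)\otimes\unit_{\hh_\alpha^R}\big)U$. That the peripheral subalgebra of a QMS with a faithful normal invariant density is atomic is what \cite{CJ20} establishes in discrete time; the argument uses that $\rho|_{\MR}$ is a faithful normal state left invariant by the $^*$-automorphism group $(\TT_t|_{\MR})_{t\geq 0}$, whose action has pure point spectrum by the very definition of $\MR$, and it carries over to the one-parameter case without essential change.

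The step I expect to be the main obstacle is the reverse inclusion in (ii) together with the atomicity in (iii): these are the only genuinely non-trivial inputs, and I would import them from the discrete-time analysis of \cite{CJ20} applied to the individual channels $\TT_{t_0}$, $t_0>0$, and then glue the pieces across $t_0$. The delicate point in that gluing is matching the peripheral data of the $\TT_{t_0}$ with the point spectrum of the flow — concretely, that a unimodular eigenvalue function $t\mapsto\lambda_t$ of the semigroup is automatically of the form $\mathrm{e}^{\mathrm i\mu t}$ — which relies on the weak$^*$-continuity of $t\mapsto\TT_t$.
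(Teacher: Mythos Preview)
The paper does not actually prove Theorem~\ref{thm:dfs}: it merely records the statement and cites \cite{CJ20} (Lemma~1, Theorem~1, Propositions~2.1--2.2), remarking that the discrete-time arguments there extend to continuous time. Your proposal follows the same route for the substantive parts --- you correctly isolate the reverse inclusion in (ii) and the atomicity in (iii) as the non-trivial inputs and defer them to \cite{CJ20}, exactly as the paper does --- so on those points you and the paper coincide.

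Where you go further than the paper is in supplying a direct argument for (i). That argument is sound, with one small technical point to make explicit: when you pass from multiplicativity of $\TT_t$ on the linear span of eigenoperators to multiplicativity on its weak$^*$-closure $\MR$, you cannot simply invoke ``weak$^*$-continuity of $\TT_t$'', since multiplication is not jointly weak$^*$-continuous. The clean fix is to use the bimodule identity from Lemma~\ref{lem:Schw-map}: for each eigenoperator $x$ and every $a\in\mathcal{B}(\hh)$ one has $\TT_t(xa)=\TT_t(x)\TT_t(a)$, and for fixed $a$ the map $y\mapsto \TT_t(ya)-\TT_t(y)\TT_t(a)$ is weak$^*$-continuous in $y$ (separate weak$^*$-continuity of multiplication plus normality of $\TT_t$), hence vanishes on all of $\MR$. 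This yields multiplicativity on $\MR$ and closes the gap.
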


The reversible subalgebra plays a fundamental role in our investigations due to the following simple observation.

\begin{lemma} \label{lem:acc}
Let us consider a QMS $\TT$ with a faithful invariant density $\rho$. Then, for every $x \in {\cal B}(\hh)$, $\MR$ contains every accumulation point of the net $(\TT_t(x))_{t \geq 0}$.
\end{lemma}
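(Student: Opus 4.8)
The plan is to deduce this directly from Theorem~\ref{thm:dfs}, and precisely from the identity $M_1=\bigcap_{t\geq 0}\TT_t(B_1)$, where $B_1$ and $M_1$ denote the unit balls of $\mathcal{B}(\hh)$ and $\MR$ respectively. By homogeneity one reduces at once to the case $x\in B_1$; since each $\TT_t$ is positive and unital, it is a norm contraction (via the Schwarz inequality \eqref{eq:schwarz}), so the whole net $(\TT_t(x))_{t\geq 0}$ stays inside the weak$^*$-compact ball $B_1$. It then suffices to prove that every weak$^*$ accumulation point $y$ of this net lies in $\TT_s(B_1)$ for each $s\geq 0$, for this yields $y\in\bigcap_{s\geq 0}\TT_s(B_1)=M_1\subseteq\MR$.

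So I would fix $s\geq 0$ and choose a subnet $(\TT_{t_\alpha}(x))_\alpha$ converging weak$^*$ to $y$. By the definition of a subnet of the net indexed by $[0,\infty)$, one has $t_\alpha\to\infty$, hence $t_\alpha\geq s$ eventually, and there we may use the semigroup property to write $\TT_{t_\alpha}(x)=\TT_s\big(\TT_{t_\alpha-s}(x)\big)$. The shifted net $\big(\TT_{t_\alpha-s}(x)\big)_\alpha$ again lies in $B_1$, so by Banach--Alaoglu it has a subnet converging weak$^*$ to some $z\in B_1$; along this further subnet we still have $\TT_{t_\alpha}(x)\to y$, while normality (weak$^*$-continuity) of $\TT_s$ forces $\TT_s\big(\TT_{t_\alpha-s}(x)\big)\to\TT_s(z)$. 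Therefore $y=\TT_s(z)\in\TT_s(B_1)$, and letting $s$ vary over $[0,\infty)$ concludes the argument; the general case $x\in\mathcal{B}(\hh)$ follows by rescaling.

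I do not anticipate a genuine obstacle here: the only point requiring minor care is the double passage to subnets (one to realize the accumulation point, a second to make the time-shifted net converge), which is harmless because a subnet of a weak$^*$-convergent net converges to the same limit and $B_1$ is weak$^*$-compact; and the splitting $\TT_{t_\alpha}(x)=\TT_s\big(\TT_{t_\alpha-s}(x)\big)$ only invokes the semigroup law for indices past $s$, which is available precisely because $t_\alpha\to\infty$.
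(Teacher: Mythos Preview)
Your proposal is correct and follows essentially the same route as the paper: both reduce to $x\in B_1$, invoke the identity $M_1=\bigcap_{t\geq 0}\TT_t(B_1)$ from Theorem~\ref{thm:dfs}, and use that $\TT_t(B_1)\subseteq\TT_s(B_1)$ for $t\geq s$ to place any accumulation point in each $\TT_s(B_1)$. The only difference is cosmetic: the paper states the nesting and concludes in one line (implicitly relying on the weak$^*$-compactness of $\TT_s(B_1)$), whereas you spell out that step explicitly via the double subnet and the normality of $\TT_s$.
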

\begin{proof}
Without loss of generality, we can assume that $\|x\| \leq 1$. Notice that for every $t \geq s$, $\TT_t(B_1) \subseteq \TT_s(B_1)$, indeed $\TT_t(B_1) =\TT_s(\TT_{t-s}(B_1))$ and $\TT_{t-s}(B_1) \subseteq B_1$; hence,
$$\TT_t(B_1) \subseteq \bigcap_{s \leq t}\TT_s(B_1).$$
Therefore, if we consider any accumulation point $\overline{x}$ of the net $(\TT_t(x))_{t \geq 0}$, we have that $\overline{x} \in \bigcap_{t \geq 0}\TT_t(B_1)=M_1$.
\end{proof}

The previous Lemma points out that a natural strategy for proving that primitivity implies relaxation towards a unique faithful invariant density is to show that primitivity implies $\MR=\mathbb{C}\unit$ and this is, indeed, what we are going to do.

\begin{theorem} \label{thm:relaxing}
If ${\cal T}$ is primitive with unique invariant density $\rho>0$, then $\MR=\mathbb{C}\unit$ and for every density $\eta$
$$\lim_{t \rightarrow +\infty}{\cal T}_{*t}(\eta)=\rho$$
in trace class norm.
\end{theorem}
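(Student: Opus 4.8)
The theorem asserts two things — the algebraic identity $\MR=\mathbb C\unit$ and, building on it, the trace-norm convergence $\TT_{*t}(\eta)\to\rho$ — and the plan is to put essentially all of the work into the first and then read off the second. As a preliminary observation, since a primitive QMS has by definition a faithful invariant density, Theorem \ref{th:equiv-irred} (equivalence of its items $2$ and $3$) yields $\mathcal F(\TT)=\mathbb C\unit$; moreover $\mathcal F(\TT)\subseteq\MR$, a fixed point being an eigenoperator for the eigenvalue $1$. Note that this route only uses the part of Theorem \ref{th:equiv-irred} that does not depend on the present theorem, so there is no circularity.

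To prove $\MR=\mathbb C\unit$ I would invoke Theorem \ref{thm:dfs}: $\MR$ is a von Neumann algebra on which $(\TT_t)_{t\ge0}$ acts by $^*$-automorphisms, and by \eqref{eq:stNT} it is a direct sum $\bigoplus_{\alpha\in A}B(\hh^L_\alpha)\otimes\mathbf 1_{\hh^R_\alpha}$ of type I factors, in particular with atomic centre, minimal central projections $(z_\alpha)_{\alpha\in A}$, and $A$ at most countable. Each $\TT_t$ preserves the centre and hence permutes the $z_\alpha$'s, so $t\mapsto\TT_t(z_\alpha)$ is a weak$^*$-continuous map of the connected set $[0,\infty)$ into $\{z_\beta:\beta\in A\}$. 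The latter is weak$^*$-discrete: if distinct $z_{\beta_n}$ converged weak$^*$ to $z_\gamma$, pairing with a unit vector in the range of $z_\gamma$ would force $\langle\psi,z_{\beta_n}\psi\rangle\to1$, contradicting $z_{\beta_n}z_\gamma=0$. Hence each orbit map is constant and every $z_\alpha$ lies in $\mathcal F(\TT)=\mathbb C\unit$, so $A$ is a singleton and $\MR$ is a type I factor $B(\hh^L)\otimes\mathbf 1_{\hh^R}$. Identifying it with $B(\hh^L)$, we have $\TT_t|_{\MR}=\mathrm{Ad}(v_t)$ for unitaries $v_t$ on $\hh^L$, whose common commutant $\{v_t:t\ge0\}'$ is the fixed-point algebra $\mathcal F(\TT)=\mathbb C\unit$. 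Invariance of $\rho$ forces the density $\sigma$ obtained from $\rho$ by partial trace over $\hh^R$ to commute with every $v_t$, hence $\sigma\in\{v_t:t\ge0\}'=\mathbb C\unit$; a scalar density can only live in finite dimension, and there, writing $v_t=\mathrm e^{\mathrm i tH}$ after the standard lifting of the projective one-parameter group $\mathrm{Ad}(v_\cdot)$, the abelian algebra $\{H\}''=\{v_t:t\}''$ equals $B(\hh^L)$ only if $\dim\hh^L=1$. Therefore $\MR=\mathbb C\unit$.

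The convergence then follows by soft arguments. Fix $x\in\mathcal B(\hh)$; the net $(\TT_t(x))_{t\ge0}$ lies in the weak$^*$-compact ball of radius $\Vert x\Vert$, and by Lemma \ref{lem:acc} every one of its weak$^*$-cluster points lies in $\MR=\mathbb C\unit$, hence equals $c\,\unit$ for some $c\in\mathbb C$. Invariance of $\rho$ gives $\tr(\rho\,\TT_t(x))=\tr(\rho x)$ for all $t$, so along any subnet converging to $c\,\unit$ one has $c=\tr(\rho x)$; the net thus has the single cluster point $\tr(\rho x)\unit$ and therefore converges weak$^*$ to it. Dualising, $\tr(\TT_{*t}(\eta)\,x)\to\tr(\rho x)\,\tr(\eta)$ for every $\eta\in\mathfrak I_1(\hh)$ and $x\in\mathcal B(\hh)$, i.e. $\TT_{*t}(\eta)\to\rho$ weakly in $\mathfrak I_1(\hh)$ for every density $\eta$. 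It remains to upgrade this to the trace norm, which is standard: given $\varepsilon>0$, pick a finite-rank spectral projection $p$ of $\rho$ with $\tr(\rho p)>1-\varepsilon$; weak convergence gives $\tr(\TT_{*t}(\eta)p)>1-\varepsilon$ for large $t$, whence $\Vert(\unit-p)\TT_{*t}(\eta)^{1/2}\Vert_2$ and $\Vert(\unit-p)\rho^{1/2}\Vert_2$ are of order $\sqrt\varepsilon$, so $\TT_{*t}(\eta)$ and $\rho$ are trace-norm close to their compressions by $p$, and on the finite-dimensional corner $p\,\mathcal B(\hh)\,p$ weak convergence is trace-norm convergence. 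Letting $\varepsilon\to0$ gives $\Vert\TT_{*t}(\eta)-\rho\Vert_1\to0$.

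The substantive step is the identity $\MR=\mathbb C\unit$; everything after it is compactness and duality. The difficulty there is that "an automorphism group with a faithful invariant state and trivial fixed-point algebra" does not by itself force the underlying algebra to be trivial (irrational-rotation-type dynamics provide a counterexample), so one genuinely needs the structural input of Theorem \ref{thm:dfs} — the atomicity of the centre and the type I form of the summands — to make the continuity-and-connectedness collapse of $\MR$ work. A secondary point of care is the lifting of $\mathrm{Ad}(v_\cdot)$ to a genuine one-parameter unitary group, but this is classical for $\mathbb R$.
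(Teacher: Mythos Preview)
Your proof is correct and shares the paper's overall architecture: invoke Theorem \ref{thm:dfs}, use continuity to pin the minimal central projections of $\MR$ in $\mathcal F(\TT)=\mathbb C\unit$ (so $A$ is a singleton), collapse the remaining type~I factor $B(\hh^L)$ to dimension one, then apply Lemma \ref{lem:acc} and upgrade weak to trace-norm convergence exactly as the paper does. The genuine difference lies in how the factor is collapsed. The paper first shows that $\TT$ preserves $\MR'$ as well, factors $\TT=\TT^1\otimes\TT^2$, deduces $\rho=\rho_1\otimes\rho_2$ by uniqueness of the invariant density, and then---treating the finite- and infinite-dimensional cases separately---manufactures a second invariant density from the spectral projections of $\rho_1$ tensored with $\rho_2$. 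Your route is shorter: only the action of $\TT$ on $\MR$ is used, the partial trace $\sigma$ of $\rho$ over $\hh^R$ lands directly in $\{v_t\}'=\mathbb C\unit$ (forcing $\dim\hh^L<\infty$), and then the observation that $\{v_t\}''=\{H\}''$ is simultaneously abelian and all of $B(\hh^L)$ forces $\dim\hh^L=1$. This sidesteps both the tensor factorization of $\TT$ and the dimension case split, at the modest cost of lifting the projective one-parameter group $\mathrm{Ad}(v_\cdot)$ to a genuine unitary group---which, as you note, is classical for $\mathbb R$ and unproblematic in finite dimension.
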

Notice that $\MR=\mathbb{C}\unit$ can be seen as a restatement of item $5.$ in Theorem \ref{th:irr-iff-prim-conv-inv-st} (where, in infinite dimensional systems, one only considers the point spectrum).
\begin{proof}

Without loss of generality and in order to simplify notation, we can assume that $U$ in Theorem \ref{thm:dfs} is the identity operator.

For every $x \in {\cal B}(\hh)$, the set $\{{\cal T}_{t}(x)\}_{t \geq 0}$ is sequentially relatively compact in the weak* topology and, therefore, in order to show that it converges, it is enough to prove that all accumulation points of the net coincide. We will now show that irreducibility of ${\cal T}$ implies that $\MR=\mathbb{C}\unit$.

\textbf{Step 1.} First of all, we will show that primitivity implies that $A$ in Theorem \ref{thm:dfs} is a singleton. Let us denote by $Z$ the center of $\MR$, i.e. the set of all the elements in $\MR$ that commute with all the other elements in $\MR$; it is easy to see from equation \eqref{eq:stNT} that the center is the ${\rm W}^*$-algebra generated by $p_\alpha$s, where $p_\alpha: =p_{\hh_\alpha^L \otimes \hh_\alpha^R}$ for $\alpha \in A$. Notice that ${\cal T}$ maps $Z$ into itself. Indeed, for every $z \in \MR^\prime$ and $y \in \MR$
$${\cal T}_t(z)y={\cal T}_t(z){\cal T}_t(x)={\cal T}_t(zx)={\cal T}_t(xz)={\cal T}_t(x){\cal T}_t(z)=y{\cal T}_t(z),$$
where $x$ is the unique element in $\MR$ such that $y={\cal T}_t(x)$ (one can find such $x$ because Theorem \ref{thm:dfs} ensures that $\TT_t$ acts as a *-automorphism on $\MR$); we just proved that $\TT$ maps $\MR^\prime$ into itself and the claim follows from the fact that $Z=\MR \cap \MR^\prime$. Moreover, since ${\cal T}_t$ acts on $Z$ as a $^*$-automorphism, ${\cal T}_t(p_\alpha)=\sum_{\beta \in A}x_{\beta,\alpha}(t)p_\beta$ is a projection, i.e. $x_{\beta,\alpha}(t) \in \{0,1\}$. By the continuity of $t \mapsto x_{\beta,\alpha}(t)$, one has that ${\cal T}_t(p_\alpha) =p_\alpha$ for every time $t \geq 0$, i.e. $p_\alpha \in {\cal F}({\cal T})$. Since $Z$ is generated by $p_\alpha$s, then $Z \subseteq {\cal F}({\cal T})=\mathbb{C}\unit$, where the last equality is due to primitivity. Therefore, $A$ is a singleton, $\hh=\hh^L \otimes \hh^R$ and $\MR=B(\hh^L) \otimes \unit_{\hh^R}$.

\textbf{Step 2.} We will show that $\rho=\rho_1 \otimes \rho_2 $ for some states $\rho_1$, $\rho_2$ on $B(\hh^L)$ and $B(\hh^R)$, respectively. In step $1.$ we showed that $\TT$ maps $\MR^\prime=\unit_{\hh^L} \otimes B(\hh^R)$ into itself and, by the definition of the reversible subalgebra, we know that $\TT$ maps $\MR= B(\hh^L)\otimes \unit_{\hh^R}$ into itself as well; therefore we can define the two QMSs $\TT^1$ and $\TT^2$ acting on $B(\hh^L)$ and $B(\hh^R)$, respectively, as
\begin{align*}\TT_1(x)\otimes \unit_{\hh^R}:=\TT_t(x \otimes \unit_{\hh^R}), \quad x \in B(\hh^L)\\
\unit_{\hh^L} \otimes \TT_2(y):=\TT_t( \unit_{\hh^L}\otimes y), \quad y \in B(\hh^R).
\end{align*}
Moreover, notice that for all $x \in B(\hh^L)$, $y \in B(\hh^R)$ and $t \geq 0$ one has
\[\begin{split}\TT_t(x \otimes y)&=\TT_t((x \otimes \unit_{\hh^R})( \unit_{\hh^L} \otimes y))=\TT_t(x \otimes \unit_{\hh^R})\TT_t(\unit_{\hh^L} \otimes y)\\
&=(\TT^1_t(x) \otimes\unit_{\hh^R})( \unit_{\hh^L} \otimes  \TT^2_t(y))=\TT^1_t(x) \otimes \TT^2_t(y),\end{split}\]
that is $\TT_t=\TT^1_t \otimes \TT^2_t$.

Let us introduce the notation $\rho_1 = \tr_2(\rho)$ and $\rho_2=\tr_1(\rho)$, where $\tr_1$ ($\tr_2$) is the partial trace with respect to $\hh^L$ (resp. $\hh^R$); then, for every $x \in B(\hh^L)$ and $t \geq 0$ one has
\[\begin{split}
\tr(\rho_1  x )&=\tr(\rho x \otimes \unit_{\hh^R})= \tr(\rho \TT_t(x \otimes \unit_{\hh^R}))\\
&=\tr(\rho \TT^1_t(x) \otimes \unit_{\hh^R})= \tr_1(\rho_1 \TT^1_t(x)),
\end{split}\]
which means that $\rho_1 $ is an invariant density for $\TT^1$; in the same way, one gets that $\rho_2$ is an invariant density for $\TT^2$. Therefore, using that $\TT=\TT^1 \otimes \TT^2$, one gets that $\rho_1 \otimes \rho_2$ is an invariant density for $\TT$ and, by primitivity, $\rho=\rho_1 \otimes \rho_2$.

\textbf{Step 3.} For every $t \geq 0$, ${\cal T}^1_t$ is a $^*$-automorphism; by Wigner theorem (Theorem 14.5 in \cite{Pa}) there exists a unitary or antiunitary operator $U_t$ such that ${\cal T}^1_t(x)=U_t^*x U_t$ for every $x \in B(\hh^L)$; if $1<{\rm dim}(\hh^L)<+\infty$, it is easy to see that one can construct more than one invariant state for the semigroup $\TT^1$ and, consequently, for $\TT$ (attaching $\rho_2$). If ${\rm dim}(\hh^L)=+\infty$, let us consider the spectral resolution of $\rho_1=\sum_{i \in I}\lambda_i p_i$ (notice that $p_i$s are finite dimensional), then one has
$$\rho_1={\cal T}^1_{*t}(\rho_1)=\sum_{i \in I} \lambda_i {\cal T}^1_{*t}(p_i)=\sum_{i \in I} \lambda_i U_t (p_i) U_t^*$$
and $U_t (p_i) U_t^*$ are orthogonal projections summing up (in the strong operator topology) to the identity operator. By uniqueness of the spectral decomposition, $p_i=U_t (p_i) U_t^*={\cal T}_{*t}(p_i)$ for every $i \in I$, therefore $\rho_{i}=p_i/{\rm tr}(p_i)\otimes \rho_2$ is a collection of invariant densities with orthogonal supports and this contradicts primitivity of the semigroup. Therefore ${\rm dim}(\hh^L)=1$ and we showed that $\MR=\mathbb{C}\unit$.

Summing up, for every $x \in {\cal B}(\hh)$ we showed that every accumulation point of the net $\{{\cal T}_{t}(x)\}_{t \geq 0}$ is of the form $\alpha \unit$ for some $\alpha \in \mathbb{C}$ (Lemma \ref{lem:acc} ensures that every accumulation point is in $\MR$ and we showed that $\MR$ is trivial); however, notice that $\alpha$ is completely determined by $x$ since for every $t \geq 0$, ${\rm tr}(\rho x)={\rm tr}(\rho {\cal T}_t(x))$, which implies that $\alpha=\rho(x)$. Therefore we showed that
$${\rm w}^*-\lim_{t \rightarrow +\infty}{\cal T}_{t}(x)={\rm tr}(\rho x)\unit, \quad x \in {\cal B}(\hh),$$
which is equivalent to
$${\rm w}-\lim_{t \rightarrow +\infty}{\cal T}_{*t}(\eta)=\rho$$
for every density $\eta$. It is easy to see that the convergence in the previous equation is equivalent to convergence in trace class norm: indeed, for every $\epsilon>0$, there exists a finite dimensional spectral projection of $\rho$, say $p_\epsilon$, such that ${\rm tr}(\rho p_\epsilon^\perp) < \epsilon$. Using the weak convergence, one can find $t_0$ such that for every $t \geq t_0$ one has
$${\rm tr}({\cal T}_{*t}(\eta) p_\epsilon^\perp) <2\epsilon, \quad \|p_\epsilon {\cal T}_{*t}(\eta)p_\epsilon-p_\epsilon\rho\|_1 <\epsilon.$$
Therefore,
$$\|{\cal T}_{*t}(\eta)-\rho\|_1 \leq \|p_\epsilon {\cal T}_{*t}(\eta)p_\epsilon-p_\epsilon\rho\|_1+2\|({\cal T}_{*t}(\eta)- \rho)p_\epsilon^\perp\|_1 \leq 7\epsilon$$
and we conclude by the arbitrariness of $\epsilon$.

\end{proof}

\subsection{Positivity improvement}

If the Hilbert space $\hh$ is infinite dimensional positivity improvement is stronger than irreducibility
(Example \ref{ex:irred-no-pos-impr} below). However, if $\TT$ is uniformly continuous and under some extra assumptions, they become equivalent (see \cite{HaNa} Theorem 4.2). In order to state this result, we need to introduce the decoherence-free subalgebra (\cite{AmFaKo,FaSaUm,SaUm23}):
$${\cal N}(\TT):=\{x \in {\cal B}(\hh):\TT_t(x^*x)=\TT_t(x^*)\TT_t(x), \,\ t \geq 0\};$$
if the semigroup is uniformly continuous, one can easily see that ${\cal N}(\TT)$ is the biggest von Neumann algebra on which $\TT$ acts as a semigroup of *-automorphisms (see Theorem 3 in \cite{FaSaUm}) and, if the semigroup admits a faithful invariant density, that it coincides with $\MR$ (see Theorem 1 in \cite{CJ20}).

\begin{theorem}\label{th:Lell-sa}
Let us consider a uniformly continuous QMS $\TT$ such that the set ${\rm span}_{\mathbb{C}}\left\{\,L_\ell\,\mid\,\ell\in I\,\right\}$ is self-adjoint.

The following are equivalent:
\begin{enumerate}
\item ${\cal N}(\TT)=\mathbb{C}\unit$,
\item $\TT$ is positivity improving.
\end{enumerate}

If, in addition, we assume that $\TT$ has an invariant density, then the following are equivalent:
\begin{enumerate}
\item[3.] $\TT$ is irreducibile,
\item[4.] $\TT$ is positivity improving.
\end{enumerate}
\end{theorem}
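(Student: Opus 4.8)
The plan is to prove the two equivalences in Theorem~\ref{th:Lell-sa} separately, relying on the structure of ${\cal N}(\TT)$ for uniformly continuous semigroups and on Proposition~\ref{prop:ucirr} for irreducibility.

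\medskip\noindent\textbf{Equivalence of 1.\ and 2.}
First I would recall the concrete description of ${\cal N}(\TT)$ for a uniformly continuous QMS in GKLS form. By a standard computation (differentiating the defining identity $\TT_t(x^*x)=\TT_t(x^*)\TT_t(x)$ twice at $t=0$ and using the explicit form of $\mathcal{L}$), one obtains
$${\cal N}(\TT)=\left\{\, L_\ell, L_\ell^*, \delta_G^k(L_\ell), \delta_G^k(L_\ell^*) \,:\, \ell\in I,\ k\geq 0 \,\right\}',$$
where $\delta_G(x)=[G,x]$; this is the statement that ${\cal N}(\TT)$ is the commutant of the ``generated'' algebra of the $L_\ell$'s and their iterated commutators with $G$. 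Now comes the point where the self-adjointness hypothesis enters: if ${\rm span}_{\mathbb{C}}\{L_\ell:\ell\in I\}$ is self-adjoint, then each $L_\ell^*$ is already a (strongly convergent) linear combination of the $L_j$'s, and since this space is $\delta_G$-invariant modulo lower order terms one can argue that $x$ commuting with all $L_\ell$ forces $x$ to commute with the whole algebra ${\cal A}(\{G,L_\ell\}_{\ell\in I})$ — indeed $x\in\{L_\ell\}'$ already gives $x\in\{L_\ell,L_\ell^*\}'$, and then $[x,G]$ is controlled by \eqref{eq:linop}, which gives $[x,G^*+G]=0$; combined with the Schwarz-type identity one pushes this to $[x,G]=0$. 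Hence ${\cal N}(\TT)={\cal A}(\{G,L_\ell\}_{\ell\in I})'$. For the implication 1.\ $\Rightarrow$ 2., given $x\geq 0$, $x\neq 0$, Lemma~\ref{lem:supp} identifies the support of $\TT_{*t}(|\psi\rangle\langle\psi|)$ with $\mathrm{e}^{tG}$ applied to the subspace generated by $\psi$ and all $\delta_G^{m_1}(L_{\ell_1})\cdots\delta_G^{m_n}(L_{\ell_n})\psi$; ${\cal N}(\TT)=\mathbb{C}\unit$ means ${\cal A}(\{G,L_\ell\}_{\ell\in I})'=\mathbb{C}\unit$, which by the double commutant theorem makes this subspace dense, hence $\TT_{*t}(|\psi\rangle\langle\psi|)$ is faithful. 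Passing from pure states to general $x\geq0$, $x\neq0$, is routine since $x$ dominates a rank-one projection up to a positive scalar. For 2.\ $\Rightarrow$ 1., I would argue contrapositively: if ${\cal N}(\TT)\neq\mathbb{C}\unit$ it contains a nontrivial projection $p$ on which $\TT$ acts as a $*$-automorphism, in particular $\TT_t(p)$ is a projection for all $t$; by continuity and $\TT_0(p)=p$ one gets $\TT_t(p)=p$, so $p$ is $\TT$-invariant, and then $\TT_{*t}(p^\perp/\tr(p^\perp))$ stays supported in $p^\perp$, never becoming faithful.

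\medskip\noindent\textbf{Equivalence of 3.\ and 4.\ under existence of an invariant density.}
The implication 4.\ $\Rightarrow$ 3.\ is immediate and does not even need the extra hypotheses: if $\TT$ is positivity improving and $p$ is a nontrivial subharmonic projection, then $\TT_t(p^\perp)>0$ for $t>0$ has trivial kernel, contradicting $\TT_t(p^\perp)\leq p^\perp$ (this is exactly the argument 7.\ $\Rightarrow$ 1.\ in the proof of Theorem~\ref{th:irr-iff-prim-conv-inv-st}). For 3.\ $\Rightarrow$ 4., I would use that under the self-adjointness hypothesis ${\cal N}(\TT)={\cal A}(\{G,L_\ell\}_{\ell\in I})'$, and that irreducibility, by Proposition~\ref{prop:ucirr}, is equivalent to every nonzero vector being cyclic for ${\cal A}(\{G,L_\ell\}_{\ell\in I})$; but one must also get $\{0\}$ as the only common invariant subspace, i.e.\ the algebra acts with no invariant subspace at all, which together with self-adjointness (so the algebra is $*$-closed) gives ${\cal A}(\{G,L_\ell\}_{\ell\in I})'=\mathbb{C}\unit$ by irreducibility in the von Neumann sense — here is precisely where self-adjointness is essential, since for a general (non-$*$-closed) algebra having no invariant subspaces does not force a trivial commutant. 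Thus 3.\ gives ${\cal N}(\TT)=\mathbb{C}\unit$, which is condition 1., hence 2., i.e.\ positivity improving. Actually the existence of an invariant density enters to ensure the cleanest route: when there is a faithful invariant density (which irreducibility provides, by Proposition~\ref{prop:irrprim}), ${\cal N}(\TT)$ coincides with the reversible subalgebra $\MR$, and the machinery of Theorem~\ref{thm:dfs}/Theorem~\ref{thm:relaxing} is available; but for the uniformly continuous case the direct commutant computation is more elementary.

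\medskip\noindent\textbf{Main obstacle.}
The step I expect to be most delicate is showing that self-adjointness of ${\rm span}_{\mathbb{C}}\{L_\ell:\ell\in I\}$ upgrades ``$x$ commutes with all $L_\ell$'' to ``$x$ commutes with $G$,'' i.e.\ that ${\cal N}(\TT)$ collapses to ${\cal A}(\{G,L_\ell\}_{\ell\in I})'$ rather than the a priori larger ${\cal A}(\{L_\ell,L_\ell^*\}_{\ell\in I})'$. The subtlety is that \eqref{eq:linop} only gives $x(G+G^*)=(G+G^*)x$ directly from $x$ commuting with $\sum L_\ell^*L_\ell$, so one needs an extra input — either the full $n$-th order differentiation identities for membership in ${\cal N}(\TT)$ forcing $[x,\delta_G^k(L_\ell)]=0$ for all $k$, or the a priori characterization of ${\cal N}(\TT)$ as the largest $*$-algebra of automorphisms (Theorem~3 in \cite{FaSaUm}), from which the commutant description follows cleanly. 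I would take the latter route to keep the argument short: cite that ${\cal N}(\TT)$ is a von Neumann algebra on which $\TT$ acts by $*$-automorphisms, observe any projection in it is $\TT$-fixed (by the continuity argument above), and then invoke self-adjointness to see ${\cal N}(\TT)\subseteq\{L_\ell\}'=\{L_\ell,L_\ell^*\}'$, with the reverse containment being the genuinely nontrivial direction requiring a short computation with the generator.
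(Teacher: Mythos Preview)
Your proposal has the right ingredients for $1.\Rightarrow 2.$ (Lemma~\ref{lem:supp} plus cyclicity for a self-adjoint set), and your fallback route for $3.\Rightarrow 4.$ via Proposition~\ref{prop:irrprim}, $\mathcal{N}(\TT)=\MR$ and Theorem~\ref{thm:relaxing} is exactly what the paper does. However, two steps in your main line are genuine gaps.

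\smallskip
\textbf{The continuity argument for $2.\Rightarrow 1.$ fails.} You claim that if $p\in\mathcal{N}(\TT)$ is a nontrivial projection then ``by continuity and $\TT_0(p)=p$ one gets $\TT_t(p)=p$''. This is false in infinite dimensions: the set of projections in $\mathcal{B}(\hh)$ is not discrete in norm, and a norm-continuous path of projections need not be constant (take $U_t p U_t^*$ for a norm-continuous unitary group not commuting with $p$). So you do not obtain a fixed or subharmonic projection from $\mathcal{N}(\TT)\neq\mathbb{C}\unit$ this way, and your contradiction with positivity improvement does not go through. The paper avoids this entirely: it proves $1.\Leftrightarrow 2.$ as a single equivalence by showing that \emph{both} conditions are equivalent to the commutant of the self-adjoint set
\[
\left\{\,\delta_G^{m_1}(L_{\ell_1})\cdots\delta_G^{m_n}(L_{\ell_n})\,\right\}
\]
being trivial---for positivity improvement via Lemma~\ref{lem:supp} and the fact (\cite{BrRo}~Prop.~2.3.8) that a self-adjoint set has trivial commutant iff every nonzero vector is cyclic; for $\mathcal{N}(\TT)$ via its characterization as $\{\delta_H^k(L_\ell),\delta_H^k(L_\ell^*)\}'$ from \cite{DFSU}.

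\smallskip
\textbf{The algebra $\mathcal{A}(\{G,L_\ell\})$ is not $*$-closed.} You argue that self-adjointness of $\mathrm{span}\{L_\ell\}$ makes $\mathcal{A}(\{G,L_\ell\})$ a $*$-algebra, hence ``no invariant subspace $\Rightarrow$ trivial commutant''. But $G^*-G=2\mathrm{i}H$, and there is no reason for $H$ (equivalently $G^*$) to lie in $\mathcal{A}(\{G,L_\ell\})$; the relation \eqref{eq:linop} only gives $G+G^*\in\mathcal{A}(\{L_\ell\})$. The paper's actual mechanism is different: it does not compare with $\mathcal{A}(\{G,L_\ell\})'$ at all, but uses the identity
\[
\delta_G(L_\ell)=-\tfrac{1}{2}\sum_k[L_k^*L_k,L_\ell]-\mathrm{i}\,\delta_H(L_\ell)
\]
(and its iterates, by induction as in \cite{HaNa}~Lemma~4.1) to show that the commutant of the $\delta_G$-generated set coincides with the commutant of the $\delta_H$-generated set $\{\delta_H^k(L_\ell),\delta_H^k(L_\ell^*)\}$, which is $\mathcal{N}(\TT)$. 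This is the ``extra input'' you were looking for in your obstacle paragraph, and it replaces both your continuity argument and your $*$-closure claim.
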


\begin{proof}
The equivalence between $3.$ and $4.$ is a corollary of the equivalence of $1.$ and $2.$: irreducibility implies that the invariant density is faithful (Proposition \ref{prop:irrprim}) and, under the assumption of the existence of a faithful invariant density, irreducibility implies $\MR=\mathbb{C}\unit$ (Theorem \ref{thm:dfs}) and ${\cal N}(\TT)$ coincides with $\MR$ (Theorem 1 in \cite{CJ20}).

Let us introduce the notation $\delta_a(b)=[a,b]$ for every $a,b \in {\cal B}(\hh)$. Since the support of any pure initial density
$|\psi\rangle\langle\psi|$ is characterized as in Lemma \ref{lem:supp}, and the commutant of a self-adjoint set of
operators is trivial if and only if every nonzero vector is cyclic for that set (see \cite{BrRo} Proposition 2.3.8, p. 47)
positivity improvement is equivalent to the commutant of the linear span of the set of operators
\begin{equation}\label{eq:set-supp}
\delta^{m_1}_G(L_{\ell_1}) \cdots
\delta^{m_n}_G(L_{\ell_n}), \quad n \in \mathbb{N}^*, \  m_1,.., m_n \in \mathbb{N}, \ \ell_1, .., \ell_n \in I,
\end{equation}
being trivial; indeed, notice that the linear span of the set of operators in equation \eqref{eq:set-supp} is self-adjoint because of our assumption on the linear space generated by noise operators ${\rm span}_{\mathbb{C}}\left\{\,L_\ell\,\mid\,\ell\in I\,\right\}$.

In order to prove the equivalence between $1.$ and $2.$ it suffices to show that the commutant of the linear span of the operators in equation \eqref{eq:set-supp} coincides with the commutant of the set of operators given by
\begin{equation} \label{eq:NT-comm}
\{\delta_{H}^k(L_\ell),\,\delta_{H}^k(L_\ell^*): \, k \geq 0\}.
\end{equation}
Indeed, one has that (see Proposition 2.1, 2.2 and 2.3 in \cite{DFSU})
$${\cal N}(\TT)=\{\delta_{H}^k(L_\ell),\,\delta_{H}^k(L_\ell^*): \, k \geq 0\}^\prime.$$
To this end, first note that both sets \eqref{eq:NT-comm}
and \eqref{eq:set-supp} contain operators $L_\ell,L_k^*$ and polynomials in $L_\ell,L_k^*$. Then, by the identity
\[
\delta_G(L_\ell) = -\frac{1}{2}\sum_k[L_k^*L_k,L_\ell] -\mathrm{i}\delta_H(L_\ell),
\]
both sets contain operators $\delta_G(L_\ell),\delta_G(L_\ell^*)$ and $\delta_H(L_\ell),\delta_H(L_\ell^*)$.
The conclusion can be extended to iterated commutators of any order by an induction argument as in the proof of
Lemma 4.1 of \cite{HaNa} and the proof is complete.
\end{proof}

To the best of our knowledge, it is yet unclear whether the equivalence between irreducibility and positivity improvement holds for a uniformly continuous irreducible
QMS, without any further assumptions.

\smallskip

Below we show an example of an irreducible QMS which is not positivity improving.
It is not uniformly continuous, neither it has any invariant density, but the unbounded generator $\mathcal{L}$ is representable
in a GKSL form as in \eqref{eq:GKSL-H} allowing an unbounded self-adjoint operator $H$.
The QMS can be constructed from $\mathcal{L}$ by the minimal semigroup method or as a bounded perturbation
of the unitary group generated by the unbounded derivation $x\mapsto \mathrm{i}[H,x]$ (see Appendix B).

\begin{example}\label{ex:irred-no-pos-impr}
Let $\hh=L^2(\mathbb{T};\mathbb{C})$ be the Hilbert space of periodic functions on the unit circle $\mathbb{T}$ and consider
the QMS with generator in a (generalized, since $H$ is unbounded) GKSL form with
\[
H = \mathrm{i}\frac{\mathrm{d}}{\mathrm{d}x} \qquad
(L_\ell f)(x) =2^{-|\ell|-1}\mathrm{e}^{2\mathrm{i}\pi \ell x } f(x),  \quad \ell\in \mathbb{Z}^*=\mathbb{Z}-\{\,0\,\}.
\]
Note that $L_\ell$ are multiples of unitary multiplication operators and
\[
\sum_{\ell\in\mathbb{Z}^*}L_\ell^* L_\ell = \left(\sum_{\ell\in\mathbb{Z}^*}2^{-|\ell|-1}\right)\unit=\unit
\]
and so the semigroup generated by
\[
G = -\frac{1}{2}L_\ell^* L_\ell -\mathrm{i}H
= -\frac{1}{2}\unit + \frac{\mathrm{d}}{\mathrm{d}x}
\]
is
\[
(P_t f)(x) = \mathrm{e}^{-t/2}f(x+t).
\]
 Therefore, the closure of linear combinations of functions
$x\mapsto \mathrm{e}^{2\mathrm{i}\pi \theta\, k x}$ for $k\in\mathbb{N}$ is
dense in $L^2(\mathbb{T})$. As a consequence, applying linear combinations of operators $L_\ell$ to a $u\in L^2(\mathbb{T})$
and taking the closure, one can get any function in $L^2(\mathbb{T})$ with the essential support of $u$. Operators
$P_t=\mathrm{e}^{tG}$ acts by translation and the factor $\mathrm{e}^{-t/2}$ is irrelevant for invariant subspace problems.
It follows from Lemma \ref{lem:support-H-sa} that the support of $\TT_{*t}(|\psi\rangle\langle\psi|)$ cannot be full
for a suitable $\psi$ with small support for small $t$.\\
However, $\TT$ is irreducible because a $P_t$ invariant and $L_\ell$ invariant subspace is trivial. Looking at the restriction of the semigroup to the diagonal algebra with respect to the Fourier basis, one can easily check that $\TT$ does not have any invariant density following a similar reasoning as in Example \ref{ex:bd-chain}.
\end{example}

\section{Conclusion and Outlook}

In this work we presented the state of the art regarding the understanding of irreducibility in open quantum Markovian dynamics and its relationship with several relevant features of the evolution: the existence and uniqueness of invariant densities, positivity improvement and relaxation towards a faithful density; more precisely we showed that, except from positivity improvement, they are all equivalent when the semigroup admits an invariant density. Many proofs have been simplified with respect to the literature and some new results have been added. In finite dimensional systems, everything is well understood and irreducible QMSs share all the relevant properties with their classical counterpart. On the other hand, to the best of our knowledge, the theory in infinite dimensions still has some unclear aspects: while the asymptotic behaviour of Ces\`aro means is well understood (see Theorem  2.3.23 in \cite{GiPhD}), little is known about the net $\{\TT_{*t}(\eta)\}_{t \geq 0}$ for QMSs without an invariant density; another natural question is whether irreducibility implies positivity improvement in the case where the semigroup is uniformly continuous or admits an invariant density.

\section*{Acknowledgement}
The authors are members of GNAMPA-INdAM.
The support of the MUR grant ``Dipartimento di Eccellenza 2023--2027'' of Dipartimento di Matematica,
Politecnico di Milano and ``Centro Nazionale di ricerca in HPC, Big Data and Quantum Computing'' is gratefully  acknowledged.

\appendix

\section{Lemma on positive operators}

\begin{lemma} \label{lem:pos}
Let us consider $x\geq 0$ and $p$ an orthogonal projection; if $pxp=0$, then $x=p^\perp x p^\perp$.
\end{lemma}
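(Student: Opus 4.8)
The plan is to exploit positivity of $x$ via a Cauchy--Schwarz / sesquilinear form argument. Since $x \geq 0$, it admits a square root $x^{1/2}$, and for any vectors $\xi, \eta \in \hh$ one has $\langle \xi, x\eta\rangle = \langle x^{1/2}\xi, x^{1/2}\eta\rangle$, so that $|\langle \xi, x\eta\rangle|^2 \leq \langle \xi, x\xi\rangle\,\langle \eta, x\eta\rangle$. The hypothesis $pxp = 0$ says precisely that $\langle p\xi, x\, p\eta\rangle = 0$ for all $\xi,\eta$, equivalently $\langle \zeta, x\zeta\rangle = 0$ for every $\zeta$ in the range of $p$.

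First I would show $xp = 0$. Take any $\zeta$ in the range of $p$ and any $\eta \in \hh$; by the Cauchy--Schwarz inequality above, $|\langle \eta, x\zeta\rangle|^2 \leq \langle \eta, x\eta\rangle\,\langle \zeta, x\zeta\rangle = 0$, since $\langle \zeta, x\zeta\rangle = 0$. As $\eta$ is arbitrary, $x\zeta = 0$, and since $\zeta$ ranges over all of $\operatorname{ran}(p)$ this gives $xp = 0$. Taking adjoints (using $x = x^*$ and $p = p^*$) yields $px = 0$ as well.

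Finally I would combine these: from $px = 0$ and $xp = 0$ we get
\[
p^\perp x p^\perp = (\unit - p)x(\unit - p) = x - px - xp + pxp = x,
\]
which is the claim. The argument is entirely elementary; the only mild subtlety — hardly an obstacle — is to phrase the Cauchy--Schwarz step correctly using the positive sesquilinear form $(\xi,\eta)\mapsto \langle \xi, x\eta\rangle$, after which everything is a short computation. No results beyond basic functional analysis are needed.
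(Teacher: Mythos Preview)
Your proof is correct and follows essentially the same approach as the paper: both exploit positivity of the sesquilinear form $(\xi,\eta)\mapsto\langle\xi,x\eta\rangle$ to kill the off-diagonal block. You invoke the Cauchy--Schwarz inequality for this form directly, while the paper unpacks it by hand via the expansion $0\le\langle u+zv,\,x(u+zv)\rangle$ for $u\in\operatorname{ran}(p^\perp)$, $v\in\operatorname{ran}(p)$, $z\in\mathbb{C}$; these are the same argument in slightly different packaging.
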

\begin{proof}
Let us consider any complex number $z \in \mathbb{C}$ and vectors $u, v \in \hh$ such that $pu=0$ and $pv=v$. The positivity of $x$ implies that
$$0 \leq \langle u+zv, x(u+zv) \rangle= \langle u,xu\rangle +2 \operatorname{Re}(z\langle u, x v \rangle, $$
however, for this to be true for every $z \in \mathbb{C}$ one gets that $\langle u, x v \rangle=0$.
\end{proof}

\section{Peripheral spectrum of irreducible Schwarz unital maps}

The content of this section is based on \cite{FP09,Gr81}. Let us consider $\varPhi$ which is assumed to be a unital Schwarz map (see equation \eqref{eq:schwarz}) on
${\cal B}(\hh)$ and let $\mathcal{D}:
{\cal B}(\hh)\times{\cal B}(\hh)\rightarrow\mathbb{C}$ be the map
\[
\mathcal{D}(a,b) = \varPhi(a^*b)-\varPhi(a^*)\varPhi(b)
\]
Moreover, define
\[\mathcal{M}(\varPhi)=\{x\in\mathcal{B}(\hh)\;|\;\mathcal{D}(x,x)=0,
\,\mathcal{D}(x^*,x^*)=0\}.
\]

\begin{lemma}\label{lem:Schw-map}
For all $x\in\mathcal{M}(\varPhi)$ and all $a\in{\cal B}(\hh)$ we
have $\mathcal{D}(x,a)=\mathcal{D}(x^*,a)=0.$
\end{lemma}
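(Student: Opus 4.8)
The statement is the standard "Schwarz–Cauchy" argument: on the set $\mathcal{M}(\varPhi)$ the sesquilinear form $\mathcal{D}$ degenerates, and degeneracy in one slot forces degeneracy in the other. The plan is to exploit that $\mathcal{D}$ is a nonnegative sesquilinear form on $\mathcal{B}(\hh)\times\mathcal{B}(\hh)$: indeed, for a unital Schwarz map, $\mathcal{D}(a,a)=\varPhi(a^*a)-\varPhi(a^*)\varPhi(a)\geq 0$ for all $a$, so $\mathcal{D}$ is a positive semidefinite form (it is linear in the second argument, conjugate-linear in the first, and $\mathcal{D}(a,a)\ge 0$). The point is then that any positive semidefinite form satisfies a Cauchy–Schwarz inequality in operator form.

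First I would record the operator Cauchy–Schwarz inequality for $\mathcal{D}$: for all $a,b\in\mathcal{B}(\hh)$,
\[
\mathcal{D}(a,b)^*\mathcal{D}(a,b)\leq \Vert\mathcal{D}(a,a)\Vert_\infty\,\mathcal{D}(b,b),
\]
or the symmetric version $\mathcal{D}(b,a)\mathcal{D}(a,b)\le \Vert\mathcal{D}(b,b)\Vert_\infty\,\mathcal{D}(a,a)$. This follows from the usual trick: for $\lambda>0$ the element $\lambda a + \lambda^{-1}b\,\zeta$ (suitably handled, or simply expanding $\mathcal{D}(\lambda a+\mu b,\lambda a+\mu b)\ge 0$ and optimizing over the scalars, then using that the resulting matrix-valued $2\times 2$ form is positive) yields the bound; alternatively one invokes the Kadison–Schwarz inequality machinery directly. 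Now if $x\in\mathcal{M}(\varPhi)$ then $\mathcal{D}(x,x)=0$, so the inequality forces $\mathcal{D}(x,a)^*\mathcal{D}(x,a)\le 0$, hence $\mathcal{D}(x,a)=0$ for every $a\in\mathcal{B}(\hh)$. Applying the same argument to $x^*$, which also lies in $\mathcal{M}(\varPhi)$ by definition since $\mathcal{D}(x^*,x^*)=0$, gives $\mathcal{D}(x^*,a)=0$ for all $a$.

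The only real subtlety is justifying the operator Cauchy–Schwarz inequality for a form that is only assumed nonnegative (not arising from a genuine inner product on a Hilbert module), so I would spell out that step carefully: fix $a,b$, and for any unit vectors $\xi,\eta\in\hh$ and scalar $z\in\mathbb{C}$ consider $0\le \langle (\xi\otimes\cdot)$ ... — more concretely, test the positive operator $\mathcal{D}(za+b,za+b)\ge 0$ against a vector $\zeta$, obtaining $|z|^2\langle\zeta,\mathcal{D}(a,a)\zeta\rangle + 2\operatorname{Re}\!\big(\bar z\langle\zeta,\mathcal{D}(a,b)\zeta\rangle\big) + \langle\zeta,\mathcal{D}(b,b)\zeta\rangle\ge 0$, and when $\langle\zeta,\mathcal{D}(a,a)\zeta\rangle=0$ this forces $\langle\zeta,\mathcal{D}(a,b)\zeta\rangle=0$; running this for the diagonal entries of the $2\times 2$ amplification handles the off-diagonal term $\mathcal{D}(a,b)$ itself. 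This is the same device already used in Lemma~\ref{lem:pos}, so I would simply reference that pattern. I expect the bookkeeping around passing from scalar nonnegativity to the operator identity $\mathcal{D}(x,a)=0$ (rather than just $\langle\zeta,\mathcal{D}(x,a)\zeta\rangle=0$ for all $\zeta$, which is already enough here since $\mathcal{D}(x,a)$ need not be self-adjoint) to be the one place requiring a line of care; it is resolved by noting that an operator $T$ with $T^*T\le 0$ is $0$, so one only needs the diagonal-slot vanishing $\mathcal{D}(x,x)=0$ together with the operator Cauchy–Schwarz bound $\mathcal{D}(x,a)^*\mathcal{D}(x,a)\le\Vert\mathcal{D}(x,x)\Vert_\infty\mathcal{D}(a,a)=0$.
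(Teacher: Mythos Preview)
Your scalar argument---expand $\mathcal{D}(zx+a,zx+a)\ge 0$, test against vectors (or states), and let $|z|\to\infty$ after choosing the phase---is exactly the paper's proof, so the core is correct and the approach is the same. Two remarks will tighten it. First, the operator Cauchy--Schwarz inequality $\mathcal{D}(a,b)^*\mathcal{D}(a,b)\le\Vert\mathcal{D}(a,a)\Vert_\infty\,\mathcal{D}(b,b)$ you invoke is the one for $2$-positive maps; for a map that is merely Schwarz the $2\times 2$ block matrix $[\mathcal{D}(a_i,a_j)]$ need not be positive, so this route is not obviously available and you should not rely on it. Second, your worry at the end is unnecessary: over a \emph{complex} Hilbert space, $\langle\zeta,T\zeta\rangle=0$ for all $\zeta$ already forces $T=0$ by polarization, so once the scalar quadratic argument gives $\langle\zeta,\mathcal{D}(x,a)\zeta\rangle=0$ for every $\zeta$ you are done (the paper phrases this as $\omega(\mathcal{D}(a,x))=0$ for all states $\omega$, which is equivalent). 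Drop the operator Cauchy--Schwarz detour and the proof is clean and identical to the paper's.
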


\noindent{\bf Proof.} Since $\varPhi$ is a Schwarz map we have
$\mathcal{D}(a,a)\geq 0$ for all $a\in{\cal B}(\hh)$. Now, if
$x\in{\cal B}(\hh)$ and $\mathcal{D}(x,x)=0$, for all
$z\in\mathbb{C}$ we have $\mathcal{D}(zx+a,zx+a)\geq 0$, i.e.:
$$\overline{z}\;\mathcal{D}(x,a)+z\;\mathcal{D}(a,x)+
\mathcal{D}(a,a)\geq 0 .$$ Therefore, for all state $\omega$ in $\mathfrak{I}_1(\hh)$ we have $2\operatorname{Re}\,z\;\omega(\mathcal{D}(a,x))+
\omega(\mathcal{D}(a,a))\geq 0$ i.e., if $z=|z|\mathrm{e}^{i\theta}$,
$2|z|\operatorname{Re}\,\mathrm{e}^{i\theta}\omega(\mathcal{D}(a,x))+
\omega(\mathcal{D}(a,a))\geq 0$ for all $|z|\geq 0$ and
$\theta\in\mathbb{R}$. This implies $\mathcal{D}(a,x)=0$.
\qed

\medskip
We are now in a position to prove the non-commutative version of
the Perron-Frobenius theorem for unital Schwarz maps. Let us denote by $\mathbb{T}$ the complex torus, i.e. $\{z \in \mathbb{C}: |z|=1\}$.

\begin{theorem}\label{th:frobenius} (Perron-Frobenius) Let $\varPhi$ be an unital
irreducible Schwarz map on ${\cal B}(\hh)$ for some finite dimensional Hilbert space $\hh$. Then:
\begin{enumerate}
\item[$(1)$] $\hbox{\rm sp}(\varPhi)\cap\mathbb{T}$ is a finite
subgroup of $\mathbb{T}$,
\item[$(2)$] the eigenspace of each $\lambda\in\hbox{\rm sp}(P)\cap\mathbb{T}$
is one-dimensional and the associated eigenvector is a multiple of a unitary element of $\mathcal{A}$.
\end{enumerate}
\end{theorem}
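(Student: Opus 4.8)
The plan is to derive both statements from Lemma \ref{lem:Schw-map} together with the fact that, in finite dimensions, an irreducible unital Schwarz map has a faithful invariant state. Two preliminaries come first. Since $\varPhi$ acts on the finite-dimensional space $\mathcal{B}(\hh)$, $\mathrm{sp}(\varPhi)\cap\mathbb{T}$ consists of eigenvalues, and since $\varPhi(x)^*\varPhi(x)\le\varPhi(x^*x)\le\|x\|^2\unit$ makes $\varPhi$ a linear contraction, each such eigenvalue is semisimple. Next, the predual $\varPhi_*$ is positive and trace preserving, so by compactness of the set of densities it fixes some density $\rho$; its support projection $p$ obeys $p\varPhi(p)p\le p\varPhi(\unit)p=p$ and $\tr(\rho(p-p\varPhi(p)p))=\tr(\rho(p-\varPhi(p)))=0$, so faithfulness of $\rho$ on $p\mathcal{B}(\hh)p$ gives $p\varPhi(p)p=p$; then $p\varPhi(p^\perp)p=0$ together with positivity of $\varPhi(p^\perp)$ yields $\varPhi(p^\perp)\le p^\perp$, i.e. $p$ is subharmonic, hence $p=\unit$ by irreducibility and $\rho$ is faithful. (Alternatively one may invoke the discrete-time versions of Theorem \ref{th:equiv-irred} and Corollary \ref{cor:h-fin-dim} noted in the proof of Proposition \ref{prop:irrprim}.)

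Next I would show that every peripheral eigenvector lies in $\mathcal{M}(\varPhi)$ and that $\mathcal{F}(\varPhi):=\{a\in\mathcal{B}(\hh):\varPhi(a)=a\}=\mathbb{C}\unit$. If $\varPhi(x)=\lambda x$ with $|\lambda|=1$, then $\varPhi(x^*)=\varPhi(x)^*=\bar\lambda x^*$; the Schwarz inequality applied to $x$ and to $x^*$ gives $\varPhi(x^*x)\ge x^*x$ and $\varPhi(xx^*)\ge xx^*$, while invariance gives $\tr(\rho(\varPhi(x^*x)-x^*x))=0$, so faithfulness of $\rho$ forces $\varPhi(x^*x)=x^*x$ and, likewise, $\varPhi(xx^*)=xx^*$; that is, $x\in\mathcal{M}(\varPhi)$. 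The polarization argument used for $1.\Rightarrow 2.$ in Theorem \ref{th:equiv-irred} (Schwarz inequality plus faithfulness of $\rho$ applied to $a+\theta b$, $\theta\in\mathbb{C}$) shows that $\mathcal{F}(\varPhi)$ is a $^*$-subalgebra of $\mathcal{B}(\hh)$; every projection in it is subharmonic, hence trivial, so $\mathcal{F}(\varPhi)=\mathbb{C}\unit$.

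It then remains to read off the conclusions using Lemma \ref{lem:Schw-map}. For a peripheral eigenvector $x$ (say $\varPhi(x)=\lambda x$) one has $x^*x\in\mathcal{F}(\varPhi)=\mathbb{C}\unit$, so $x$ is a nonzero multiple of an isometry, hence, in finite dimensions, of a unitary; and if also $\varPhi(y)=\lambda y$, then Lemma \ref{lem:Schw-map} gives $0=\mathcal{D}(x,y)=\varPhi(x^*y)-\bar\lambda\lambda\,x^*y$, so $x^*y\in\mathcal{F}(\varPhi)=\mathbb{C}\unit$ and, $x$ being invertible, $y\in\mathbb{C}x$: the eigenspace is one-dimensional, which is $(2)$. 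For $(1)$: $\unit$ is an eigenvector with eigenvalue $1$; if $\varPhi(x)=\lambda x$, $\varPhi(y)=\mu y$ with $|\lambda|=|\mu|=1$, then Lemma \ref{lem:Schw-map} gives $0=\mathcal{D}(x^*,y)=\varPhi(xy)-\varPhi(x)\varPhi(y)$, i.e. $\varPhi(xy)=\lambda\mu\,xy$ with $xy$ a nonzero multiple of a unitary, so $\lambda\mu\in\mathrm{sp}(\varPhi)\cap\mathbb{T}$; moreover $\varPhi(x^*)=\bar\lambda x^*\ne 0$ gives $\bar\lambda\in\mathrm{sp}(\varPhi)\cap\mathbb{T}$. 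Hence $\mathrm{sp}(\varPhi)\cap\mathbb{T}$ is a subgroup of $\mathbb{T}$, finite because $\dim\mathcal{B}(\hh)<\infty$. The only genuinely delicate point I anticipate is the first preliminary — extracting a faithful invariant state for the single map $\varPhi$ — since the subharmonic-projection arguments in the text are phrased for QMSs; after that, everything is a direct combination of the Schwarz inequality with the multiplicative-domain identity of Lemma \ref{lem:Schw-map}.
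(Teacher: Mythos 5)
Your proposal is correct and follows essentially the same route as the paper: peripheral eigenvectors are forced into $\mathcal{M}(\varPhi)$ by the Schwarz inequality together with the faithful invariant density, Lemma \ref{lem:Schw-map} then yields multiplicativity, triviality of the fixed-point algebra gives unitarity and one-dimensionality of the eigenspaces, and products of peripheral eigenvectors give the group structure. The only difference is that you spell out the preliminaries (existence of an invariant density by compactness, its faithfulness via the subharmonic support projection, and $\mathcal{F}(\varPhi)=\mathbb{C}\unit$) which the paper's proof simply invokes as the discrete-time analogues of Proposition \ref{prop:supp-proj} and Theorem \ref{th:equiv-irred}.
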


\begin{proof} Let us first prove $(2)$. Suppose $\mathrm{e}^{i\theta}\in\hbox{\rm
sp}(\varPhi)$ for $\theta\in\mathbb{R}$ then, since ${\cal B}(\hh)$
is finite-dimensional, $\mathrm{e}^{i\theta}$ is an eigenvalue of
$\varPhi$. Let $a\in{\cal B}(\hh),\;(a\neq 0)$ such that
$\varPhi(a)=\mathrm{e}^{i\theta}a$. Clearly we have also
$\varPhi(a^*)=\mathrm{e}^{-i\theta}a^*$. Let $\omega$ be the unique
faithful $\varPhi$-invariant density. By the
Schwarz property of $\varPhi$, we have
\[
0  \leq  \omega(\varPhi(a^*a)-\varPhi(a^*)\varPhi(a)) =
\omega(a^*a-a^*\mathrm{e}^{-i\theta}\mathrm{e}^{i\theta}a)=0.
\]
It follows that $\varPhi(a^*a)=\varPhi(a^*)\varPhi(a)=a^*a$ and, by the
same argument, $\varPhi(aa^*)=\varPhi(a)\varPhi(a^*)=aa^*$, i.e.
$a\in\mathcal{M}(\varPhi)$ and $a^*a, \, aa^*$ are fixed points for $\varPhi$, therefore without loss of generality, we can assume that  $a^*a= aa^*=\unit$.

Let us consider $b\neq 0$ such that $\varPhi(b)=e^{i\theta}b$ for the same $\theta$, then both $a$ and $b$ belong to
$\mathcal{M}(\varPhi)$ and, by Lemma \ref{lem:Schw-map},
$ \varPhi(a^*b) = a^*b$,
implying $a^*b = c \unit$ for some complex constant $c$. Then $a^*(b-ca)=0$ and $b=ca$ because
$a^*$ is invertible.

\bigskip Now, let us prove item $(1)$. For any two eigenvectors $a,b \neq 0$ such that $\varPhi(a)=e^{i \theta}a$ and $\varPhi(b)=e^{i\theta^\prime}b$, one has that $ab \neq 0$ and $\varPhi(ab)=e^{i(\theta+\theta^\prime)}ab$, therefore ${\rm sp}(\varPhi) \cap \mathbb{T}$ is a subgroup of $\mathbb{T}$ and it is finite because ${\rm sp}(\varPhi)$ is finite.
\end{proof}

For the interested reader, more details on the peripheral spectrum of Schwarz maps acting on ${\rm C}^*$ and ${\rm W}^*$-algebras can be found in \cite{Bh25,Gr84}.

\section{Support of an evolved density}\label{appB}

In this section we describe the explicit formula for the support of a density $\TT_{*t}(\eta)$ evolved at time
$t$ from an initial density $\eta$ (see \cite{Ha14,HaNa}).

\begin{lemma} \label{lem:support-H-sa}
Let us assume that the operator $H$ is self-adjoint, operators $L_\ell$ are bounded and the series
$\sum_{\ell\in I}L_\ell^*L_\ell$ is strongly convergent. Then the operator $G$
\[
\operatorname{Dom}(G)=\operatorname{Dom}(H), \qquad G=-\mathrm{i}H-\frac{1}{2}\sum_{\ell\in I}L_\ell^*L_\ell
\]
generates a strongly continuous contraction semigroup $(P_t)_{t\geq 0}$ on $\hh$ and
\begin{eqnarray}
& & \kern-12truept \TT_t(x)=P_t^*xP_t \label{eq:feller-aka-dyson}\\
& & \kern-12truept + \sum_{n \geq 1}\sum_{\ell_1,\dots, \ell_n \in I}\int_{0 \leq t_1 \leq \dots \leq t_n \leq t}
\kern-16truept
P_{t-t_n}^*L_{\ell_n}^* \dots L_{\ell_1}^* P_{t_1}^*x P_{t_1} L_{\ell_1} \dots L_{\ell_n}P_{t-t_n} dt_1 \dots dt_n\, ,
\nonumber
\end{eqnarray}
where the series converges in norm. \\
For all $\psi\in\hh$ and $t>0$ the support of the density $\TT_{*t}(|\psi\rangle\langle\psi|)$ is
the closure of
\begin{equation}\label{eq:supp-unbd}
\operatorname{span}_{\mathbb{C}}\left\{\,
P_t\psi,
\,P_{t_1} L_{\ell_1} \dots L_{\ell_n}P_{t-t_n}\psi\,\mid\, n\geq 1, \, 0 \leq t_1 \leq \dots \leq t_n \leq t\right\}\, .
\end{equation}
\end{lemma}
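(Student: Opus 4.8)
The plan is to derive all three assertions from a single Dyson--Phillips (bounded perturbation) expansion of the predual semigroup $\TT_*$, in which the unperturbed dynamics is conjugation by $P_t=\mathrm{e}^{tG}$ and the perturbation is the bounded completely positive ``jump'' map $\mathcal J(\eta)=\sum_{\ell\in I}L_\ell\,\eta\,L_\ell^*$ on $\mathfrak I_1(\hh)$. First I would dispatch the semigroup $P_t$: writing $B=-\tfrac12\sum_{\ell\in I}L_\ell^*L_\ell$, which is bounded (a strongly convergent increasing series of positive operators has bounded limit), self-adjoint and negative, one has $G=-\mathrm{i}H+B$ on $\operatorname{Dom}(G)=\operatorname{Dom}(H)$; since $-\mathrm{i}H$ generates the unitary group $\mathrm{e}^{-\mathrm{i}tH}$ and $B$ is bounded, $G$ generates a $C_0$-semigroup by the bounded perturbation theorem, and $\operatorname{Re}\langle G\psi,\psi\rangle=\langle B\psi,\psi\rangle\le0$ (equivalently $G+G^*=-\sum_\ell L_\ell^*L_\ell\le0$) together with surjectivity of $\lambda-G$ for $\lambda>\|B\|$ makes Lumer--Phillips applicable, so $(P_t)_{t\ge0}$ and $(P_t^*)_{t\ge0}=(\mathrm{e}^{tG^*})_{t\ge0}$ are contraction semigroups.

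Next I would set up the expansion on $\mathfrak I_1(\hh)$. The maps $\Theta_{*t}(\eta)=P_t\eta P_t^*$ form a $C_0$-contraction semigroup with generator $\eta\mapsto G\eta+\eta G^*$, while $\mathcal J$ is bounded with $\|\mathcal J\|=\|\sum_\ell L_\ell^*L_\ell\|=:C$ (its Heisenberg adjoint $x\mapsto\sum_\ell L_\ell^*xL_\ell$ is positive, hence attains its norm at $\unit$). The sum of these two generators is the generator of $\TT_*$ (predual of \eqref{eq:GKSL-H}), which by the bounded perturbation theorem is therefore the unique $C_0$-semigroup with this generator and equals the series obtained by iterating Duhamel's identity
\[
\TT_{*t}=\Theta_{*t}+\int_0^t\Theta_{*,t-s}\circ\mathcal J\circ\TT_{*s}\,\mathrm{d}s
\]
(this semigroup is Markov because the conservativity relation $G^*+G+\sum_\ell L_\ell^*L_\ell=0$ holds by the very definition of $G$). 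Expanding $\Theta_*$ and $\mathcal J$, the $n$-th term is the integral over the simplex $\{0\le t_1\le\dots\le t_n\le t\}$ and the sum over $\ell_1,\dots,\ell_n\in I$ of $\eta$ conjugated by $P_{t_1}L_{\ell_1}P_{t_2-t_1}\cdots L_{\ell_n}P_{t-t_n}$; since each $\Theta_{*,\cdot}$ is a contraction and $\|\mathcal J\|=C$, this term has trace norm at most $\|\eta\|_1\,C^n t^n/n!$, so the series converges in operator norm uniformly on bounded time intervals. This legitimises interchanging the sum with the integrals and the termwise differentiation at $t=0$ identifying the generator; dualising gives \eqref{eq:feller-aka-dyson} for $\TT_t$.

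Finally, I would read off the support formula by applying the expansion to $\eta=|\psi\rangle\langle\psi|$: conjugating a rank-one operator gives a rank-one operator, so $\TT_{*t}(|\psi\rangle\langle\psi|)$ is a trace-norm convergent sum of positive operators $|w\rangle\langle w|$, with $w$ running over $P_t\psi$ and over all $P_{t_1}L_{\ell_1}\cdots L_{\ell_n}P_{t-t_n}\psi$ for $(t_1,\dots,t_n)$ in the simplex. For $u\in\hh$,
\[
\langle u,\TT_{*t}(|\psi\rangle\langle\psi|)u\rangle=|\langle u,P_t\psi\rangle|^2+\sum_{n\ge1}\sum_{\ell_1,\dots,\ell_n}\int_{0\le t_1\le\dots\le t_n\le t}\bigl|\langle u,P_{t_1}L_{\ell_1}\cdots L_{\ell_n}P_{t-t_n}\psi\rangle\bigr|^2\,\mathrm{d}t_1\cdots\mathrm{d}t_n
\]
is a sum of non-negative terms, hence vanishes precisely when $u\perp P_t\psi$ and each integrand vanishes; since $(P_t)_{t\ge0}$ is strongly continuous and the $L_\ell$ bounded, $(t_1,\dots,t_n)\mapsto P_{t_1}L_{\ell_1}\cdots L_{\ell_n}P_{t-t_n}\psi$ is continuous on the closed simplex, so each integrand vanishes a.e.\ iff it vanishes identically. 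Thus $\langle u,\TT_{*t}(|\psi\rangle\langle\psi|)u\rangle=0$ exactly when $u$ is orthogonal to every vector in \eqref{eq:supp-unbd}; as $\TT_{*t}(|\psi\rangle\langle\psi|)\ge0$ this describes its kernel, so its support is the closed linear span of the vectors listed in \eqref{eq:supp-unbd}. I expect the main obstacle to lie in the second step: carefully handling the domain of the unbounded pre-generator $\eta\mapsto G\eta+\eta G^*$, verifying that the perturbed semigroup genuinely coincides with the QMS $\TT_*$ of the statement rather than with some other solution of the Duhamel equation, and bookkeeping the double indexation by $n$ and by $(\ell_1,\dots,\ell_n)$ together with the simplex integration so that the bound $\|\eta\|_1\,C^n t^n/n!$ really controls the whole series; once \eqref{eq:feller-aka-dyson} is established, the support statement is a short Hilbert-space argument.
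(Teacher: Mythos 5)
Your proposal is correct and follows essentially the same route as the paper: establish that $G$ generates a contraction semigroup, expand $\TT$ as a Dyson--Duhamel series with free part given by conjugation with $P_t$ and jump part $\sum_\ell L_\ell\,\cdot\,L_\ell^*$, and then read off the support from positivity of each addend plus continuity of $(t_1,\dots,t_n)\mapsto P_{t_1}L_{\ell_1}P_{t_2-t_1}\cdots L_{\ell_n}P_{t-t_n}\psi$ on the simplex. The only difference is one of packaging: you build the expansion in the predual picture via the bounded perturbation (Dyson--Phillips) theorem with the explicit $\Vert\eta\Vert_1 C^n t^n/n!$ bounds, whereas the paper iterates the matrix-element Duhamel identity $\langle v,\TT_t(x)u\rangle=\langle P_tv,xP_tu\rangle+\sum_\ell\int_0^t\langle L_\ell P_{t-s}v,\TT_s(x)L_\ell P_{t-s}u\rangle\,\mathrm{d}s$ in the Heisenberg picture; both arguments are sound and yield the same series.
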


\begin{proof}
The operator $G$ generates a strongly continuous semigroup by the Bounded Perturbation Theorem (see \cite{EnNa} Ch. III Th. 1.3).
It is a contraction semigroup since, for all $u\in \operatorname{Dom}(G)$, we have
\[
\frac{\mathrm{d}}{\mathrm{d}t}\Vert P_t u \Vert^2 = \left\langle G P_t u, P_t u\right\rangle
+ \left\langle  P_t u, G P_t u\right\rangle = -\sum_{\ell\in I}\left\Vert L_\ell P_t u\right\Vert^2 \leq 0
\]
and so $\Vert P_t u \Vert^2\leq \Vert u \Vert^2$ for all $t\geq 0$. The inequality extends to all $u\in\hh$
by density of $\operatorname{Dom}(G)$.

For all $u,v\in \operatorname{Dom}(G)$ also $P_tv,P_tu$ belong to $\operatorname{Dom}(G)$ for all $t\geq 0$ and, for $0<s<t$
\[
\frac{\mathrm{d}}{\mathrm{d}s}\left\langle P_{t-s}v,\TT_s(x)P_{t-s}u\right\rangle
= \sum_{\ell\in I} \left\langle L_\ell P_{t-s}v,\TT_s(x) L_\ell  P_{t-s}u\right\rangle.
\]
Integration for $s\in [0,t]$ yields
\[
\left\langle v,\TT_t(x)u\right\rangle
= \left\langle P_t v,x P_t u\right\rangle
+ \sum_{\ell\in I}\int_0^t \left\langle L_\ell P_{t-s}v,\TT_s(x) L_\ell  P_{t-s}u\right\rangle\mathrm{d}s.
\]
Iterating this formula for $\TT_s(x)$ we find
\begin{eqnarray*}
\left\langle v,\TT_t(x)u\right\rangle
& = & \left\langle P_t v,x P_t u\right\rangle
+ \sum_{\ell\in I}\int_0^t \left\langle L_\ell P_s L_\ell P_{t-s}v,x P_s L_\ell  P_{t-s}u\right\rangle\mathrm{d}s \\
& + & \sum_{\ell,\ell'\in I}\int_0^t \mathrm{d}s \int_0^s \mathrm{d}r
\left\langle L_{\ell'}P_{s-r}L_\ell P_{t-s}v,\TT_r(x) L_{\ell'}P_{s-r}L_\ell P_{t-s}u\right\rangle\, .
\end{eqnarray*}
Iterating $n$ times and taking the limit as $n$ tends to infinity we find \eqref{eq:feller-aka-dyson}. \\
Finally, \eqref{eq:supp-unbd} follows noting that a vector $w$ is orthogonal to the support
of $\TT_{*t}(|\psi\rangle\langle\psi|)$ if and only if $\langle \psi, \TT_t(|w\rangle\langle w|)\psi\rangle=0$,
namely, by \eqref{eq:feller-aka-dyson},
\begin{equation}\label{eq:iterat-PL}
|\langle w, P_t \psi \rangle|^2=0, \quad |\langle w, P_{t_1} L_{\ell_1} \dots L_{\ell_n}P_{t-t_n} \psi \rangle|^2=0
\end{equation}
(where we used the positivity of each addend in the series and the norm continuity of the function $(t_1,\dots, t_n) \mapsto P_{t-t_n}L_{\ell_n}\cdots P_{t_2-t_1}L_{\ell_1}P_{t_1}\psi$). This completes the proof.
\end{proof}

The support of an evolved state at time $t$ for a uniformly continuous QMS is characterized more explicitly by the following Lemma. Let us recall the notation $\delta_a(b)=[a,b]$ for every $a,b \in {\cal B}(\hh)$.

\begin{lemma} \label{lem:supp}
Let us assume that $\TT$ is uniformly continuous and let us consider a normalized vector $\psi \in \hh$; the support of $\TT_{*t}(\ket{\psi}\bra{\psi})$ for $t >0$ is given by $P_t \mathcal{S}(\psi)$, where
$$
\mathcal{S}(\psi)={\rm span}_{\mathbb{C}}\{\psi, \delta^{m_1}_G(L_{\ell_1}) \cdots
\delta^{m_n}_G(L_{\ell_n}) \psi\mid  n \in \mathbb{N}^*, \, m_1,.., m_n \in \mathbb{N}, \ell_1, .., \ell_n \in I\}.
$$
\end{lemma}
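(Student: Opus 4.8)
The plan is to deduce Lemma~\ref{lem:supp} from Lemma~\ref{lem:support-H-sa}. The point is that for a uniformly continuous QMS the operators $H$ and $L_\ell$ are bounded, so Lemma~\ref{lem:support-H-sa} applies (its hypotheses are satisfied trivially), and it identifies the support of $\TT_{*t}(|\psi\rangle\langle\psi|)$ with the closure of
\[
\operatorname{span}_{\mathbb{C}}\left\{\, P_t\psi,\ P_{t_1}L_{\ell_1}\cdots L_{\ell_n}P_{t-t_n}\psi \,\mid\, n\geq 1,\ 0\leq t_1\leq\cdots\leq t_n\leq t\right\}.
\]
So the whole task reduces to showing that this closed span equals $P_t\mathcal{S}(\psi)$. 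First I would pull the outer $P_t$ out: since $P_t=\mathrm{e}^{tG}$ is a bijection of $\hh$ (it has the bounded inverse $\mathrm{e}^{-tG}$), the closure of a span is carried to the closure of the image span, so it suffices to prove that the closed span of $\{\psi\}\cup\{\mathrm{e}^{-t_n G}L_{\ell_1}\mathrm{e}^{(t_1-t_2)G}\cdots$ — more precisely, writing each $P_{t_1}L_{\ell_1}\cdots L_{\ell_n}P_{t-t_n}\psi = P_t\big(\mathrm{e}^{-(t-t_1)G}L_{\ell_1}\mathrm{e}^{(t_1-t_2)G}L_{\ell_2}\cdots \mathrm{e}^{(t_{n-1}-t_n)G}L_{\ell_n}\big)\psi$, I would observe that the support is $P_t$ applied to the closed span of the vectors $\mathrm{e}^{-s_0 G}L_{\ell_1}\mathrm{e}^{s_1 G}L_{\ell_2}\cdots \mathrm{e}^{s_{n-1}G}L_{\ell_n}\psi$ with $s_0,\dots,s_{n-1}\geq 0$ and $s_0+\cdots+s_{n-1}\leq t$.

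Next I would show that this closed span equals $\mathcal{S}(\psi)$. The inclusion ``$\mathcal{S}(\psi)\subseteq$ closed span'' uses that each $\mathrm{e}^{sG}L_\ell\mathrm{e}^{-sG}=\sum_{m\geq 0}\frac{s^m}{m!}\delta_G^m(L_\ell)$ is a norm-convergent series, so by differentiating in $s$ at $s=0$ (and iterating, taking finite differences / derivatives of the $n$-fold products) one extracts each $\delta_G^{m_1}(L_{\ell_1})\cdots\delta_G^{m_n}(L_{\ell_n})\psi$ as a limit of finite linear combinations of the vectors above; the key technical point is that derivatives at $0$ of a norm-analytic curve lying in a closed subspace stay in that subspace. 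For the reverse inclusion ``closed span $\subseteq\mathcal{S}(\psi)$'' I would conversely expand each $\mathrm{e}^{-s_0 G}L_{\ell_1}\mathrm{e}^{s_1 G}L_{\ell_2}\cdots$ into its (norm-convergent) Taylor series in the $s_i$'s: the generic coefficient is a linear combination of operators $G^{a_0}L_{\ell_1}G^{a_1}L_{\ell_2}\cdots G^{a_{n-1}}L_{\ell_n}$ applied to $\psi$, and a standard rewriting of products of $G$'s and $L$'s in terms of iterated commutators $\delta_G^{m_j}(L_{\ell_j})$ (plus the relation $G\psi\in\mathcal{S}(\psi)$, which one gets by differentiating $P_s\psi$) shows every such vector lies in $\mathcal{S}(\psi)$; closing up, the whole span sits in the closure of $\mathcal{S}(\psi)$, but $\mathcal{S}(\psi)$ is already defined as a closed span, so we are done. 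Alternatively, and more cleanly, I would argue that both $\mathcal{S}(\psi)$ and the support-span are the smallest closed subspace containing $\psi$ and invariant under all $L_\ell$ and under $G$ (equivalently under $\{\mathrm{e}^{sG}\}_{s\in\mathbb{R}}$), and invoke uniqueness of that minimal invariant subspace.

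The main obstacle I expect is bookkeeping rather than conceptual: carefully justifying the interchange of the closure operation with differentiation/Taylor expansion, and setting up the induction on $n$ that translates between the ``time-ordered exponentials'' form in \eqref{eq:supp-unbd} and the ``iterated commutators'' form in the statement. Once one adopts the viewpoint that the relevant object is the minimal closed subspace through $\psi$ invariant under $\{L_\ell\}_{\ell\in I}$ and $G$, both descriptions become manifestly equal to it — the time-ordered products generate it because $\{\mathrm{e}^{sG}\}$ generates the same invariance as $G$ by Stone/series expansion, and the commutator products generate it because $\delta_G^m(L_\ell)\psi$ is built from $G$ and $L_\ell$ acting on $\psi$ while conversely $G^aL_\ell\cdots\psi$ is recovered from the $\delta_G^m(L_\ell)$'s and $G\psi$. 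I would present the argument in that invariant-subspace language to keep it short.
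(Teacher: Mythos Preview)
Your main line is exactly the paper's: invoke Lemma~\ref{lem:support-H-sa} and then use analyticity of $(t_1,\ldots,t_n)\mapsto P_{t_1}L_{\ell_1}P_{t_2-t_1}\cdots L_{\ell_n}P_{t-t_n}\psi$ to pass between the time-ordered span and the iterated-commutator span. The paper phrases this as ``the analytic function $\langle w,\,\cdot\,\rangle$ vanishes on the simplex iff all its partial derivatives at $t_1=\cdots=t_n=t$ vanish'', and those derivatives are precisely $\langle w,\,P_t\,\delta_G^{m_1}(L_{\ell_1})\cdots\delta_G^{m_n}(L_{\ell_n})\psi\rangle$.

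Your alternative ``minimal invariant subspace'' argument, however, is wrong, and the related detour through $G\psi$ is both unjustified and unnecessary. The closure $\overline{\mathcal{S}(\psi)}$ is \emph{not} in general the smallest closed subspace through $\psi$ invariant under $G$ and the $L_\ell$'s: take all $L_\ell=0$, so $G=-\mathrm{i}H$; then $\mathcal{S}(\psi)=\mathbb{C}\psi$, whereas the smallest $G$-invariant closed subspace through $\psi$ is $\overline{\operatorname{span}\{H^k\psi:k\geq 0\}}$. Your claim that $G\psi\in\mathcal{S}(\psi)$ ``by differentiating $P_s\psi$'' fails for the same reason: after factoring out $P_t$ the base vector is simply $\psi$, with no free time parameter to differentiate. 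The fix is to avoid the naive monomial expansion $G^{a_0}L_{\ell_1}G^{a_1}\cdots$ altogether and use the telescoping identity
\[
P_t^{-1}\,P_{t_1}L_{\ell_1}P_{t_2-t_1}\cdots L_{\ell_n}P_{t-t_n}\psi
=\prod_{j=1}^n\bigl(\mathrm{e}^{-v_j G}L_{\ell_j}\,\mathrm{e}^{v_j G}\bigr)\psi,
\qquad v_j:=t-t_j,
\]
after which each factor has the norm-convergent expansion $\sum_{m\geq 0}\tfrac{(-v_j)^m}{m!}\,\delta_G^m(L_{\ell_j})$. This gives both inclusions directly (Taylor-expand for one direction, differentiate at $v=0$ for the other), with no appeal to $G\psi$ and no invariant-subspace detour; it is exactly the mechanism behind the paper's ``derivatives at $t_1=\cdots=t_n=t$''.
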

\begin{proof}
Fix $t >0$; using the expression of the semigroup in Lemma \ref{lem:support-H-sa} and formula \eqref{eq:iterat-PL},
one sees that a vector $w \in \hh$ is in the kernel of $\TT_{*t}(\ket{\psi}\bra{\psi})$
if and only if
\[
|\langle w, P_t \psi \rangle|^2=0, \quad |\langle w,  P_{t_1} L_{\ell_1} \dots L_{\ell_n}P_{t-t_n}  \psi \rangle|^2=0
\]
for every $0 \leq t_1 \leq t_2 \leq \dots \leq t_n \leq t$.

Notice that, in fact, the function $(t_1,\dots, t_n) \mapsto P_{t_1} L_{\ell_1} \dots L_{\ell_n}P_{t-t_n}$ is analytic,
therefore it is identically equal to zero if and only if all its derivative at $t_1=\dots =t_n=t$ are equal to $0$ and
we are $w$ belongs to the orthogonal of $P_t\mathcal{S}(\psi)$. Conversely, if $w$ belongs to the orthogonal of $P_t\mathcal{S}(\psi)$
then all derivatives of the analytic function
$(t_1,\dots, t_n) \mapsto \langle w,  P_{t_1} L_{\ell_1} \dots L_{\ell_n}P_{t-t_n}  \psi \rangle$
at $t_1=\dots =t_n=t$ are equal to $0$ are zero and so it is identically equal to zero and $w$ is orthogonal to $P_t\mathcal{S}(\psi)$.
\end{proof}

%

%
%

\vfill\eject

\end{document}